\documentclass{article}
\usepackage{mathtools}
\usepackage{amsmath,amsthm,amssymb}
\usepackage[dvipsnames,table]{xcolor}
\usepackage[noadjust]{cite}
\usepackage{enumitem}
\usepackage{xurl}
\usepackage[margin=2.5cm]{geometry}
\usepackage{graphicx}
\usepackage[T1]{fontenc}
\usepackage{authblk}
\usepackage[english]{babel}

\usepackage{hyperref}
\hypersetup{
  colorlinks   = true,
  urlcolor     = blue,
  linkcolor    = Purple,
  citecolor    = red,
  breaklinks   = true
}
\usepackage[capitalize,noabbrev]{cleveref}

\theoremstyle{definition}
\newtheorem{theorem}{Theorem}[section]
\newtheorem{proposition}[theorem]{Proposition}
\newtheorem{corollary}[theorem]{Corollary}
\newtheorem{lemma}[theorem]{Lemma}
\newtheorem{definition}[theorem]{Definition}
\newtheorem{notation}[theorem]{Notation}
\newtheorem{remark}[theorem]{Remark}
\newtheorem{conjecture}[theorem]{Conjecture}
\newtheorem*{theorem*}{Theorem}
\usepackage{thmtools}
\usepackage{thm-restate}

\newcommand{\F}{\mathbb F}

\newcommand{\Fq}{\F_q}

\newcommand{\D}{\mathcal D}

\newcommand{\C}{\mathcal C}

\DeclareMathOperator{\rk}{rk}

\DeclareMathOperator{\supp}{supp}
\DeclareMathOperator{\wt}{wt}

\DeclareMathOperator{\rs}{rowsp}
\DeclareMathOperator{\cs}{colsp}
\DeclareMathOperator{\Gr}{Gr}
\DeclareMathOperator{\vol}{vol}

\newcommand{\ceil}[1]{\left\lceil#1\right\rceil}

\title{A proof of the generalized packing--covering conjecture}

\author[1]{Gianira N. Alfarano}
\author[2]{Giuseppe Marino}
\author[2]{Alessandro Neri}
\author[2]{Rocco Trombetti}

\affil[1]{Universit\'e de Rennes, IRMAR}
\affil[2]{Department of Mathematics and Applications ``R. Caccioppoli'', University of Naples Federico II}

\affil[ ]{{\small \texttt{gianira-nicoletta.alfarano@univ-rennes.fr},
\texttt{giuseppe.marino@unina.it},}}\affil[ ]{{\small
 \texttt{alessandro.neri@unina.it},
\texttt{rocco.trombetti@unina.it}}}
\date{}

\begin{document}

\maketitle
\begin{abstract}
The generalized packing--covering conjecture of Elimelech, Firer and
Schwartz asserts that, for every linear code $\C$ and every admissible
order $t$, the $t$-th generalized Hamming weight $d_t(\C)$ and the $t$-th generalized covering radius $R_t(\C)$ satisfy $d_t(\C)\le 2R_t(\C)+2$. We give a computer-assisted proof of the conjecture for every linear code over every finite field and every admissible order. Combining a parity-check reformulation of the conjecture, bounds on the length of putative counterexamples, and successive puncturing arguments, we settle all orders $t\ge 32$ and reduce the remaining orders to finitely many parameter tuples, which we exclude by an exact computer verification.
\end{abstract}

\medskip
\noindent\textbf{Keywords.} Linear codes, generalized covering radii; generalized Hamming weights; generalized packing--covering conjecture.

\smallskip
\noindent\textbf{MSC2020.} 94B05, 94B65, 05B40.

\section{Introduction}
 
The minimum distance and the covering radius are two of the most basic parameters of an error-correcting code. For a linear code
$\C\subseteq\Fq^n$ with minimum distance $d$, the Hamming balls of radius $\lfloor (d-1)/2\rfloor$ centred at the codewords are pairwise disjoint, while the covering radius $R(\C)$ is the smallest radius for which the balls centred at the codewords cover the whole space. A classical elementary argument shows that 
$$\left\lfloor\frac{d-1}{2}\right\rfloor\le R(\C),$$
that is, the packing radius of a code never exceeds its covering radius;
see, e.g.,~\cite{HuffmanPless,CHLL}. We refer to~\cite{CKMS,CHLL} for
surveys on the covering radius. Both parameters have natural higher-order
generalizations.
 
Generalized Hamming weights were introduced by Wei~\cite{Wei}, motivated by the wire-tap channel of type~II \cite{OW}, although related notions had already appeared in~\cite{HKM,Klove}. The \emph{$t$-th generalized Hamming weight} $d_t(\C)$ is the smallest size of the support of a $t$-dimensional subcode of $\C$. Wei proved that these weights are strictly increasing in $t$ and
that the weight hierarchies of a code and of its dual determine each other.
Independently, Forney showed that they can be used to describe the trellis complexity of linear codes~\cite{Forney}. Since then, generalized Hamming weights
have been studied from several points of view: geometrically, in terms of the projective systems associated with a code~\cite{TV}; combinatorially, through matroids and their Betti numbers~\cite{Britz,JV}; and for many
explicit families, including cyclic codes~\cite{FTW} and Reed--Muller
codes~\cite{HP}, as well as, more recently, affine Cartesian
codes~\cite{BD}, weighted projective Reed--Muller codes~\cite{nardi2026structure} and codes arising from complete
intersections~\cite{moreno2026generalized}. Computing generalized Hamming weights is hard in general, since already the computation of the minimum distance is NP-hard~\cite{Vardy}; algorithms and an implementation in SageMath for computing the whole weight hierarchy of a code are given in~\cite{GHWs}. Classical bounds such as the Griesmer bound~\cite{Griesmer} have also been extended to generalized weights~\cite{Deng}.

Generalized covering radii were introduced much more recently by Elimelech, Firer and Schwartz~\cite{EFS}. The quantity $R_t(\C)$ is the smallest integer $r$ such that, for any $t$ vectors of $\Fq^n$, there is a set of at most $r$ coordinates outside which each vector agrees with some codeword of $\C$.
Equivalently, $R_t(\C)$ is the ordinary covering radius of
the extended code $\C\otimes\F_{q^t}$ over $\F_{q^t}$~\cite{EFS}. These invariants have been studied for Reed--Muller codes~\cite{EWS} and BCH codes~\cite{ozbudak2026third, xiong2026asymptotic}. Moreover, a geometric description in terms of saturating systems of subspaces was given in~\cite{AMNT}.
 
In analogy with the classical case, Elimelech, Firer and Schwartz
defined the \emph{$t$-th generalized packing radius}
$\delta_t(\C)=\lfloor(d_t(\C)-1)/2\rfloor$ and conjectured that
$\delta_t(\C)\le R_t(\C)$ for every admissible $t$~\cite{EFS}; see
Conjecture~\ref{conj:packing-covering}. It is worth stressing that the
conjecture does not follow from the classical inequality applied to
$\C\otimes\F_{q^t}$: this code has covering radius $R_t(\C)$, but its
minimum distance is still $d_1(\C)$, not $d_t(\C)$. In terms of a
parity-check matrix, $R_t(\C)$ describes how many columns are needed to span every $t$-dimensional subspace of the syndrome space, while $d_t(\C)$ describes how few columns are needed to have nullity $t$.
 
Several cases of the conjecture have been established. The case $t=1$ recovers
the classical inequality for packing and covering radii. The case $t=2$, as
well as a number of parameter ranges for arbitrary $t$ (including $R_t(\C)=t$, codes of small redundancy relative to $t$, and codes of rate at most $3/5$), were settled by Yu and Schwartz~\cite{YS}. More recently, Essayag and Zabokritskiy~\cite{EZ} proved the conjecture for all codes of
redundancy at most 14, and established the stronger bound $d_t(\C)\le 2R_t(\C)+1$ whenever $q\ge R_t(\C)$, together with a few small cases of $(t,R_t(\C))$.

In this paper we give a computer-assisted proof of the generalized
packing--covering conjecture in full.
\begin{restatable}{thm}{MioTeorema}
\label{thm:main}
Let $\C$ be an $[n,k]_q$ linear code and let
$1\le t\le\min\{k,n-k\}$. Then
$$
d_t(\C)\le 2R_t(\C)+2.
$$
\end{restatable}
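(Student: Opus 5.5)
We will work throughout with a parity-check matrix $H\in\Fq^{r\times n}$ of $\C$, where $r=n-k$. In this language $R_t(\C)$ is the smallest $\rho$ such that every $t$-dimensional subspace of $\Fq^r$ lies in the span of some $\rho$ columns of $H$. Likewise $d_t(\C)$ is the smallest $m$ such that some $m$ columns of $H$ have nullity at least $t$, i.e.\ span a space of dimension at most $m-t$.

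\textbf{Step 1 (reformulation and length bound).} Suppose $\C$ is a counterexample, so $d_t\ge 2R_t+3$. Write $R=R_t$. Then any $2R+2$ columns of $H$ have nullity at most $t-1$, so they span a space of dimension at least $2R+3-t$. Pick a $t$-dimensional subspace $U$ and $R$ columns spanning it. Extending these columns by further columns (possible as long as the nullity condition holds) gives a sharp tension between ``few columns suffice to cover every $t$-space'' and ``every small set of columns is nearly independent''. A counting argument will produce an upper bound on $n$ in terms of $R$, $t$, $q$. The count compares the number of $t$-subspaces of $\Fq^r$ with the number of subspaces of spans of $R$-subsets of columns, using $\binom{n}{R}\binom{R}{t}_q\ge\binom{r}{t}_q$. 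A lower bound on $n$ comes from Wei-type monotonicity, $d_t\le d_1+\ldots$, together with the Singleton-type bound $d_t\le n-k+t$, which forces $r\ge 2R+3-t$. Combining the two should force $r$, and hence $n$, into a finite window for each fixed $(t,R,q)$, and show that for $q$ large relative to $R$ the window is empty. The latter is consistent with the Essayag--Zabokritskiy result for $q\ge R_t$, which we may quote.

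\textbf{Step 2 (puncturing/induction to kill large $t$).} To reduce $t$, we will puncture $\C$ on a coordinate inside the support of a minimal $t$-dimensional subcode, or shorten it. This lowers $t$ by one while controlling how $R_t$ and $d_t$ move: puncturing decreases $R$ by at most one, and the $d_t$ change is controlled via Wei's strict monotonicity $d_{t-1}<d_t$. A minimal counterexample then must satisfy rigid equalities, namely $d_t=2R+3$ exactly, together with $R\ge t$ (the case $R=t$ is known from Yu--Schwartz). Combining these equalities with the length window of Step 1 and the rate-$\le 3/5$ and redundancy-$\le 14$ results, we aim to show that no tuple survives once $t\ge 32$. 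The mechanism is that $\binom{r}{t}_q$ grows like $q^{t(r-t)}$ while $\binom{n}{R}\binom{R}{t}_q$ grows only like $q^{t(R-t)}$ times a polynomial factor, so the required gap $r-R\gtrsim R+3-t$ becomes contradictory for large $t$.

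\textbf{Step 3 (finite verification).} For $t<32$, Steps 1--2 leave finitely many tuples $(q,t,R,n,k)$. Each tuple would be excluded by an exact search over parity-check matrices up to equivalence. The search builds $H$ column by column, pruning whenever some $\le 2R+2$ columns acquire nullity $t$, and finally checks whether the covering property for $t$-spaces with $R$ columns can hold.

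\textbf{Main obstacle.} I expect Step 2 to be the hardest: it needs sharp enough asymptotic bounds that the finite residue is computationally feasible. I would first try to get the cleanest possible puncturing inequality, ideally $R_{t-1}(\C')\le R_t(\C)-1$ for a suitably chosen punctured code $\C'$, so that the induction preserves the counterexample gap.
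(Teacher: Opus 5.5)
\textbf{There is a genuine gap.} Your parity-check reformulation, the use of Essayag--Zabokritskiy for $q\ge R_t$, and the plan to end with a finite computation all match the paper's outline. What is missing is the engine that makes the parameter set finite.

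\textbf{Step 1 bounds nothing from above.} The count $\binom{n}{R}\binom{R}{t}_q\ge\binom{r}{t}_q$ (with $r=n-k$, $R=R_t$) is a covering-type inequality, essentially $\binom nR\ge q^{t(r-R)}$. It gives a \emph{lower} bound $n\gtrsim(R/\mathrm e)\,q^{t(r-R)/R}$. Singleton gives only the lower bound $r\ge2R+3-t$. For fixed $(t,R,q)$ both hold for every large $r$ once $n$ is large, so no finite window appears. Likewise, in your Step~2 ``mechanism'' the factor $\binom nR$ absorbs any gap $r-R$, so nothing becomes contradictory for large $t$. Two ingredients are needed.
\begin{enumerate}
\item \emph{An upper bound on $n$ from the counterexample hypothesis.} Any $2R+2$ columns have nullity at most $t-1$, so every subspace $W$ with $\dim W\le m:=2R-t+2$ contains at most $\dim W+t-1$ columns of $H$. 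Averaging over the $m$-spaces through the span of $m-b$ independent columns (with $q^b\ge t$) gives roughly $n\le m+2t\,q^{r-m}$. Even so, the exponents $t(r/R-1)$ and $r-2R+t-2$ differ by $(R-t)(2-r/R)+2$. Comparing length bounds therefore only excludes $r\lesssim2R$; for $r>2R$ the bounds are compatible, and no counting of covering sets gives a contradiction.
\item \emph{A constructive way to make $d_t$ small for long codes.} This is the heart of the paper. If $\vol_q(N,\ell)>q^{h+s-1}$ for a code of length $N$ and redundancy $h$, some syndrome is hit by more than $q^{s-1}$ vectors of the radius-$\ell$ ball. This gives $s+1$ affinely independent vectors in one coset, hence $d_s\le(s+1)\ell$ (Yu--Schwartz, Lemma~13). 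The paper iterates this, puncturing the support found at each step. Radii and nullity increments are chosen so that the total support stays at most $2R+2$, and each volume condition is certified via the covering lower bound on $n$. This is what bounds the redundancy for fixed $(t,R)$, disposes of all $t\ge32$, and bounds $R$ for $3\le t\le31$. Nothing in your plan plays this role.
\end{enumerate}

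\textbf{Step 2 does not supply it either.} Puncturing at a coordinate $i$ with $h_i\ne0$ does give $R_{t-1}(\C')\le R_t(\C)-1$. Lift a $(t-1)$-space of $\Fq^r/\langle h_i\rangle$ to a $t$-space containing $h_i$, cover it by $R$ columns, and one of them becomes redundant modulo $h_i$. On the weight side, however, you only get
$d_{t-1}(\C')=\min\{|K|-1: i\in K,\ \eta(K)\ge t-1\}\ge d_{t-1}(\C)-1$, where $\eta(K)$ is the nullity of the columns in $K$. Since $d_{t-1}(\C)$ can be far below $d_t(\C)-1$, this does not preserve the gap. With only $R_{t-1}(\C')\le R-1$ available, you would need a coordinate lying in no set of at most $2R+1$ columns of nullity $t-1$. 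The hypothesis $d_t\ge2R+3$ says nothing about such sets. Indeed, if this reduction worked, induction down to $t=1$ would prove the theorem with no computation at all. Your claim that a minimal counterexample has $d_t=2R+3$ is also unsupported.

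\textbf{Step 3 is not feasible.} The tuples that survive the paper's reductions have redundancy at least $51$ and length at least $128$. Enumerating parity-check matrices and testing the covering of every $t$-subspace of $\Fq^r$ is out of reach. The paper never enumerates codes. For each remaining tuple it either shows that a proved lower bound on $n$ exceeds the sparsity upper bound, or verifies the volume hypotheses of the successive-puncturing theorem at a fixed test length $N$. In the latter case it transfers $d_t\le2R+2$ to all $n\ge N$ by shortening.
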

 
\subsection*{Outline of the proof}
The proof is by contradiction. Suppose that $\C$ is an $[n,n-\rho]_q$ code such that $d_t(\C)\ge 2r+3$, where $r=R_t(\C)$. For
such a code we prove two bounds on the length~$n$.

\begin{itemize}
\item \emph{A lower bound.} By the ball-covering bound \eqref{eq:cover},
  the Hamming ball of radius~$r$ in $\F_{q^t}^n$ has at least
  $q^{t\rho}$ elements. This forces $n$ to be large: roughly,
  $n$ grows like $q^{t(\rho-r)/r}$, by \eqref{eq:coverrough}.
\item \emph{An upper bound.} Since $d_t(\C)$ is large, every subspace of
  $\F_q^\rho$ of small dimension contains only few columns of a
  parity-check matrix of $\C$; see \eqref{eq:sparsity}. Counting columns
  subspace by subspace gives $n\le M(q,\rho,t,r)$, where $M$ is an
  explicit function; see Section~\ref{sec:putative_code}.
\end{itemize}

If the lower bound exceeds $M(q,\rho,t,r)$, no such code exists. This
already excludes many parameters, but not all of them. In the remaining
cases we use the lower bound in a different way. If $n$ is large, the
Hamming ball of a suitable radius $\ell$ in $\F_q^n$ has more than
$q^{\rho}$ elements, that is, more elements than there are syndromes.
Hence, many vectors of the ball have the same syndrome, and their
differences are codewords of weight at most $2\ell$. More precisely,
Lemma~\ref{lem:hamming-ball} gives an $s$-dimensional subcode of $\C$
supported on at most $(s+1)\ell$ coordinates. One such step rarely
gives a $t$-dimensional subcode on few enough coordinates. We therefore
puncture the support found, which lowers the redundancy, and repeat the
argument on the shorter code; Theorem~\ref{thm:successive}. If the
steps together produce a $t$-dimensional subcode supported on at most
$2r+2$ coordinates, then $d_t(\C)\le 2r+2$, a contradiction.
Section~\ref{sec:reduction} uses these two tools to exclude all
parameters $(q,\rho,t,r)$, one region at a time.
\begin{enumerate}[label=\textbf{Step \arabic*.},leftmargin=*,align=left]
\item \emph{Small redundancy and small gaps.} We first treat all codes
  with $\rho\le50$ by an exact computation; see Theorem~\ref{thm:finite}.
  For fixed $t$ and $r$, a single Hamming ball settles all sufficiently
  large redundancies (Theorem~\ref{thm:tail}), and a similar argument
  settles all large orders when the gap $r-t$ is fixed; see Theorem~\ref{thm:diagonal}. Together these results leave finitely
  many cases for each gap, and we check the gaps $r-t\le5$ by computer; see Theorem~\ref{thm:fivegaps}.
\item \emph{Orders $t\ge32$.} For large orders, we show that the
  redundancy of a counterexample exceeds twice its covering radius. The
  code is then long enough that the successive puncturing succeeds with
  an explicit choice of radii, and we give such a choice separately for
  $r\le2t$, $2t\le r\le3t$ and $r\ge3t$. The only computer check needed
  here concerns finitely many pairs $(t,r)$ with $6\le r-t\le9$; see Theorem~\ref{thm:highorder}.
\item \emph{Orders $3\le t\le31$.} For small orders the constants of
  Step~2 are too weak, and we prove sharper estimates for Hamming balls
  after puncturing; see Section~\ref{sec:low}. They give, for each such order, an explicit bound on the covering radius of a counterexample; see Theorems~\ref{thm:tails}, \ref{thm:ternary} and~\ref{thm:binarytail}.
  The order $t=3$ is the tightest case and needs separate arguments.
\item \emph{The remaining finite set.} After Steps~1--3, the parameters
  $q$, $\rho$, $t$ and $r$ of a counterexample lie in an explicit finite
  set. For each tuple in it, we either show that the two bounds on $n$ are
  incompatible, or check the hypotheses of Theorem~\ref{thm:successive}
  at a fixed test length; Lemma~\ref{lem:shorten} then covers every larger length; see Theorems~\ref{thm:extensions}
  and~\ref{thm:completion}. The proof of Theorem~\ref{thm:main}
  collects all the steps.
\end{enumerate}

Table~\ref{tab:roadmap} shows where each region of parameters is
treated. The computations are described in
Appendix~\ref{app:computations}; they only check explicit integer or
rational inequalities and never enumerate codes.

\begin{table}[ht]
\centering\small
\begin{tabular}{ll}
\hline
Parameters & Treated in\\
\hline
$t\le2$, $r=t$, $q\ge r$, and other known cases & Proposition~\ref{prop:inputs}\\
$\rho\le50$ & Theorem~\ref{thm:finite}\\
$\rho\ge D(t,r)$ & Theorem~\ref{thm:tail}\\
$r-t\le5$ & Theorem~\ref{thm:fivegaps}\\
$t\ge32$ & Theorem~\ref{thm:highorder}\\
$3\le t\le31$, $r$ large & Theorems~\ref{thm:tails}, \ref{thm:ternary}, \ref{thm:binarytail}\\
$r\le2t$ or $r-t\le15$ & Theorem~\ref{thm:extensions}\\
all remaining cases & Theorem~\ref{thm:completion}\\
\hline
\end{tabular}
\caption{Where each range of parameters is settled.}
\label{tab:roadmap}
\end{table}

\section{Preliminaries}\label{sec:prelim}

In this section we provide the necessary background material for the rest of the paper. We first fix the notation and recall the notions of \emph{generalized Hamming weights} and \emph{generalized covering radii}, together with the \textbf{generalized packing--covering conjecture}.

\begin{notation}
Let $q$ be a prime power, and let $\Fq$ denote the finite field with $q$ elements. For a positive integer $n$, write $[n]:=\{1,\ldots,n\}$.
For a matrix $A\in\Fq^{h\times n}$ and a subset $I\subseteq[n]$, we denote by $A_I\in\Fq^{h\times |I|}$ the submatrix of $A$ obtained by restricting to the columns indexed by $I$. We write $\rs_{\Fq}(A)$ and $\cs_{\Fq}(A)$ for the $\Fq$-span of the rows and, respectively, of the columns of $A$. In the paper, $\mathrm{e}$ denotes the Euler's constant. 
\end{notation}

\subsection{Linear codes, generalized Hamming weights and covering radii}

In this subsection we recall the basic notions on linear codes, generalized Hamming weights and generalized covering radii that will be used throughout the paper. For further background on linear codes we refer to~\cite{HuffmanPless}, while for generalized Hamming weights and generalized covering radii we refer to \cite{Wei,EFS}; see also \cite{TV,AMNT}.

Let $v\in\Fq^n$. The \textbf{support} of $v$ is
$\supp(v):=\{i\in[n]:v_i\ne0\}$ and its \textbf{Hamming weight} is
$\wt(v):=|\supp(v)|$. The \textbf{Hamming distance} between two vectors
$u,v\in\Fq^n$ is defined as $d_{\rm H}(u,v):=\wt(u-v)$.

An $[n,k]_q$ \textbf{linear code} is a $k$-dimensional $\Fq$-subspace
$\C\subseteq\Fq^n$. Its elements are called \textbf{codewords}, and its
\textbf{redundancy} is $\rho:=n-k$. The support of $\C$ is
$$
\supp(\C):=\bigcup_{c\in\C}\supp(c).
$$
The \textbf{minimum distance} of $\C$ is
$d(\C):=\min\{\wt(c):c\in\C,\ c\ne0\}$.

A matrix $G\in\Fq^{k\times n}$ is called a \textbf{generator matrix} of
$\C$ if $\rs_{\Fq}(G)=\C$. A full-row-rank matrix
$H\in\Fq^{\rho\times n}$ is called a \textbf{parity-check matrix} of $\C$
if $\C=\ker(H)$, that is,
$$
\C=\{c\in\Fq^n:Hc^\top=0\}.
$$
Equivalently, $\rs_{\Fq}(H)=\C^\perp$, where $\C^\perp$ denotes the
\textbf{dual code} of $\C$, with respect to the standard inner product.

For a fixed parity-check matrix $H$, the map $v\mapsto Hv^\top$ from
$\Fq^n$ to $\Fq^\rho$ is called the \textbf{syndrome map}, and the vectors
of $\Fq^\rho$ are called \textbf{syndromes}. Since $H$ has full row-rank,
every vector of $\Fq^\rho$ occurs as a syndrome.

We do not assume that the codes considered in this paper are
nondegenerate: zero coordinates are allowed throughout.

Generalized Hamming weights were introduced by Wei~\cite{Wei}; see also~\cite{TV} for their geometric interpretation.

\begin{definition}\label{def:ghw}
Let $\C$ be an $[n,k]_q$ linear code and let $t\in[k]$. The $t$-th
\textbf{generalized Hamming weight} of $\C$ is defined as
$$
d_t(\C):=\min\{|\supp(\D)|:\D\subseteq\C,\ \dim_{\Fq}(\D)=t\}.
$$
\end{definition}

In particular, $d_1(\C)=d(\C)$. We will repeatedly use the well-known \emph{generalized
Singleton bound} \cite[Theorem~1]{Wei}, which states that for every $t\in [k]$, $$d_t(\C)\le\rho+t.$$

We next recall generalized covering radii, introduced by Elimelech, Firer,
and Schwartz in \cite{EFS}. We use the formulation in terms of a parity-check
matrix, which is the one that will be most convenient throughout the paper. A geometric treatment of generalized covering radii was given in \cite{AMNT}.

\begin{definition}\label{def:gcr}
Let $\C$ be an $[n,k]_q$ linear code with parity-check matrix
$H\in\Fq^{\rho\times n}$, and let $t\in[\rho]$. The $t$-th
\textbf{generalized covering radius} of $\C$ is defined as
$$
R_t(\C):=
\max_{\substack{S\subseteq\Fq^\rho\\ |S|=t}}
\min\{|I|:I\subseteq[n],\ S\subseteq\cs_{\Fq}(H_I)\}.
$$
\end{definition}

The definition does not depend on the choice of the parity-check matrix.
When $t=1$, it coincides with the classical covering radius of $\C$. Moreover, it is easy to see that $$0\le R_1(\C)\leq R_2(\C)\le \cdots \le R_{\rho}(\C)=\rho.$$

Several equivalent definitions of generalized covering radii were established
in \cite{EFS}; see also \cite[Proposition~2.4]{AMNT}. In particular,
$R_t(\C)$ is the smallest integer $r$ such that, for every
$v_1,\ldots,v_t\in\Fq^n$, there exist $c_1,\ldots,c_t\in\C$ and a set
$I\subseteq[n]$ with $|I|\le r$ such that
$\supp(v_j-c_j)\subseteq I$ for every $j\in[t]$.

Equivalently, $R_t(\C)$ is the classical covering radius of the code
$\C\otimes\F_{q^t}$ over $\F_{q^t}$; see \cite[Lemma~7]{EFS}.

The connection with generalized Hamming weights led Elimelech, Firer, and
Schwartz to introduce the $t$-th \textbf{generalized packing radius}
$$\delta_t(\C):=\lfloor(d_t(\C)-1)/2\rfloor$$
and to formulate the following
conjecture; see \cite[Section~VI]{EFS}.

\begin{conjecture}[Generalized packing--covering conjecture]
\label{conj:packing-covering}
Let $\C$ be an $[n,k]_q$ linear code and let
$t\in [\min\{k,n-k\}]$. Then
$$\left\lfloor\frac{d_t(\C)-1}{2}\right\rfloor\le R_t(\C).$$
Equivalently,
$$d_t(\C)\le2R_t(\C)+2.$$
\end{conjecture}

For $t=1$, Conjecture~\ref{conj:packing-covering} is the classical inequality
between the packing and covering radii. The purpose of this paper is to prove
the conjecture for every admissible order.

Throughout the paper we assume $ t\in[\min\{k,\rho\}]$ and write
$r:=R_t(\C)$.
We will repeatedly use that $t\le r\le\rho$. Indeed, $t$ linearly independent syndromes cannot be
contained in the span of fewer than $t$ columns of a parity-check matrix, while the full syndrome space is generated by $\rho$ independent columns.

\subsection{Known cases of the conjecture and the covering bound}

Several cases of Conjecture~\ref{conj:packing-covering} are already known. We collect here the results that will be used later.

\begin{proposition}\label{prop:inputs}
Let $\C$ be an $[n,k]_q$ linear code with redundancy $\rho=n-k$,
let $t\in[\min\{k,\rho\}]$, and set $r:=R_t(\C)$.
Then $d_t(\C)\le2r+2$ in each of the following cases:
\begin{enumerate}[label=(\arabic*)]
\item $t\le2$; $r=t$; $t\ge\rho-4$; $k\le5t-2$; $3r\ge2\rho$; or $k/n\le3/5$;
\item $q\ge r$; or $(t,r)\in\{(3,4),(3,5),(4,5),(4,6)\}$.
\end{enumerate}
If $q\ge r$, the stronger bound $d_t(\C)\le2r+1$ holds.
\end{proposition}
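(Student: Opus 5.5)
This proposition collects previously established cases, so the proof mainly assembles citations. I would reduce each item to an earlier result, translated into the notation fixed above. Throughout I would use that the definitions of $d_t$ and $R_t$ agree with those of \cite{EFS}, via the parity-check formulation of Definition~\ref{def:gcr} and the equivalent descriptions recalled after it. That is what lets the cited statements, which are written for $\delta_t(\C)\le R_t(\C)$, be quoted directly as $d_t(\C)\le 2r+2$.

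\textbf{Item (1).} The case $t=1$ is the classical inequality $\lfloor (d-1)/2\rfloor\le R(\C)$. I would still include the one-line proof for completeness. If $d\ge 2R+3$, take a vector $v$ of weight $R+1$ supported inside the support of a minimum-weight codeword $c$ (this needs $d\ge R+1$). Its distance to $0$ and to $c$ is at least $R+1$, and its distance to any other codeword is at least $d-(R+1)\ge R+1$. Hence $v$ is not covered, a contradiction.

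The case $t=2$, the case $r=t$, the small-redundancy conditions $t\ge\rho-4$ and $k\le 5t-2$, the condition $3r\ge 2\rho$, and the rate bound $k/n\le 3/5$ are all results of Yu and Schwartz~\cite{YS}, so I would cite them individually. Some of them follow directly from the generalized Singleton bound. For instance, if $3r\ge 2\rho$ then $d_t\le\rho+t\le \rho+r\le 2r+2$ whenever $\rho\le r+2$; the remaining range needs the argument of~\cite{YS}. I would check that each hypothesis there matches the phrasing here, in particular that $k$ and $\rho$ play the roles stated.

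\textbf{Item (2).} The bound $d_t(\C)\le 2r+1$ for $q\ge r$, and the four exceptional pairs $(t,r)$, are due to Essayag and Zabokritskiy~\cite{EZ}, so I would cite them. The last sentence of the proposition is exactly their stronger statement.

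The only real obstacle is bookkeeping: the cited papers state their results under slightly different conventions, namely the admissible range of $t$, the possible presence of degenerate coordinates, and whether the statement is given as $\delta_t\le R_t$ or as $d_t\le 2R_t+2$. I would verify that each cited result remains valid when zero columns are allowed in $H$. A zero column does not change $R_t$ and can only affect $d_t$ by being included in supports, and puncturing it reduces to the nondegenerate case with both $d_t$ and $R_t$ unchanged or comparable. After that check, the items are direct citations.
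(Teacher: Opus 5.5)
Your proposal is correct and takes the same approach as the paper. The paper's proof is purely by citation: $t=1$ is classical, the other cases in (1) are from \cite{YS} (Theorem~5 for $t=2$, and Lemmas~6, 7, 9, 11 and Theorem~10 for the rest), and item (2) together with $d_t(\C)\le 2r+1$ for $q\ge r$ comes from \cite[Theorem~1.2 and Corollary~3.2]{EZ}; your self-contained argument for $t=1$ is also fine. The one loose spot is your degeneracy remark, which conflates a zero column of $H$ (a weight-one codeword) with a zero coordinate of $\C$ (a column outside the span of the others), and whose reduction lowers $k$ by one and so would not by itself cover $t=k$ --- but the paper makes no such check and simply quotes the cited results.
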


\begin{proof}
The case $t=1$ is classical. The case $t=2$ has been proved in \cite[Theorem~5]{YS}. The cases $r=t$, $t\ge\rho-4$, $k\le5t-2$, $3r\ge2\rho$, and $k/n\le3/5$ are given, respectively, in \cite[Lemmas~6, 7, 9, 11 and Theorem~10]{YS}. The claims in the second item follow from \cite[Theorem~1.2 and Corollary~3.2]{EZ}.
\end{proof}

In particular, if $t\ge3$ and $d_t(\C)\ge2R_t(\C)+3$, then
\begin{equation}\label{eq:rate}
k\ge5t-1\qquad \textnormal{and} \qquad n>\frac{5\rho}{2}.
\end{equation}

We also recall the \emph{ball-covering bound} for generalized covering radii. Let $Q\ge2$ and let $N$ be a positive integer. For $x\in\F_Q^N$ and an integer $\ell\ge0$, the \textbf{Hamming ball} of radius $\ell$ and center $x$ is
$$
B_\ell(x):=\{y\in\F_Q^N:d_{\rm H}(x,y)\le \ell\}.
$$
Its cardinality does not depend on the center $x$ and is given by
$$
\vol_Q(N,\ell):=|B_\ell(x)|
=\sum_{i=0}^{\min\{N,\ell\}}\binom Ni(Q-1)^i.
$$
We refer to $\vol_Q(N,\ell)$ as the \textbf{volume} of a Hamming ball of radius $\ell$ in $\F_Q^N$.

The \emph{generalized ball-covering bound} from \cite{EFS} states that
\begin{equation}\label{eq:cover}
\vol_{q^t}(n,r)\ge q^{t\rho}.
\end{equation}
See also \cite[Equation~(4)]{YS}. Indeed, a $t$-tuple of vectors in $\Fq^n$ can be viewed columnwise as a word of length $n$ over an alphabet of size $q^t$. Every $t$-tuple of syndromes has representatives with common support of size at most $r$; viewed as words in $\F_{q^t}^n$, these representatives lie in the Hamming ball of radius $r$ centered at zero. Distinct $t$-tuples of syndromes have distinct representatives, and there are $q^{t\rho}$ possible $t$-tuples of syndromes.

As shown in \cite[Theorem 4.2]{AMNT}, since $\vol_Q(n,r)\le\binom nr Q^r$, \eqref{eq:cover} implies that
\begin{equation}\label{eq:coverrough}
\binom nr\ge q^{t(\rho-r)}\qquad \textnormal{and}\qquad
n>\frac r{\mathrm e}q^{t(\rho-r)/r}.
\end{equation}
Here $\mathrm e$ is Euler's number. For the second
inequality, we used $r!>(r/\mathrm e)^r$ and
$\binom nr\le n^r/r!$.

The estimates in \eqref{eq:cover} and \eqref{eq:coverrough} will be used to provide lower bounds on the length of a putative code violating the conjecture.

\section{Putative codes violating the conjecture}\label{sec:putative_code}
In this section we study the consequences of assuming that a code violates Conjecture~\ref{conj:packing-covering}. We first translate this assumption into a condition on the columns of a parity-check matrix. We then use this condition to derive upper bounds on the length of a counterexample, and finally develop an argument based on Hamming balls and puncturing, which will be used to exclude the remaining cases.

Throughout this section, let $\C$ be an $[n,k]_q$ linear code with
redundancy $\rho=n-k$, and fix a parity-check matrix
$H=(h_1,\ldots,h_n)\in\Fq^{\rho\times n}$ of $\C$.
We regard the columns of $H$ as a multiset of $n$ vectors. Thus, zero columns are allowed, and repeated or proportional columns are always counted with their multiplicities.

Finally, recall that \textbf{puncturing} $\C$ on $I\subseteq[n]$ gives the code
obtained by deleting the coordinates indexed by $I$ from every
codeword of $\C$. \textbf{Shortening} $\C$ on $I$ gives the code obtained
by first restricting to the codewords $c\in\C$ with $c_i=0$ for
every $i\in I$, and then deleting the coordinates indexed by $I$.

\subsection{Parity-check formulation and nullity}\label{sec:parity}

We now translate generalized Hamming weights and generalized covering radii into properties of the columns of a parity-check matrix. We use the following notation.

\begin{notation}
    For a subspace $W\subseteq\Fq^\rho$, let
$$N(W):=|\{i\in[n]:h_i\in W\}|$$
and $$\mu_s:=\max_{\dim (W)=s}N(W).$$ For $I\subseteq[n]$, write
$r_H(I):=\rk(H_I)$ and $\eta(I):=|I|-r_H(I)$.
\end{notation}

The quantity $\eta(I)$ is the \textbf{nullity} of the set of columns indexed by $I$. In other words, it is the dimension of the subcode of $\C$ consisting of codewords whose support is contained in $I$. The standard parity-check description of generalized Hamming weights therefore gives
\begin{equation}\label{eq:weights}
d_t(\C)=\min\{|I|:I\subseteq[n],\ |I|-\rk (H_I)\ge t\}.
\end{equation}
See, for instance, \cite{Wei,TV}.

For generalized covering radii, Definition~\ref{def:gcr} can be expressed entirely in terms of subspaces of the syndrome space. The following is the subspace-covering criterion of \cite[Lemma~3.4]{AMNT}, combined with the correspondence between generalized covering radii and saturating sets given in \cite[Theorem~3.3]{AMNT}.

\begin{proposition}\label{prop:saturation}
For $t\le r\le\rho$, the following are equivalent:
\begin{enumerate}[label=(\arabic*)]
\item $R_t(\C)\le r$;
\item every $t$-dimensional subspace $T\subseteq\Fq^\rho$ is contained in $\cs_{\Fq}(H_I)$ for some $I\subseteq[n]$ with $|I|\le r$.
\end{enumerate}
\end{proposition}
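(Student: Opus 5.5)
The plan is to prove the equivalence directly from Definition~\ref{def:gcr}, by comparing a $t$-subset $S$ of $\Fq^\rho$ with the subspace it spans. The single observation driving both directions is that, for $I\subseteq[n]$, the column space $\cs_{\Fq}(H_I)$ is an $\Fq$-subspace. Hence a set $S$ lies in $\cs_{\Fq}(H_I)$ if and only if $\langle S\rangle_{\Fq}\subseteq\cs_{\Fq}(H_I)$.

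For (2)$\Rightarrow$(1), take any $S\subseteq\Fq^\rho$ with $|S|=t$. Its span has dimension $s\le t$. Since $t\le r\le\rho$, we may extend $\langle S\rangle$ to a $t$-dimensional subspace $T\subseteq\Fq^\rho$. Hypothesis~(2) then gives $I$ with $|I|\le r$ and $T\subseteq\cs_{\Fq}(H_I)$, so $S\subseteq\cs_{\Fq}(H_I)$. The inner minimum in Definition~\ref{def:gcr} is therefore at most $r$ for every $S$, and so $R_t(\C)\le r$.

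For (1)$\Rightarrow$(2), let $T$ be a $t$-dimensional subspace and choose a basis $S=\{s_1,\dots,s_t\}$ of $T$. This is a $t$-element set, so by (1) and the definition of $R_t$ as a max--min there is an $I$ with $|I|\le R_t(\C)\le r$ and $S\subseteq\cs_{\Fq}(H_I)$. Since the column space is a subspace, $T=\langle S\rangle\subseteq\cs_{\Fq}(H_I)$, which is (2).

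I do not expect a real obstacle. The only point to check is the padding step in (2)$\Rightarrow$(1): sets $S$ whose elements are linearly dependent, or contain $0$, must still be covered. This requires $t\le\rho$ so that a $t$-dimensional superspace exists, and that holds by the standing assumptions. I would also remark that the criterion does not depend on the choice of $H$, since any two parity-check matrices differ by an invertible change of coordinates on $\Fq^\rho$, which permutes the $t$-dimensional subspaces. Alternatively, one could cite \cite[Lemma~3.4, Theorem~3.3]{AMNT} directly, but the argument above is self-contained.
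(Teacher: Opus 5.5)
Your proof is correct, but it takes a different route from the paper, which gives no argument of its own. The paper invokes the subspace-covering criterion of \cite[Lemma~3.4]{AMNT} and the correspondence with saturating sets in \cite[Theorem~3.3]{AMNT}. You instead derive the equivalence directly from Definition~\ref{def:gcr}, using two elementary facts:
\begin{enumerate}
\item $\cs_{\Fq}(H_I)$ is a subspace, so covering a basis of $T$ covers $T$;
\item since $t\le\rho$, the span of any $t$-element set $S$ extends to a $t$-dimensional subspace, which takes care of dependent sets and sets containing $0$.
\end{enumerate}
Your approach buys self-containment and transparency. It shows that, once the parity-check form of $R_t(\C)$ is taken as the definition, the proposition is only a reformulation from sets of $t$ syndromes to $t$-dimensional subspaces. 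It also shows that the hypothesis $t\le r$ plays no role. The citation buys a connection to the geometric (saturating-system) framework of \cite{AMNT}, and through it to the original definition of \cite{EFS}, phrased via $t$-tuples of vectors and codewords. Your argument takes that link for granted by starting from Definition~\ref{def:gcr}, which is legitimate here because the paper adopts that definition.
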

Equivalently, denoting by $\Gr_q(t,\rho)$ the Grassmannian of $t$-dimensional subspaces of $\Fq^\rho$,
\begin{equation}\label{eq:radii}
R_t(\C)=
\max_{T\in\Gr_q(t,\rho)}
\min\{|I|:I\subseteq[n],\ T\subseteq\cs_{\Fq}(H_I)\}.
\end{equation}

\begin{remark}
Combining \eqref{eq:weights} and \eqref{eq:radii} gives an equivalent form of the conjecture that we will use in the rest of the paper.
If $r=R_t(\C)$, then Conjecture~\ref{conj:packing-covering} asks for a set $I\subseteq[n]$ such that $|I|\le2r+2$ and $\eta(I)\ge t$.
In this formulation, the two parameters in the conjecture are described in terms of the columns of a parity-check matrix. The generalized covering radius describes how many columns are needed to span each $t$-dimensional subspace of the syndrome space, while the generalized Hamming weight measures how few columns are needed to obtain nullity at least $t$.
For an integer $m\ge0$, \eqref{eq:weights} shows that $d_t(\C)\le m+t$ means that some set of at most $m+t$ columns
of $H$ has rank at most its size minus $t$.
\end{remark}

The next lemma shows that this is the same as asking for an $m$-dimensional subspace of $\Fq^\rho$ that contains at least $m+t$ columns of $H$.

\begin{lemma}\label{lem:density}
For $0\le m\le\rho$ and $t\in[k]$, we have
$d_t(\C)\le m+t$ if and only if $\mu_m\ge m+t$.
\end{lemma}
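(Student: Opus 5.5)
We prove the two implications separately, working throughout with the parity-check description \eqref{eq:weights}: $d_t(\C)\le m+t$ holds exactly when there is a set $I\subseteq[n]$ with $|I|\le m+t$ and $\eta(I)=|I|-\rk(H_I)\ge t$.

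\emph{($\Leftarrow$).} Suppose $\mu_m\ge m+t$. Choose an $m$-dimensional subspace $W\subseteq\Fq^\rho$ with $N(W)\ge m+t$, and let $J=\{i:h_i\in W\}$. Pick any subset $I\subseteq J$ with $|I|=m+t$. Every column of $H_I$ lies in $W$, so $\rk(H_I)\le m$. Hence $\eta(I)\ge (m+t)-m=t$, and \eqref{eq:weights} gives $d_t(\C)\le|I|=m+t$.

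\emph{($\Rightarrow$).} Suppose $d_t(\C)\le m+t$, and take $I$ with $|I|\le m+t$ and $\eta(I)\ge t$. Set $s=\rk(H_I)\le |I|-t\le m$ and $U=\cs_{\Fq}(H_I)$, so $\dim U=s$. The columns indexed by $I$ all lie in $U$, so $N(U)\ge|I|$.

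This count alone is too weak, since we need $m+t$ columns and $|I|$ may be smaller than $m+t$. The fix is to enlarge $U$ to an $m$-dimensional subspace while picking up one new column for each new dimension. Since $H$ has full row rank, its columns span $\Fq^\rho$, and $m\le\rho$. Starting from $U_0=U$, whenever $\dim U_j<m$ there is a column $h_i\notin U_j$; set $U_{j+1}=U_j+\langle h_i\rangle$. This step raises both the dimension and the column count by at least one. After $m-s$ steps we reach $W=U_{m-s}$ with $\dim W=m$ and
$$N(W)\ge|I|+(m-s)=m+\eta(I)\ge m+t.$$
Thus $\mu_m\ge m+t$.

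The only point needing care is this padding step in ($\Rightarrow$): we must make sure new columns exist outside $U_j$, which is where the full rank of $H$ and the hypothesis $m\le\rho$ are used. Everything else follows directly from the definitions, with zero and repeated columns counted with multiplicity as the paper stipulates.
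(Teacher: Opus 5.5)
Your proof is correct and matches the paper's argument essentially step for step. For ($\Leftarrow$) you take any $m+t$ columns inside a densest $m$-dimensional subspace. For ($\Rightarrow$) you pad $\cs_{\Fq}(H_I)$ to dimension $m$ by adding columns outside the current span, which collects at least $|I|+(m-\rk(H_I))\ge m+t$ columns.
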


\begin{proof}
If an $m$-dimensional subspace contains at least $m+t$ columns of $H$, counted with multiplicity, choose any $m+t$ of them. Their rank is at most $m$, so their nullity is at least $t$, and \eqref{eq:weights} gives $d_t(\C)\le m+t$.

Conversely, suppose $d_t(\C)\le m+t$. By \eqref{eq:weights}, there is a set $I\subseteq[n]$ with $|I|\le m+t$ and $u:=r_H(I)\le |I|-t\le m$. Since the columns of $H$ span $\Fq^\rho$ and $u\le m\le\rho$, we may append $m-u$ columns, each increasing the rank by one. The span $W$ of all these columns has dimension $m$ and contains at least
$|I|+m-u\ge m+t$
columns of $H$, counted with multiplicity. Hence, $\mu_m\ge N(W)\ge m+t$.
\end{proof}

We now apply Lemma~\ref{lem:density} to the bound in
Conjecture~\ref{conj:packing-covering}. Set
$$
m:=2r-t+2,
$$
so that $m+t=2r+2$. If $m\ge\rho$, then $\rho+t\le2r+2$, and the
generalized Singleton bound \cite[Theorem~1]{Wei} gives
$d_t(\C)\le\rho+t\le2r+2$. We may therefore assume $m<\rho$.

In this range, Lemma~\ref{lem:density} gives
$$
d_t(\C)\le2r+2
\quad\Longleftrightarrow\quad
\mu_{2r-t+2}\ge2r+2.
$$
Thus, Conjecture~\ref{conj:packing-covering} is equivalent to proving that
some $(2r-t+2)$-dimensional subspace of $\Fq^\rho$ contains at least
$2r+2$ columns of the parity-check matrix $H$, counted with multiplicity.

From now on, whenever we assume that $\C$ is a counterexample, we therefore have
\begin{equation}\label{eq:failure}
d_t(\C)\ge2r+3.
\end{equation}
In this case we say that $\C$ is a \textbf{counterexample for order $t$}.
Recall from Section~\ref{sec:prelim} that $t\le r$; moreover, the generalized Singleton bound gives
\begin{equation}\label{eq:failure-range}
t\le r\le
\left\lfloor\frac{\rho+t-3}{2}\right\rfloor
\qquad \textnormal{and} \qquad
\rho\ge2r-t+3.
\end{equation}
We also set
\begin{equation}\label{eq:gamma}
\gamma:=\rho-m=\rho-2r+t-2.
\end{equation}

By Lemma~\ref{lem:density}, \eqref{eq:failure} implies $\mu_m\le m+t-1=2r+1$. More generally, every subspace $W\subseteq\Fq^\rho$ with $\dim (W)\le m$ satisfies
\begin{equation}\label{eq:sparsity}
N(W)\le\dim (W)+t-1.
\end{equation}
Indeed, if $N(W)\ge\dim (W)+t$, we may enlarge $W$ to an $m$-dimensional subspace by successively adding columns of $H$ that increase its dimension. The resulting $m$-dimensional subspace contains at least $m+t$ columns of $H$, counted with multiplicity, contradicting $\mu_m\le m+t-1$.

Condition~\eqref{eq:sparsity} is therefore a necessary consequence of
assuming that $d_t(\C)\ge2r+3$, as in \eqref{eq:failure}.

\subsection{Upper bounds on the length of a putative counterexample}

In this subsection we show that a counterexample to Conjecture~\ref{conj:packing-covering} (if it exists)
cannot be too long, by deriving upper bounds on its length $n$. Comparing them with the
lower bounds \eqref{eq:cover} and \eqref{eq:coverrough} coming from the covering radius
will be one of our main tools for excluding counterexamples.

For a counterexample, the upper bounds on $n$ will all be derived from \eqref{eq:sparsity}.

To obtain an upper bound on $n$, we will project the columns of a parity-check matrix $H$ onto a suitable quotient of $\Fq^\rho$. We first need the following elementary bound for vectors in the quotient space. For $j\ge1$, set $P_j:=(q^j-1)/(q-1)$, the number of one-dimensional subspaces of $\Fq^j$.

\begin{lemma}\label{lem:weighted-line}
Let $h\ge2$ and let $v_1,\ldots,v_N\in\Fq^h$. Suppose that every two-dimensional subspace of $\Fq^h$ contains at most $B$ of the vectors $v_1,\ldots,v_N$, counted with multiplicity. Write $B=(q+1)\alpha+\beta$, where $0\le\beta\le q$, and define
\begin{equation}\label{eq:U}
U_q(h,B):=
\alpha P_h+
\begin{cases}
0, & \beta=0,\\
1+(\beta-1)P_{h-1}, & \beta\in[q].
\end{cases}
\end{equation}
Then $N\le U_q(h,B)$.

For $q=2$, this bound is sharp for every $h$ and $B$. More precisely,
\begin{equation}\label{eq:Ubinary}
U_2(h,3\alpha+\beta)
=
\alpha(2^h-1)+
\begin{cases}
0, & \beta=0,\\
1, & \beta=1,\\
2^{h-1}, & \beta=2.
\end{cases}
\end{equation}
\end{lemma}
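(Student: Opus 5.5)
Write $z$ for the number of indices $i$ with $v_i=0$. For each one-dimensional subspace $p\subseteq\Fq^h$ (a ``point''), write $w_p$ for the number of indices $i$ with $v_i\in p\setminus\{0\}$. Then $N=z+\sum_p w_p$. A two-dimensional subspace $L$ (a ``line'') contains $z+\sum_{p\subseteq L}w_p$ of the vectors, so the hypothesis reads $z+\sum_{p\subseteq L}w_p\le B$ for every line $L$. The plan is to derive two upper bounds on $N$ in terms of the single parameter $a:=z+\max_p w_p\le B$, and then optimize over $a$.

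\emph{First bound (heavy point).} Fix a point $p_0$ attaining $\max_p w_p$. The lines through $p_0$ are the $P_{h-1}$ points of $\Fq^h/p_0$, and they partition the remaining points. Each such line contributes at most $B-a$ vectors outside $p_0$. Hence
\[N\le a+P_{h-1}(B-a).\]

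\emph{Second bound (trivial).} Since each $w_p\le a-z$, we get $N\le z+P_h(a-z)\le P_h a$.

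\emph{Case split on $a$.} Write $B=(q+1)\alpha+\beta$ with $0\le\beta\le q$. If $a\le\alpha$, the second bound gives $N\le\alpha P_h$. If $a\ge\alpha+1$, note that $a+P_{h-1}(B-a)$ is nonincreasing in $a$ because $P_{h-1}\ge1$. So, using $P_h=1+qP_{h-1}$,
\[N\le \alpha+1+P_{h-1}(q\alpha+\beta-1)=\alpha P_h+1+(\beta-1)P_{h-1}.\]
When $\beta=0$, this is $\alpha P_h+1-P_{h-1}\le\alpha P_h$. When $\beta\ge1$, it is at least $\alpha P_h$. In both cases the maximum of the two cases is exactly $U_q(h,B)$ as defined in \eqref{eq:U}. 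The only delicate point is the monotonicity in $a$ together with the boundary case $h=2$, where $P_{h-1}=1$ and the bound is constant in $a$; this causes no problem.

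\emph{Sharpness for $q=2$.} Here $P_h=2^h-1$ and $1+P_{h-1}=2^{h-1}$, which gives \eqref{eq:Ubinary}. Each construction below starts by taking every nonzero vector of $\F_2^h$ with multiplicity $\alpha$; every line then contains exactly $3\alpha$ vectors.
\begin{itemize}
\item For $\beta=0$, this multiset already attains $N=\alpha(2^h-1)$.
\item For $\beta=1$, add one zero vector. Every line now contains exactly $3\alpha+1$ vectors, and $N=\alpha(2^h-1)+1$.
\item For $\beta=2$, instead add once each of the $2^{h-1}$ vectors outside a fixed hyperplane $W$. A line $L\not\subseteq W$ satisfies $\dim(L\cap W)=1$, so it contains exactly $2$ vectors outside $W$; a line $L\subseteq W$ contains none. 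Hence every line contains at most $3\alpha+2$ vectors, and $N=\alpha(2^h-1)+2^{h-1}$.
\end{itemize}
These attain the bound, which proves that it is sharp.
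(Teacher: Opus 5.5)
Your proof is correct and follows the paper's argument: you take a heaviest point, use the partition of the other points by the $P_{h-1}$ lines through it, split into $a\le\alpha$ versus $a\ge\alpha+1$ using monotonicity, and use the affine-hyperplane construction for $\beta=2$. The differences are minor. You absorb the zero vectors into $a=z+\max_p w_p$, whereas the paper first treats nonzero vectors and then uses $N\le z+U_q(h,B-z)\le U_q(h,B)$, since $U_q(h,\cdot)$ increases by at least one per step; and for $\beta=1$ you add a zero vector rather than a nonzero one.
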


\begin{proof}
Assume first that none of the vectors $v_1,\ldots,v_N$ is zero. For every one-dimensional subspace $J\subseteq\Fq^h$, let
$n_J:=|\{i\in[N]:v_i\in J\}|$, and let $w:=\max_J n_J$. Since there are $P_h$ one-dimensional subspaces of $\Fq^h$, we have $N\le wP_h$.

Choose a one-dimensional subspace $J$ with $n_J=w$. There are $P_{h-1}$ two-dimensional subspaces containing $J$, and every nonzero vector outside $J$ belongs to exactly one of them. Since each such subspace contains at most $B$ of the vectors $v_1,\ldots,v_N$, we obtain
$N\le w+(B-w)P_{h-1}$.

If $w\le\alpha$, then $N\le\alpha P_h$. Suppose instead that $w\ge\alpha+1$. Since $w+(B-w)P_{h-1}$ is nonincreasing in $w$, we get
$$
N\le \alpha+1+(B-\alpha-1)P_{h-1}.
$$
Using $B=(q+1)\alpha+\beta$ and $P_h=qP_{h-1}+1$, this gives
$$
N\le\alpha P_h+1+(\beta-1)P_{h-1}.
$$
If $\beta=0$, the right-hand side is at most $\alpha P_h$. This proves the first statement when none of the $v_i$ is zero.

Now suppose that exactly $z$ of the vectors $v_i$ are zero. Every two-dimensional subspace contains these $z$ vectors, so the remaining nonzero vectors satisfy the same hypothesis with $B$ replaced by $B-z$. Hence,
$N\le z+U_q(h,B-z)$. By \eqref{eq:U}, increasing $B$ by one increases $U_q(h,B)$ by $1$ when $\beta=0$, and by $P_{h-1}$ in every other case, including when $\beta=q$. Thus, $U_q(h,B+1)-U_q(h,B)\ge1$, so $z+U_q(h,B-z)\le U_q(h,B)$.

Finally, let $q=2$. Taking $\alpha$ copies of each nonzero vector attains the bound when $\beta=0$. If $\beta=1$, add one further copy of any nonzero vector. If $\beta=2$, add one further copy of every vector in an affine hyperplane not containing the origin. Every two-dimensional subspace meets such an affine hyperplane in either zero or two vectors, so these constructions attain $U_2(h,B)$ in all three cases.
\end{proof}

We can now apply Lemma~\ref{lem:weighted-line} to obtain an upper bound on the length of a counterexample.

\begin{proposition}\label{prop:weighted-length}
Let $\C$ be an $[n,k]_q$ linear code with redundancy $\rho$, let $t\in[\min\{k,\rho\}]$, let $r=R_t(\C)$, and suppose that $d_t(\C)\ge2r+3$. Set $m:=2r-t+2$ and $\gamma:=\rho-2r+t-2$. Then
\begin{equation}\label{eq:weighted-length}
n\le m-2+U_q(\gamma+2,t+1)
=
2r-t+U_q(\rho-2r+t,t+1).
\end{equation}
In particular, if $q=2$ and $t=3$, then
\begin{equation}\label{eq:binary-length}
n\le2r-3+2^{\rho-2r+3}.
\end{equation}
\end{proposition}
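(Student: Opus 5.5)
The plan is to project the columns of $H$ onto a quotient of dimension $\gamma+2$ and apply Lemma~\ref{lem:weighted-line} there. Since $\gamma\ge1$ by \eqref{eq:failure-range}, we have $m-2=2r-t\ge t\ge1$ and $m-2<\rho$.

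First choose an $(m-2)$-dimensional subspace $W_0\subseteq\Fq^\rho$. It should be spanned by columns of $H$ and contain as many columns as possible. The sparsity condition \eqref{eq:sparsity}, with $\dim W_0=m-2\le m$, gives $N(W_0)\le m-2+t-1$. The cleanest choice is a $W_0$ with $N(W_0)$ close to $m-2$: take $m-2$ linearly independent columns and let $W_0$ be their span, so that $N(W_0)\ge m-2$. The exact value of $N(W_0)$ turns out not to matter, as the counting below absorbs it.

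Next, let $\pi:\Fq^\rho\to\Fq^\rho/W_0\cong\Fq^{\gamma+2}$ be the quotient map. Set $v_i:=\pi(h_i)$ for the $n':=n-N(W_0)$ columns not in $W_0$; all of these $v_i$ are nonzero. A two-dimensional subspace $L$ of the quotient has preimage $W=\pi^{-1}(L)$ of dimension $m$. By \eqref{eq:sparsity}, $W$ contains at most $m+t-1$ columns, and at least $N(W_0)$ of these lie in $W_0$. Hence $L$ contains at most $m+t-1-N(W_0)$ of the nonzero $v_i$.

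Then Lemma~\ref{lem:weighted-line} with $h=\gamma+2$ and $B=m+t-1-N(W_0)$ gives
\[n\le N(W_0)+U_q\bigl(\gamma+2,\,m+t-1-N(W_0)\bigr).\]
Write $N(W_0)=m-2+z$ with $z\ge0$. The bound becomes $m-2+z+U_q(\gamma+2,t+1-z)$. The last step of the proof of Lemma~\ref{lem:weighted-line} showed that $U_q(h,B+1)-U_q(h,B)\ge1$, so this is at most $m-2+U_q(\gamma+2,t+1)$. This is \eqref{eq:weighted-length}; the second form follows by substituting $m=2r-t+2$ and $\gamma+2=\rho-2r+t$.

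For $q=2$ and $t=3$ we have $B=4=3\cdot1+1$. Then \eqref{eq:Ubinary} gives $U_2(h,4)=2^h-1+1=2^h$, which yields $n\le 2r-3+2^{\rho-2r+3}$.

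The main subtlety is correctly accounting for columns lying in $W_0$: they sit in every preimage and so reduce the budget $B$ available to the nonzero $v_i$. The monotonicity $U_q(h,B+1)-U_q(h,B)\ge1$ handles this uniformly, exactly as zero vectors were handled in the lemma.
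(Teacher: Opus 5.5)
Your proposal is correct and is essentially the paper's proof: you project onto the quotient by the span of $m-2$ independent columns, note that each two-dimensional subspace there has an $m$-dimensional preimage, and apply \eqref{eq:sparsity} and then Lemma~\ref{lem:weighted-line} with $h=\gamma+2$. The only cosmetic difference is bookkeeping. The paper removes just the $m-2$ spanning columns and lets any further columns lying in that span appear as zero vectors, which Lemma~\ref{lem:weighted-line} already handles. You remove all $N(W_0)=m-2+z$ such columns and apply the monotonicity $U_q(h,B+1)-U_q(h,B)\ge1$ yourself; this is the same step as inside the lemma, and it is valid because $z\le t-1$ keeps the budget $t+1-z\ge2$.
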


\begin{proof}
Choose $m-2=2r-t$ independent columns of $H$, and let $X\subseteq\Fq^\rho$ be their span. Remove these $m-2$ columns and project all the remaining columns onto $\Fq^\rho/X$. By \eqref{eq:gamma}, the quotient has dimension $\rho-m+2=\gamma+2$.
Let $W\subseteq\Fq^\rho/X$ be a two-dimensional subspace. Its inverse image in $\Fq^\rho$ is an $m$-dimensional subspace containing $X$. By \eqref{eq:sparsity}, it contains at most $m+t-1=2r+1$ columns of $H$, counted with multiplicity. Since the $m-2$ columns spanning $X$ are among them, at most $t+1$ of the remaining columns project into $W$. This includes the columns whose projection is zero.
Therefore, the projected columns satisfy the assumptions of Lemma~\ref{lem:weighted-line} with $h=\gamma+2$ and $B=t+1$. By \eqref{eq:U},
$n-(m-2)\le U_q(\gamma+2,t+1)$, which gives \eqref{eq:weighted-length}.
If $q=2$ and $t=3$, then \eqref{eq:Ubinary} gives $U_2(h,4)=2^h$. Substituting $h=\gamma+2=\rho-2r+t$ and $t=3$ into \eqref{eq:weighted-length} gives \eqref{eq:binary-length}.
\end{proof}

A second upper bound on $n$ can be obtained by averaging over the $m$-dimensional subspaces containing the span of a fixed set of independent columns.

\begin{lemma}\label{lem:incidence}
Let $\C$ be an $[n,k]_q$ linear code with redundancy $\rho$, let $t\in[\min\{k,\rho\}]$, let $r=R_t(\C)$, and suppose that $d_t(\C)\ge2r+3$. Set $m:=2r-t+2$. Then, for every integer $0\le x<m$,
\begin{equation}\label{eq:incidence}
n\le
x+
\left\lfloor
(2r+1-x)
\frac{q^{\rho-x}-1}{q^{m-x}-1}
\right\rfloor.
\end{equation}
\end{lemma}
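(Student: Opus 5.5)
The plan is to mimic the proof of Proposition~\ref{prop:weighted-length}, replacing Lemma~\ref{lem:weighted-line} by a double count over all $m$-dimensional subspaces containing a fixed $x$-dimensional subspace.

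\emph{Setup.} Since $H$ has full row rank and $x<m<\rho$ (we may assume $m<\rho$, as explained after Lemma~\ref{lem:density}), we can choose $x$ linearly independent columns of $H$. Let $X$ be their span, so $\dim X=x$. Let $\mathcal W$ be the family of $m$-dimensional subspaces $W\subseteq\Fq^\rho$ with $X\subseteq W$. Passing to the quotient $\Fq^\rho/X\cong\Fq^{\rho-x}$, the family $\mathcal W$ corresponds to $\Gr_q(m-x,\rho-x)$.

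\emph{Upper bound on each member.} Every $W\in\mathcal W$ satisfies $N(W)\le m+t-1=2r+1$ by \eqref{eq:sparsity}. The $x$ chosen columns lie in every $W$. Hence each $W\in\mathcal W$ contains at most $2r+1-x$ of the remaining $n-x$ columns.

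\emph{Double count.} Count the pairs (remaining column $h_i$, $W\in\mathcal W$) with $h_i\in W$. A column whose image in the quotient is zero lies in all $|\mathcal W|=\binom{\rho-x}{m-x}_q$ members. A column with nonzero image $\bar h_i$ lies in exactly those $W$ whose image contains the line $\langle\bar h_i\rangle$. There are $\binom{\rho-x-1}{m-x-1}_q$ such subspaces. So every remaining column lies in at least $\binom{\rho-x-1}{m-x-1}_q$ members of $\mathcal W$. This gives
\[
(n-x)\binom{\rho-x-1}{m-x-1}_q\le(2r+1-x)\binom{\rho-x}{m-x}_q .
\]
Using the standard identity $\binom{a}{b}_q\big/\binom{a-1}{b-1}_q=(q^a-1)/(q^b-1)$ with $a=\rho-x$ and $b=m-x\ge1$, we get
\[
n-x\le(2r+1-x)\frac{q^{\rho-x}-1}{q^{m-x}-1}.
\]
Since $n-x$ is an integer, we may take the floor, which is \eqref{eq:incidence}.

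\emph{Main obstacle.} The only delicate point is the handling of columns projecting to zero, including zero columns and columns lying in $X$. These are covered by the observation that they lie in all members of $\mathcal W$, which only strengthens the count.

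One must also check that $2r+1-x\ge0$ so the argument is meaningful. This holds because $x<m\le 2r+2$ when $t\ge1$, giving $x\le 2r+1$.
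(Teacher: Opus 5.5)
Your proof is correct and is essentially the paper's argument: fix the span $X$ of $x$ independent columns, double count incidences between the remaining columns and the $m$-dimensional subspaces containing $X$, bound each subspace using \eqref{eq:sparsity}, and rearrange. Your Gaussian-binomial identity and the remark that columns projecting to zero lie in every such subspace make explicit the paper's ``proportion'' and ``at least'' steps. Also, $m<\rho$ is not an extra assumption: it already follows from $d_t(\C)\ge2r+3$ together with the generalized Singleton bound.
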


\begin{proof}
Choose $x$ independent columns of $H$, and let $X\subseteq\Fq^\rho$ be their span. Consider the $m$-dimensional subspaces of $\Fq^\rho$ containing $X$. Each of the chosen $x$ columns belongs to every such subspace, while a column not contained in $X$ belongs to a proportion
$$
\frac{q^{m-x}-1}{q^{\rho-x}-1}
$$
of them. Therefore, the average number of columns of $H$ contained in an $m$-dimensional subspace containing $X$ is at least
$$
x+(n-x)\frac{q^{m-x}-1}{q^{\rho-x}-1}.
$$
On the other hand, by \eqref{eq:sparsity}, every $m$-dimensional subspace contains at most $m+t-1=2r+1$ columns of $H$. Hence,
$$
x+(n-x)\frac{q^{m-x}-1}{q^{\rho-x}-1}\le2r+1.
$$
Rearranging and using that $n$ is an integer gives \eqref{eq:incidence}.
\end{proof}

The bound in Lemma~\ref{lem:incidence}, minimized over $x$, is closely related to the dimension bound in \cite[Lemma~4.1]{EZ}. Here we will use it together with Proposition~\ref{prop:weighted-length}.

Set $m:=2r-t+2$ and define
\begin{equation}\label{eq:M}
M(q,\rho,t,r):=
\min\left\{
\min_{0\le x<m}
\left(
x+
\left\lfloor
(2r+1-x)
\frac{q^{\rho-x}-1}{q^{m-x}-1}
\right\rfloor
\right),
\,
m-2+U_q(\rho-m+2,t+1)
\right\}.
\end{equation}
By Proposition~\ref{prop:weighted-length} and Lemma~\ref{lem:incidence}, every counterexample to Conjecture~\ref{conj:packing-covering} satisfies
\begin{equation}\label{eq:length-upper}
n\le M(q,\rho,t,r).
\end{equation}

In the following, we will repeatedly compare the upper bound \eqref{eq:length-upper} with the lower bounds \eqref{eq:cover} and \eqref{eq:coverrough} coming from the generalized covering radius.

\subsection{Hamming-ball bounds and puncturing}\label{subsec:puncturing}

The bounds obtained in previous subsections restrict the length of a possible counterexample. We now use Hamming balls to obtain upper bounds on generalized Hamming weights of such a counterexample. The main idea is to find a subcode with small support, puncture its support, and repeat the argument on the resulting code. This idea is inspired by the proof of the Griesmer bound; see \cite{Griesmer} and \cite[Section 2.7]{HuffmanPless}.

We first recall the following result from~\cite{YS}. We state it for a code of length $N$
and redundancy $h$, since it will be applied to punctured codes and at test lengths.

\begin{lemma}[{\cite[Lemma~13]{YS}}]\label{lem:hamming-ball}
Let $\D$ be an $[N,N-h]_q$ linear code. If $s\in [N-h]$ and $\ell\ge0$ satisfy
$\vol_q(N,\ell)>q^{h+s-1}$,
then $d_s(\D)\le(s+1)\ell$.
\end{lemma}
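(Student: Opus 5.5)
We prove Lemma~\ref{lem:hamming-ball} by a pigeonhole argument on syndromes, followed by a linear-algebra step that turns the collisions into a subcode of small support. Fix a parity-check matrix $H\in\Fq^{h\times N}$ of $\D$ and consider the syndrome map $\sigma\colon\Fq^N\to\Fq^h$, $v\mapsto Hv^\top$, restricted to the ball $B_\ell(0)$.

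\textbf{Step 1: a large fibre.} Since $|B_\ell(0)|=\vol_q(N,\ell)>q^{h+s-1}$ and there are only $q^h$ syndromes, some fibre $F=\sigma^{-1}(y)\cap B_\ell(0)$ satisfies $|F|>q^{s-1}$. Every $v\in F$ has weight at most $\ell$.

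\textbf{Step 2: the differences span an $s$-dimensional subcode.} Fix $v_0\in F$. Each difference $v-v_0$ with $v\in F$ lies in $\D$, because it has syndrome zero. These differences are pairwise distinct, so the set $F-v_0$ has more than $q^{s-1}$ elements. A subspace of dimension at most $s-1$ has at most $q^{s-1}$ elements, so $F-v_0$ cannot lie in one. Hence we can choose $v_1,\dots,v_s\in F$ such that $v_1-v_0,\dots,v_s-v_0$ are linearly independent, and they span an $s$-dimensional subcode $\mathcal E\subseteq\D$.

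\textbf{Step 3: bounding the support.} Each $v_j$ has weight at most $\ell$, so
$$\supp(\mathcal E)\subseteq\bigcup_{j=0}^{s}\supp(v_j),$$
and this union has size at most $(s+1)\ell$. Therefore $d_s(\D)\le(s+1)\ell$.

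The hypothesis $s\le N-h$ only guarantees that $d_s(\D)$ is defined; it is automatic here, since $\D$ contains the $s$-dimensional subcode $\mathcal E$.

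The only point needing care is Step~2. Choosing $s$ independent differences from a set that is not contained in any $(s-1)$-dimensional space can be done greedily: at each stage the current span has dimension $i<s$, hence at most $q^{s-1}$ elements, so some element of $F-v_0$ lies outside it and can be added.
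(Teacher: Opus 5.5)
Your proof is correct and follows essentially the same route as the paper's: pigeonhole the syndromes over the radius-$\ell$ ball to get a fibre of size more than $q^{s-1}$, extract $s+1$ vectors from it whose differences span an $s$-dimensional subcode, and bound the support by the union of $s+1$ supports of size at most $\ell$. The only cosmetic difference is that you phrase Step 2 as a greedy choice of linearly independent differences in $F-v_0$, whereas the paper speaks of $s+1$ affinely independent points in the coset; these are the same argument.
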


\begin{proof}
We recall the argument of \cite[Lemma~13]{YS}, which finds $s+1$ affinely independent
vectors with the same syndrome (where the syndrome of a vector
$x\in\Fq^N$ is $Hx^\top$ for a parity-check matrix $H$ of $\D$;
two vectors have the same syndrome precisely when their difference
belongs to $\D$) inside a Hamming ball of radius $\ell$.
Indeed, since $\vol_q(N,\ell)>q^{h}\cdot q^{s-1}$, some syndrome is attained
by more than $q^{s-1}$ vectors of the Hamming ball of radius $\ell$ centred at $0$. These vectors lie in a coset of $\D$, and an affine subspace of dimension at most $s-1$ has at most $q^{s-1}$ points, so among them there
are $s+1$ affinely independent vectors $x_0,\ldots,x_s$. The differences $x_i-x_0$ span an $s$-dimensional subcode of $\D$ supported on $\bigcup_i\supp(x_i)$, a set of size at most $(s+1)\ell$.
\end{proof}

To apply Lemma~\ref{lem:hamming-ball} repeatedly, we first describe what happens after puncturing the support of a minimum-support subcode.

\begin{lemma}\label{lem:puncture}
Let $\C$ be an $[n,k]_q$ linear code with redundancy $\rho$, let $s\in[k]$, and let $F$ be the support of an $s$-dimensional subcode of $\C$ such that $|F|=d_s(\C)$. Puncturing $\C$ on the coordinates in $F$ gives a code $\D$ of length $n-d_s(\C)$, dimension $k-s$, and redundancy $\rho-d_s(\C)+s$. If $s<k$, then for every $u\in[k-s]$,
$$d_{s+u}(\C)\le d_s(\C)+d_u(\D).$$
\end{lemma}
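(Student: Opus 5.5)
We prove the three claims about $\D$ in turn: its length, its dimension, and the inequality for generalized weights. The length is immediate, since puncturing on $F$ deletes exactly $|F|=d_s(\C)$ coordinates. Let $\D_0\subseteq\C$ be the $s$-dimensional subcode with $\supp(\D_0)=F$, and let $\pi:\C\to\Fq^{n-|F|}$ be the puncturing map, so that $\D=\pi(\C)$.

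For the dimension, note that $\ker\pi$ is the subcode $\C(F):=\{c\in\C:\supp(c)\subseteq F\}$, which contains $\D_0$. The key point is that $\C(F)=\D_0$, and we argue this by minimality. If $\dim\C(F)>s$, pick $c\in\C(F)\setminus\D_0$. Then $\D_0$ has some coordinate $i\in F$ on which some codeword of $\D_0$ is nonzero, and we use it to cancel. Take any $(s+1)$-dimensional subspace $E$ with $\D_0\subsetneq E\subseteq\C(F)$. The subspace $E\cap\{x:x_i=0\}$ has dimension at least $s$, and its support lies in $F\setminus\{i\}$. Hence it contains an $s$-dimensional subcode with support of size at most $|F|-1<d_s(\C)$, which is a contradiction.

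Thus $\dim\ker\pi=s$, so $\dim\D=k-s$. The redundancy of $\D$ is then $(n-d_s(\C))-(k-s)=\rho-d_s(\C)+s$.

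For the inequality, fix $u\in[k-s]$ and let $\D_1\subseteq\D$ be a $u$-dimensional subcode with $|\supp(\D_1)|=d_u(\D)$. Its preimage $\pi^{-1}(\D_1)\subseteq\C$ has dimension $u+s$, because the kernel has dimension $s$. Every codeword in this preimage has support contained in $F\cup\supp(\D_1)$, where we identify the coordinates of $\D$ with $[n]\setminus F$. Therefore $d_{s+u}(\C)\le|F|+d_u(\D)=d_s(\C)+d_u(\D)$.

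The only step that needs any thought is showing that the kernel is exactly $\D_0$, which is the minimal-support argument above. Everything else is bookkeeping.
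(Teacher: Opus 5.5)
Your proof is correct and follows the paper's argument essentially verbatim. The kernel of puncturing is the subcode supported on $F$; forcing a zero at a coordinate of $F$ in an $(s+1)$-dimensional subspace of that kernel would give an $s$-dimensional subcode on at most $d_s(\C)-1$ coordinates; and the preimage of a minimum-support $u$-dimensional subcode of $\D$ gives the inequality (picking $c$ and a special $i$ is unnecessary, since any $i\in F$ works and your contradiction argument already covers $s=k$).
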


\begin{proof}
Set $j:=d_s(\C)$. The kernel of the puncturing map consists of the codewords of $\C$ supported on $F$, and hence contains an $s$-dimensional subcode. We claim that its dimension is exactly $s$. This is immediate if $s=k$. Otherwise, if its dimension were at least $s+1$, imposing zero in any coordinate of $F$ would decrease the dimension by at most one and leave an $s$-dimensional subcode supported on at most $j-1$ coordinates, contradicting the definition of $d_s(\C)$.
Thus, the punctured code has dimension $k-s$ and redundancy $(n-j)-(k-s)=\rho-j+s$. If $s<k$ and $\D'$ is a $u$-dimensional subcode of $\D$ with support of size $d_u(\D)$, its inverse image under puncturing has dimension $s+u$ and is supported on $F$ together with the support of $\D'$. This gives the desired inequality.
\end{proof}

In terms of the parity-check matrix $H$, if $F$ is the support of an $s$-dimensional subcode attaining $d_s(\C)$, then $\eta(F)=s$ and $r_H(F)=|F|-s$.

We will also need to enlarge such a set of coordinates without changing its nullity.

\begin{lemma}\label{lem:extension}
Let $F\subseteq[n]$ satisfy $|F|=j$ and $\eta(F)=s$. If $j-s\le x\le\rho$, then $F$ can be enlarged to a set $F'\subseteq[n]$ such that $|F'|=x+s$, $r_H(F')=x$, and $\eta(F')=s$.
\end{lemma}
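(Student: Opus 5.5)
We build $F'$ greedily, adding columns of $H$ one at a time so that each new column raises the rank by exactly one. Initially $r_H(F)=|F|-\eta(F)=j-s$, so $F$ has size $(j-s)+s$, rank $j-s$ and nullity $s$. The target $F'$ has rank $x$ and size $x+s$. We therefore need to add $x-(j-s)\ge0$ columns, each of which increases the rank by one. Such additions leave the nullity $|F'|-r_H(F')$ unchanged.

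The key step is showing that a suitable column always exists while the current rank is below $x$. Suppose $G\supseteq F$ is the current set, with $r_H(G)=u<x\le\rho$. Since $H$ has full row rank $\rho$, its columns span $\Fq^\rho$. Hence $\cs_{\Fq}(H_G)$, of dimension $u<\rho$, cannot contain every column of $H$, so some $h_i\notin\cs_{\Fq}(H_G)$. This $i$ is not in $G$, because every column indexed by $G$ lies in that span. Setting $G\cup\{i\}$ gives rank $u+1$ and size $|G|+1$, so $\eta$ is preserved.

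After $x-j+s$ steps we obtain $F'$ with $r_H(F')=x$, $|F'|=j+(x-j+s)=x+s$ and $\eta(F')=s$, as required. The hypothesis $j-s\le x$ guarantees that the number of steps is nonnegative, and $x\le\rho$ guarantees that each step is possible. There is no real obstacle here; the only point to watch is the case $x=j-s$, where no columns are added and $F'=F$ already works.
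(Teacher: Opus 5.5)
Your proof is correct and follows the same greedy argument as the paper: starting from $r_H(F)=j-s$, repeatedly add a column of $H$ outside the current span (which exists since the columns of $H$ span $\Fq^\rho$ and the current rank is below $x\le\rho$), so that size and rank each increase by one and the nullity stays $s$. Your remark that the added index cannot already lie in $G$ is a small detail the paper leaves implicit.
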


\begin{proof}
Since $\eta(F)=s$, we have $r_H(F)=j-s$. As long as the span of the columns indexed by the chosen set has dimension smaller than $x$, add a column of $H$ outside this span. Such a column exists because the columns of $H$ span $\Fq^\rho$. Each added column increases both the size and the rank by one, so the nullity remains equal to $s$.
\end{proof}

Thus, when Lemma~\ref{lem:hamming-ball} gives a subcode whose support has size at most a prescribed value, Lemma~\ref{lem:extension} allows us to enlarge its support to that value while keeping the same nullity.

We will also use the following shortening argument.

\begin{lemma}\label{lem:shorten}
Let $\C$ be an $[n,n-\rho]_q$ linear code, let $t\ge1$, and suppose that $n\ge N\ge\rho+t$. Then there exists an $[N,N-\rho]_q$ linear code $\C_0$, obtained by shortening $\C$, such that $d_t(\C)\le d_t(\C_0)$.
Consequently, any upper bound on $d_t$ that holds for every $[N,N-\rho]_q$ linear code also holds for $\C$.
\end{lemma}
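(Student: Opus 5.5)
We shorten $\C$ one coordinate at a time and track the redundancy and $d_t$ through each step.

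\emph{Shortening keeps the redundancy.} Suppose $n>\rho+t$. We look for a coordinate $i\in[n]$ such that shortening on $\{i\}$ gives a code of length $n-1$ and redundancy still $\rho$. In parity-check terms, shortening on $i$ removes the column $h_i$, so the redundancy stays $\rho$ exactly when the remaining columns still span $\Fq^\rho$. Since $H$ has rank $\rho$ and $n>\rho$, we can pick $\rho$ independent columns and take $i$ outside this set. The shortened code then has parity-check matrix $H$ with column $i$ deleted, which still has full row rank $\rho$. Its dimension is $n-1-\rho\ge t$, so $d_t$ is defined for it.

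\emph{Monotonicity of $d_t$.} Let $\C'$ be this shortened code. Every codeword of $\C'$ comes from a codeword of $\C$ vanishing at $i$, by deleting coordinate $i$. This map $\C'\to\C$ is an injective linear map that preserves support sizes. Hence a $t$-dimensional subcode of $\C'$ with support of size $d_t(\C')$ yields a $t$-dimensional subcode of $\C$ with the same support size. Therefore $d_t(\C)\le d_t(\C')$.

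Equivalently, via \eqref{eq:weights}: any $I\subseteq[n]\setminus\{i\}$ with $|I|-\rk(H_I)\ge t$ is also admissible for $\C$, so the minimum over the larger family of subsets can only be smaller.

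\emph{Iteration and consequence.} We repeat the step $n-N$ times; this is possible because $N\ge\rho+t$ guarantees $n'>\rho+t-1\ge\rho$ at every stage. Shortening on a set is the same as shortening successively on its elements, so $\C_0$ is obtained by shortening $\C$ and is an $[N,N-\rho]_q$ code. Chaining the inequalities gives $d_t(\C)\le d_t(\C_0)$. The ``consequently'' clause is then immediate: any upper bound valid for $d_t(\C_0)$ also bounds $d_t(\C)$.

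The only point needing care is choosing $i$ so that the redundancy does not drop, i.e., deleting a column outside a spanning basis. Once that choice is made, the argument is straightforward.
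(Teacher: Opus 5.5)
Your proof is correct and is essentially the paper's argument. Deleting columns of $H$ that lie outside a fixed set of $\rho$ independent columns is the parity-check (dual) description of the paper's choice of $n-N$ coordinates whose coordinate functionals are linearly independent on $\C$, and the support-preserving embedding of subcodes of $\C_0$ into $\C$ is the same. The only difference is cosmetic: the paper shortens on all $n-N$ coordinates at once, while you iterate one coordinate at a time. You could equally remove all of them in one step, since there are $n-\rho\ge n-N$ columns outside a single basis.
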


\begin{proof}
The coordinate functionals on $\C$ span the dual space of $\C$. Since $N\ge\rho+t$, we have $n-N\le k-t$. We may therefore choose $n-N$ linearly independent coordinate functionals and take their common kernel. This is a subcode of $\C$ of dimension $k-(n-N)=N-\rho$ whose codewords are zero on the corresponding $n-N$ coordinates. Deleting these coordinates gives an $[N,N-\rho]_q$ code $\C_0$.
Every $t$-dimensional subcode of $\C_0$ corresponds to a $t$-dimensional subcode of $\C$ with the same support size. Hence, $d_t(\C)\le d_t(\C_0)$.
\end{proof}

Notice that no information about the generalized covering radius of $\C_0$ is needed. Lemma~\ref{lem:shorten} will only be used to transfer upper bounds on $d_t$ from a fixed length $N$ to the original code.

We can now apply Lemma~\ref{lem:hamming-ball} successively, puncturing the
support obtained at each step before applying the lemma again to the remaining code.

\begin{theorem}\label{thm:successive}
Let $\C$ be an $[n,n-\rho]_q$ linear code, let $t\in[n-\rho]$, and let $B\ge t$.
Let $v\ge1$, and let $(\ell_i,s_i)$, $i\in[v]$, be pairs of positive integers.
Set
$$
S_i:=\sum_{j=1}^i s_j,
\qquad
W_i:=\sum_{j=1}^i \ell_j(s_j+1),
\qquad
S_0=W_0=0.
$$
Suppose that $S_v=t$ and $W_v\le B$. Then $d_t(\C)\le B$ in either of the
following cases:
\begin{enumerate}[label=(\arabic*)]
    \item $n\le B$ or $\rho+t\le B$;

    \item $n>B$, $\rho+t>B$, and, for every $i\in[v]$,
\begin{equation}\label{eq:successive}
    \vol_q(n-W_{i-1},\ell_i)
    >q^{\rho-W_{i-1}+S_{i-1}+s_i-1}.
    \end{equation}
\end{enumerate}
Moreover, in case~{\rm (2)}, it is sufficient to verify the stronger conditions
\begin{equation}\label{eq:successive-uniform}
\vol_q(n-B,\ell_i)
>q^{\rho-W_{i-1}+S_{i-1}+s_i-1}
\end{equation}
for every $i\in[v]$.
\end{theorem}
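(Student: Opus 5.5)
Case~(1) is immediate. If $\rho+t\le B$, the generalized Singleton bound gives $d_t(\C)\le\rho+t\le B$. If $n\le B$, then $d_t(\C)\le n\le B$, since the whole code has support of size at most $n$ and $t\le n-\rho=k$.

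For case~(2), I would prove by induction on $i=0,1,\dots,v$ that there is a set $F_i\subseteq[n]$ with $|F_i|\le W_i$ and $\eta(F_i)\ge S_i$. For $i=v$ this gives nullity at least $t$ on at most $W_v\le B$ coordinates, and \eqref{eq:weights} yields $d_t(\C)\le B$.

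To keep the redundancy bookkeeping exact, I would normalize at each stage using Lemma~\ref{lem:extension}. Given $F_{i-1}$ with nullity at least $S_{i-1}$ and size at most $W_{i-1}$, I first pass to a subset $F'$ with $\eta(F')=S_{i-1}$ exactly and $|F'|\le W_{i-1}$. This is possible by deleting columns one at a time: each deletion lowers the nullity by at most one. I then enlarge $F'$ so that $|F'|=W_{i-1}$ and $\eta(F')=S_{i-1}$, by adding independent columns as in Lemma~\ref{lem:extension}. The hypothesis $j-s\le x\le\rho$ holds with $x=W_{i-1}-S_{i-1}$, because $W_{i-1}-S_{i-1}<B-t<\rho$.

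Now puncture $\C$ on $F'$. As in the proof of Lemma~\ref{lem:puncture}, the kernel of puncturing is the subcode supported on $F'$, of dimension $\eta(F')=S_{i-1}$. So the punctured code $\D$ has length $n-W_{i-1}$, dimension $n-\rho-S_{i-1}$, and redundancy $h=\rho-W_{i-1}+S_{i-1}$. Condition \eqref{eq:successive} reads exactly $\vol_q(n-W_{i-1},\ell_i)>q^{h+s_i-1}$. Lemma~\ref{lem:hamming-ball} then gives an $s_i$-dimensional subcode of $\D$ supported on a set $G$ with $|G|\le(s_i+1)\ell_i$. For this we need $s_i\le\dim\D=n-\rho-S_{i-1}$, which holds since $S_{i-1}+s_i\le t\le n-\rho$.

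The inverse image of this subcode under puncturing has dimension $S_{i-1}+s_i$ and is supported on $F'\cup G$. So $F_i:=F'\cup G$ satisfies $\eta(F_i)\ge S_i$ and $|F_i|\le W_i$, which completes the induction. The main point to check is this normalization, namely that the exact redundancy of $\D$ matches the exponent in \eqref{eq:successive}. This is exactly why I fix both $|F'|=W_{i-1}$ and $\eta(F')=S_{i-1}$. If the found sets were smaller, the punctured code would be longer but would have larger redundancy, and the comparison would go the wrong way.

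Finally, \eqref{eq:successive-uniform} implies \eqref{eq:successive} because $W_{i-1}\le W_v\le B$ and $\vol_q(N,\ell)$ is nondecreasing in $N$. Hence $\vol_q(n-W_{i-1},\ell_i)\ge\vol_q(n-B,\ell_i)$, and $n-B>0$ in case~(2).
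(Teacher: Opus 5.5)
Your proof is correct and follows the paper's argument. Like the paper, you puncture successively while keeping the redundancy exactly $\rho-W_{i-1}+S_{i-1}$, apply Lemma~\ref{lem:hamming-ball} to the punctured code, pad via Lemma~\ref{lem:extension}, pull back, and use monotonicity of the volume for the uniform condition. The only difference is where you normalize: you trim the cumulative set in $\C$ to nullity exactly $S_{i-1}$ by deleting columns and then pad it, whereas the paper normalizes each step's support inside the punctured code, getting nullity exactly $s_i$ from minimality via Lemma~\ref{lem:puncture}.
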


\begin{proof}
If $n\le B$, then $d_t(\C)\le n\le B$. If $\rho+t\le B$, the generalized Singleton bound \cite[Theorem~1]{Wei} gives $d_t(\C)\le\rho+t\le B$. We may therefore assume $n>B$ and $\rho+t>B$.

After the first $i-1$ steps, assume that $W_{i-1}$ coordinates have been punctured and that the kernel of the resulting puncturing map has dimension $S_{i-1}$. The remaining code $\D_{i-1}$ has length $n-W_{i-1}$, dimension $n-\rho-S_{i-1}$, and redundancy $\rho-W_{i-1}+S_{i-1}$.

Since $s_i\le t-S_{i-1}\le n-\rho-S_{i-1}$, \eqref{eq:successive} and Lemma~\ref{lem:hamming-ball} give $d_{s_i}(\D_{i-1})\le \ell_i(s_i+1)$. Let $F_i$ be the support of an $s_i$-dimensional subcode of $\D_{i-1}$ attaining $d_{s_i}(\D_{i-1})$. By Lemma~\ref{lem:puncture}, this support has nullity exactly $s_i$.

If $|F_i|<\ell_i(s_i+1)$, we enlarge it using Lemma~\ref{lem:extension}, applied to $\D_{i-1}$ with $x:=\ell_i(s_i+1)-s_i$. This is possible because $x$ is at most the redundancy $\rho-W_{i-1}+S_{i-1}$ of $\D_{i-1}$; indeed, $W_i-S_i$ is increasing in $i$, so
$W_i-S_i\le W_v-t\le B-t<\rho$.
We may therefore puncture exactly $\ell_i(s_i+1)$ additional coordinates while increasing the dimension of the kernel by exactly $s_i$.

After the last step, the kernel of the composite puncturing map has dimension $S_v=t$ and is supported on $W_v\le B$ coordinates. Hence, $d_t(\C)\le B$.

Finally, since $W_{i-1}\le B$ and $\vol_q(N,\ell)$ is nondecreasing in $N$,
\eqref{eq:successive-uniform} implies \eqref{eq:successive}.
\end{proof}

The following consequence, obtained by taking $\ell_i=2$ at every step, will be useful later.

\begin{corollary}\label{cor:radius-two}
Let $\C$ be an $[n,n-\rho]_q$ linear code, let $t\in[n-\rho]$, and let $L,A\ge1$. If
$t\le A(2^L-1)$, $n>2t+2L$, and
$\vol_q(n-2t-2L,2)>q^{\rho+A-1}$, then
$d_t(\C)\le2t+2L$.
\end{corollary}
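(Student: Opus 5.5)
We will deduce Corollary~\ref{cor:radius-two} from Theorem~\ref{thm:successive}, case~(2) in its uniform form \eqref{eq:successive-uniform}, with $B:=2t+2L$ and every radius equal to $\ell_i=2$. A step with $\ell_i=2$ and dimension $s_i$ punctures exactly $2(s_i+1)$ coordinates. To keep $W_v\le 2t+2L$ with $S_v=t$, we need $\sum_i 2(s_i+1)=2t+2v\le 2t+2L$, so we use at most $v\le L$ steps. We must also choose the $s_i$ so that the exponent in \eqref{eq:successive-uniform} never exceeds $\rho+A-1$.

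Take $v\le L$ steps with dimensions $s_i\le A\,2^{i-1}$, allowing the last one to be smaller, so that $S_v=t$. This is possible because $\sum_{i=1}^L A2^{i-1}=A(2^L-1)\ge t$. The exponent at step $i$ is
$$\rho-W_{i-1}+S_{i-1}+s_i-1=\rho-2S_{i-1}-2(i-1)+S_{i-1}+s_i-1=\rho-S_{i-1}-2(i-1)+s_i-1.$$
A greedy choice with $s_i=\min\{A2^{i-1},\,t-S_{i-1}\}$ gives $S_{i-1}=A(2^{i-1}-1)$ for every step before the last. Hence $s_i-S_{i-1}\le A2^{i-1}-A(2^{i-1}-1)=A$, and the exponent is at most $\rho+A-1-2(i-1)\le\rho+A-1$. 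The hypothesis $\vol_q(n-2t-2L,2)>q^{\rho+A-1}$ therefore implies \eqref{eq:successive-uniform} at every step.

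The remaining conditions of Theorem~\ref{thm:successive} are routine. We have $B\ge t$, and $n>B$ holds by hypothesis. If $\rho+t\le B$, case~(1) already gives $d_t(\C)\le B$, so otherwise case~(2) applies. Either way $d_t(\C)\le 2t+2L$.

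The only delicate point is the bookkeeping of the dimensions $s_i$: the doubling schedule must satisfy both $\sum s_i=t$ with at most $L$ steps and the per-step exponent bound. When the greedy choice stops early, the last step has $s_v<A2^{v-1}$, and then $s_v-S_{v-1}\le A$ still holds, so no step needs separate treatment.
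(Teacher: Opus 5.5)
Your proof is correct and follows the paper's argument: both apply Theorem~\ref{thm:successive} with every radius equal to $2$ and $B=2t+2L$, and bound the exponent $\rho-S_{i-1}-2(i-1)+s_i-1$ by $\rho+A-1$ using a doubling schedule. The only difference is that the paper takes the larger steps $s_i=\min\{t-S_{i-1},\,A+S_{i-1}+2(i-1)\}$, using the $2(i-1)$ slack, whereas your simpler schedule $s_i=\min\{A2^{i-1},\,t-S_{i-1}\}$ ignores that slack, which is harmless because the hypothesis $t\le A(2^L-1)$ already forces termination within $L$ steps.
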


\begin{proof}
If $\rho+t\le2t+2L$, the result follows from the generalized Singleton bound. We may therefore assume $\rho+t>2t+2L$.

Set $S_0=0$ and, as long as $S_{i-1}<t$, let
$s_i:=\min\{t-S_{i-1},A+S_{i-1}+2(i-1)\}$ and $S_i:=S_{i-1}+s_i$.
Before reaching $t$, this gives $S_i=2S_{i-1}+A+2(i-1)$, and hence
$S_i=A(2^i-1)+2(2^i-i-1)$. Since $t\le A(2^L-1)$, the process reaches $S_i=t$ for some $i\le L$.
Take $\ell_i=2$ at every step. Then $W_i=2(S_i+i)\le2t+2L$. Moreover,
$\rho-W_{i-1}+S_{i-1}+s_i-1
=\rho-S_{i-1}-2(i-1)+s_i-1
\le\rho+A-1$.
Thus, with $B=2t+2L$, the assumption
$\vol_q(n-2t-2L,2)>q^{\rho+A-1}$
implies \eqref{eq:successive-uniform} at every step. Therefore, Theorem~\ref{thm:successive}, applied with $B=2t+2L$, gives
$d_t(\C)\le2t+2L$.
\end{proof}

We have thus obtained the main bounds that will be used in the rest of the paper.
For a counterexample, \eqref{eq:length-upper} gives an upper bound on $n$,
while \eqref{eq:cover} and \eqref{eq:coverrough} give lower bounds.
These bounds will be compared in the next section. When they do not yield a
contradiction, Theorem~\ref{thm:successive} and
Corollary~\ref{cor:radius-two} will be used to prove
$d_t(\C)\le2r+2$.

\section{Reduction to finitely many parameters}\label{sec:reduction}

We now use the bounds from Sections~\ref{sec:prelim} and~\ref{sec:putative_code}
to reduce the proof of Conjecture~\ref{conj:packing-covering} to finitely
many parameters. We first combine the known cases with the bounds on the
length of a putative counterexample to settle small redundancy. We then show that
a putative counterexample cannot have large redundancy when $t$ and $r$ are fixed,
or a small gap $r-t$ when $t$ is large. Finally, we settle all orders
$t\ge32$ and bound $r$ for each of the remaining orders. This leaves
finitely many tuples $(q,\rho,t,r)$, which we exclude by exact computation.

Note that the arguments in this section are more technical than those in the
preceding sections. Their purpose is to rule out entire ranges of
parameters, until only a finite set remains. We give the inequalities
used in each reduction and describe the finite calculations in the
appendix.

\subsection{Initial reductions}

We first collect the immediate restrictions that the parameters of a putative counterexample must satisfy.
By Proposition~\ref{prop:inputs}, \eqref{eq:failure-range}, and
\eqref{eq:rate}, every counterexample to
Conjecture~\ref{conj:packing-covering} satisfies
\begin{equation}\label{eq:finite-parameters}
\begin{gathered}
3\le t\le\rho-5,\qquad
t+1\le r\le\left\lfloor\frac{\rho+t-3}{2}\right\rfloor,
\qquad 3r<2\rho,\\
q<r,\qquad q\text{ a prime power},\qquad
(t,r)\notin\{(3,4),(3,5),(4,5),(4,6)\},\\
n\ge \rho+5t-1,\qquad
n\ge\left\lfloor\frac{5\rho}{2}\right\rfloor+1.
\end{gathered}
\end{equation}

The restrictions on $t$, $r$, and $q$ follow from
Proposition~\ref{prop:inputs} and~\eqref{eq:failure-range}. The lower bounds on $n$ follow from~\eqref{eq:rate} and $n=k+\rho$.

For fixed $q,\rho,t,r$, the length $n$ is not a priori bounded from below by
a single explicit integer sharp enough for every argument. We therefore use
a test length $N_0\ge\rho+t$. Such a length is admissible whenever one of the
lower bounds in \eqref{eq:finite-parameters} gives $n\ge N_0$, or whenever
the generalized ball-covering bound gives
$$
\vol_{q^t}(N_0-1,r)<q^{t\rho},
$$
since in this case \eqref{eq:cover} implies $n\ge N_0$. Faster sufficient
tests derived from \eqref{eq:coverrough} are also used; they are listed in
Appendix~\ref{app:test-length}.

At such a test length, there are two ways to exclude a counterexample.
 If
$$
N_0>M(q,\rho,t,r),
$$
then the lower bound $n\ge N_0$ contradicts \eqref{eq:length-upper}.
Otherwise, it is enough to prove
$d_t(\C_0)\le2r+2$ for every $[N_0,N_0-\rho]_q$ code $\C_0$ and then apply
Lemma~\ref{lem:shorten}, which gives the same bound for every possible original
length $n\ge N_0$.

We first use these bounds to settle all codes with redundancy at most $50$.
The remaining parameter tuples are checked by the program described in
Appendix~\ref{app:computations}.

\begin{theorem}\label{thm:finite}
Let $\C$ be an $[n,k]_q$ linear code with redundancy $\rho\le50$,
and let $t\in[\min\{k,\rho\}]$. Then
$$
d_t(\C)\le2R_t(\C)+2.
$$
\end{theorem}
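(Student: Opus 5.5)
We argue by contradiction and reduce the statement to a finite, explicitly checkable list of parameter tuples. Suppose $\C$ is a counterexample for order $t$ with $\rho\le 50$, and write $r=R_t(\C)$. By \eqref{eq:finite-parameters}, the tuple $(q,\rho,t,r)$ satisfies
\[
3\le t\le\rho-5,\qquad t+1\le r\le\lfloor(\rho+t-3)/2\rfloor,\qquad 3r<2\rho,
\]
$q<r$ a prime power, and $(t,r)\notin\{(3,4),(3,5),(4,5),(4,6)\}$. Since $r<2\rho/3\le 33$, this gives $q\le 32$. Hence only finitely many tuples remain, and the proof amounts to excluding each one by one of the two mechanisms described just before the statement.

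For each tuple we first compute a test length $N_0\ge\rho+t$ that is a certified lower bound on $n$. We take the largest value obtainable from three sources: the bounds $n\ge\rho+5t-1$ and $n\ge\lfloor5\rho/2\rfloor+1$, and the smallest $N$ with $\vol_{q^t}(N,r)\ge q^{t\rho}$ coming from \eqref{eq:cover}. We then compare $N_0$ with $M(q,\rho,t,r)$ from \eqref{eq:M}, evaluating both the minimum over $x$ of the incidence bound \eqref{eq:incidence} and the quotient bound $m-2+U_q(\rho-m+2,t+1)$ of Proposition~\ref{prop:weighted-length}. If $N_0>M(q,\rho,t,r)$, then \eqref{eq:length-upper} is contradicted and the tuple is excluded. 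We expect this to dispose of most tuples, in particular those with $\rho-r$ large relative to $r$, where the ball-covering lower bound grows like $q^{t(\rho-r)/r}$.

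For the surviving tuples we search for a sequence of pairs $(\ell_i,s_i)$ with $S_v=t$ and $W_v\le 2r+2$ satisfying the uniform condition \eqref{eq:successive-uniform} at length $N_0$. Theorem~\ref{thm:successive} with $B=2r+2$ then gives $d_t(\C_0)\le 2r+2$ for every $[N_0,N_0-\rho]_q$ code $\C_0$, and Lemma~\ref{lem:shorten} transfers this bound to $\C$, contradicting \eqref{eq:failure}. The search is a small dynamic program over the state $(S_{i-1},W_{i-1})$. At each state we try all $\ell_i\ge1$ and $s_i\le t-S_{i-1}$ with $W_{i-1}+\ell_i(s_i+1)\le 2r+2$, checking each condition in exact integer arithmetic. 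As a quick first attempt we also try Corollary~\ref{cor:radius-two}. If the uniform condition is too lossy at $N_0$, we fall back on the exact condition \eqref{eq:successive}; since $\vol_q$ is monotone in the length, verifying it at $N_0$ suffices for all $n\ge N_0$ after shortening.

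The main obstacle is that the puncturing step is not guaranteed to succeed. The hard cases are tuples with small $q$ (mainly $q=2,3$), $t=3$, and $r$ close to the upper limit $(\rho+t-3)/2$. There the slack $2r+2-2t$ is small, while the length bound $M$ is only about $2^{\rho-2r+3}$, by \eqref{eq:binary-length}, and so may exceed the available $N_0$. If some tuple survives both tests, the first remedy is to sharpen $N_0$ using the exact volume inequality \eqref{eq:cover} rather than \eqref{eq:coverrough}. The second is to use non-uniform radii, for instance a large $\ell_1$ with $s_1=1$ followed by $\ell_i=2$. If neither works, we would refine $M$ by applying Lemma~\ref{lem:weighted-line} to higher-dimensional quotients. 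We expect the exact computation to certify every tuple, with $\rho\le50$ keeping the enumeration of tuples and the dynamic program cheap.
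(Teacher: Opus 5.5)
\textbf{There is a gap in the certification stage.} Your reduction matches the paper's proof: the same finite tuple set from \eqref{eq:finite-parameters}, a certified test length $N_0$, exclusion when $N_0>M(q,\rho,t,r)$, and otherwise a bound $d_t\le2r+2$ for every $[N_0,N_0-\rho]_q$ code, transferred by Lemma~\ref{lem:shorten}.

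For tuples surviving the length test, your only certificate is a Hamming-ball puncturing sequence, i.e.\ Theorem~\ref{thm:successive}, with Corollary~\ref{cor:radius-two} as a special case. You close by saying you \emph{expect} this to work for every tuple. But the theorem is exactly the claim that every tuple is certified, and nothing in the proposal establishes it.

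The paper does not use this mechanism alone for $\rho\le50$. Its $19{,}459$ surviving tuples are settled by the recursive generalized-weight bounds $B_t(q,N,h)$ of Appendix~\ref{app:weights} (Theorem~\ref{thm:recursive}), used together with Theorem~\ref{thm:successive}. That recursion contains ingredients your search cannot reproduce:
\begin{itemize}
\item classical necessary conditions on $d_1$ of residual codes: sphere-packing, the Griesmer bound, and the binary even-distance refinement \eqref{eq:even}, which can give an odd bound $2\ell+1$ where a Hamming ball only gives $2\ell+2$;
\item the generalized Singleton bound applied to residual codes;
\item a case split over every possible value $z$ of $d_s$ via Lemma~\ref{lem:puncture}, rather than padding to the worst case.
\end{itemize}
Theorem~\ref{thm:successive} only certifies support bounds of the form $\sum_i\ell_i(s_i+1)$. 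When the budget $2r+2$ is tight, this coarser granularity can decide a tuple, so you have no guarantee your search closes all of them.

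\textbf{Your fallbacks do not supply a new mechanism.} You already define $N_0$ from the exact inequality \eqref{eq:cover}, so ``sharpening'' it via \eqref{eq:cover} changes nothing. Non-uniform radii are already part of your dynamic program. A higher-dimensional analogue of Lemma~\ref{lem:weighted-line} is not formulated.

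\textbf{Your diagnosis of the hard cases is also off.} When $r$ is near $(\rho+t-3)/2$, the quantity $\gamma=\rho-2r+t-2$ is small, so $M$ is small and the length test is at its strongest. The survivors are the tuples with large $\gamma$, i.e.\ $\rho$ noticeably above $2r$.

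To turn the proposal into a proof, either run the search and verify that every surviving tuple admits a valid sequence, or enlarge your certificate to the recursion of Appendix~\ref{app:weights}, as the paper does.
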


\begin{proof}
Set $r:=R_t(\C)$. If the parameters do not satisfy
\eqref{eq:finite-parameters}, the result follows from
Proposition~\ref{prop:inputs} or the generalized Singleton bound.
We complete the proof by an exact computer calculation over the
remaining parameter tuples $(q,\rho,t,r)$ with $\rho\le50$. For each tuple, we establish a lower bound $n\ge N_0$
as described above. We then check one of two inequalities. If
$N_0>M(q,\rho,t,r)$, the tuple is excluded by
\eqref{eq:length-upper}. Otherwise, the generalized-weight bounds of
Appendix~\ref{app:weights}, together with Theorem~\ref{thm:successive},
show that every $[N_0,N_0-\rho]_q$ code satisfies $d_t\le2r+2$.
Lemma~\ref{lem:shorten} then gives the same bound for $\C$.
These exact calculations exclude all $59{,}086$ admissible tuples:
$39{,}627$ by the length comparison and $19{,}459$ by generalized-weight
bounds. No codes are enumerated.
\end{proof}

\subsection{Large redundancy and small radius gaps}

We now prove two results that will reduce the parameters of a putative
counterexample. Set
$$a:=r-t.$$
First, we show that if $a$ is fixed,
then $t$ cannot be too large. The proof compares the upper and lower
bounds on the length obtained in the previous section. This gives a
lower bound on $\rho/r$ that allows us to apply
Corollary~\ref{cor:radius-two}.

We then show that, for each fixed pair $(t,r)$, the redundancy $\rho$
of a putative counterexample is bounded. Together, these results leave only
finitely many parameters for each fixed value of $a$. We will use
this to settle the first five positive values of $a$.

\begin{theorem}\label{thm:diagonal}
Let $\C$ be an $[n,k]_q$ linear code with redundancy $\rho$, let $t\in[\min\{k,\rho\}]$, and set $r:=R_t(\C)$ and $a:=r-t\ge1$. If
$t\ge\max\{132,5a^2\}$, then $d_t(\C)\le2r+2$.
\end{theorem}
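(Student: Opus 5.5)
We argue by contradiction. Assume $\C$ is a counterexample for order $t$, so $d_t(\C)\ge2r+3$ and the restrictions \eqref{eq:finite-parameters} hold; in particular $q<r$ and $\rho\ge 2r-t+3=t+2a+3$. The strategy announced before the statement is to compare the upper bound \eqref{eq:length-upper} with the lower bound \eqref{eq:coverrough}. This will show that $\rho/r$ is bounded below by a quantity tending to $2$ (or larger) as $t\to\infty$ with $a$ small. Then Corollary~\ref{cor:radius-two} will produce a $t$-dimensional subcode on at most $2t+2L\le 2r+2$ coordinates, that is, with $L\le a+1$.

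\textbf{Step 1: an upper bound on $n$ in terms of $\gamma$.} We use Proposition~\ref{prop:weighted-length} with $h=\gamma+2=\rho-2r+t=\rho-t-2a$ and $B=t+1$. A crude form of \eqref{eq:U} gives $U_q(h,t+1)\le \bigl(\tfrac{t+1}{q+1}+1\bigr)P_h\le (t+2)q^{h}$. Hence
\begin{equation*}
n\le 2r+(t+2)q^{\rho-t-2a}.
\end{equation*}

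\textbf{Step 2: a lower bound on $n$.} From \eqref{eq:coverrough},
\begin{equation*}
n>\frac{r}{\mathrm e}\,q^{t(\rho-r)/r}.
\end{equation*}
Comparing the two bounds and taking $\log_q$ gives, up to an additive $O(\log t)$ error,
\begin{equation*}
\frac{t(\rho-r)}{r}\le \rho-t-2a.
\end{equation*}
Writing $r=t+a$ and rearranging, this becomes $a\rho\gtrsim t(t+a)+\dots$, so $\rho$ is of order $t^2/a$. Since $t\ge 5a^2$, we get $\rho\ge 5a r$ roughly, and in particular $\rho$ is much larger than $2r$. The condition $t\ge132$ is needed so that the $\log_q(t+2)$ and $\log_q(r/\mathrm e)$ error terms are absorbed; I would keep them explicit using $q\ge2$.

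\textbf{Step 3: applying Corollary~\ref{cor:radius-two}.} We take $L:=a+1$ and choose $A$ as small as possible with $t\le A(2^{a+1}-1)$, that is, $A=\ceil{t/(2^{a+1}-1)}$. We must check two conditions. The first is $n>2t+2a+2=2r+2$, which is immediate from \eqref{eq:finite-parameters}. The second is
\begin{equation*}
\vol_q(n-2r-2,2)\ge\binom{n-2r-2}{2}>q^{\rho+A-1}.
\end{equation*}
This is the \emph{main obstacle}: the available lower bound on $n$ is only about $q^{t(\rho-r)/r}$, so we need $2t(\rho-r)/r>\rho+A$, i.e.\ $\rho(t-a)>2tr+Ar$ up to lower-order terms. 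Using $\rho\gtrsim t^2/a$ from Step 2, the left side is about $t^3/a$, while the right side is about $2t^2+tA$. This comfortably holds when $a$ is small compared to $t$, and $t\ge5a^2$ is precisely what should make the margin work.

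If the crude estimate in Step 2 is too lossy for small $a$, I would instead use the sharper lower bound \eqref{eq:cover} directly, via $\vol_{q^t}(n,r)\le\binom nr q^{tr}$. The delicate point is the bookkeeping of the $O(\log t)$ terms in both exponent comparisons. I would first prove the clean inequality $\rho\ge (t^2+\dots)/a$ as a lemma, then substitute it into the volume condition and verify the result using $t\ge\max\{132,5a^2\}$ and $2\le q<r$.

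Finally, Corollary~\ref{cor:radius-two} gives $d_t(\C)\le2t+2L=2r+2$, which contradicts \eqref{eq:failure}.
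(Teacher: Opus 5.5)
Your overall plan is the right one: bound $\rho/r$ from below by comparing length bounds, then apply Corollary~\ref{cor:radius-two} with $L=a+1$ and $A=\lceil t/(2^{a+1}-1)\rceil$. There is, however, a genuine gap in Step~2, and Step~3 depends on it.

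\textbf{The gap.} The inequality $\frac{t(\rho-r)}{r}\le\rho-t-2a$ does not rearrange to $\rho\gtrsim t^2/a$. With $\lambda:=\rho/r$ one has the exact identity
\[
(\rho-t-2a)-\frac{t(\rho-r)}{r}=a(\lambda-2).
\]
So the exponent comparison only says $\lambda\ge2$, and with an additive error $E$ in the exponent it only says $a(\lambda-2)\ge-E$. The length comparison can therefore give at best $\lambda>2+c/a$, with $c$ determined entirely by the multiplicative constants in the two bounds on $n$. You cannot ``absorb'' $O(\log t)$ errors, or even $O(1)$ ones.
\begin{itemize}
\item Your bound $n\le2r+(t+2)q^{\gamma+2}$ is off by a factor of about $q^2$ from what is needed. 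It only yields $q^{a(2-\lambda)}<\mathrm e(t+2)/r+o(1)$, which does not even give $\lambda>2$.
\item Even the exact weighted-line value $U_2(\gamma+2,t+1)\approx\frac43t\,2^{\gamma}$, which is sharp by Lemma~\ref{lem:weighted-line}, gives only $a(\lambda-2)\gtrsim\log_2(3/\mathrm e)\approx0.14$ when $q=2$.
\end{itemize}
Meanwhile the Step~3 condition $2t(\lambda-1)+2>\rho+A-1$ is exactly $(t-a)(\lambda-2)>2a+A-3$ with $A\approx t/(2^{a+1}-1)$. So $\lambda-2$ must exceed roughly $\frac{1}{2^{a+1}-1}+\frac{2a}{t}$, which is about $1/3$ for $a=1$. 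The margin is therefore not comfortable, and the argument fails for small $a$. Your fallback to \eqref{eq:cover} does not help: the lower bound on $n$ is already tight up to a factor tending to $1$, so the gain must come from the upper bound.

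\textbf{The missing ingredient} is a sharper upper bound on $n$. Apply Lemma~\ref{lem:incidence} with $x=m-b$, where $m=2r-t+2$ and $b$ is the least positive integer with $q^b\ge t$. This gives $n\le m+(t+b+1)q^{\gamma}$. For $t\ge132$ one has $b+1\le t/14$ and, via $3r<2\rho$, $\gamma>t/3$, so this becomes $n\le\frac{13}{12}tq^{\gamma}$.

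Since $t(\lambda-1)-\gamma=2+a(2-\lambda)$, assuming $\lambda\le2+\frac{2}{5a}$ gives $n>\frac{t}{\mathrm e}2^{8/5}q^{\gamma}>\frac{12}{11}tq^{\gamma}$, a contradiction. Hence $\lambda>2+\frac{2}{5a}$. Substituting this into the Step~3 condition leaves the margin
\[
t\left(\frac{2}{5a}-\frac{1}{2^{a+1}-1}\right)-2a+\frac85.
\]
This is positive for $t\ge132$ when $a\le5$, and for $t\ge5a^2$ when $a\ge6$. That is the actual role of the hypothesis $t\ge5a^2$: it makes $2t/(5a)\ge2a$. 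It does not yield $\rho\ge5ar$.
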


\begin{proof}
Suppose that $d_t(\C)\ge2r+3$. We compare the upper bound on $n$
from Lemma~\ref{lem:incidence} with the lower bound
\eqref{eq:coverrough}. This will show that $\rho/r$ is large enough
to apply Corollary~\ref{cor:radius-two}.

By Proposition~\ref{prop:inputs}, we may assume $3r<2\rho$. Set
$m:=2r-t+2=t+2a+2$ and $\lambda:=\rho/r$. The assumption
$t\ge5a^2$ gives
$$
\frac at\le\frac1{\sqrt{5t}}\le\frac1{25}.
$$
Since $t\ge132$, it follows that $m<2t$ and
$$
\rho-m>\frac t2-\frac a2-2>\frac t3.
$$

We first bound $n$ from above. Let $b$ be the least positive integer
such that $q^b\ge t$. This choice makes the bound in
Lemma~\ref{lem:incidence} particularly simple. We have
\begin{equation}\label{eq:b-log}
b+1\le\frac t{14}.
\end{equation}
Indeed, if $2^u\le t<2^{u+1}$, then $u\ge7$ and $b+1\le u+2$.
The inequality $14(u+2)\le2^u$ holds for $u=7$ and remains true
as $u$ increases.

Apply Lemma~\ref{lem:incidence} with parameter $m-b$. Since
$q^b\ge t$ and $b\le t-1$, we obtain
$$
n\le m+\frac{t+b-1}{1-q^{-b}}q^{\rho-m} \le m+\frac{t+b-1}{1-\frac1t}q^{\rho-m}=m+\frac{t}{t-1}(t+b-1)q^{\rho-m}
\le m+(t+b+1)q^{\rho-m}.
$$
Moreover, $m<2t$ and $\rho-m>t/3$ give
$mq^{m-\rho}<2t\,2^{-t/3}<t/84$. Together with
\eqref{eq:b-log}, this yields
\begin{equation}\label{eq:diagonal-upper}
n\le\frac{13}{12}tq^{\rho-m}.
\end{equation}

We now compare this upper bound with the lower bound
\eqref{eq:coverrough}, which gives
$$
n>\frac r{\mathrm e}q^{t(\lambda-1)}.
$$
Suppose that $\lambda\le2+2/(5a)$. Since $\rho=r\lambda=(t+a)\lambda$,
we have
$$
t(\lambda-1)-(\rho-m)=2+a(2-\lambda)\ge\frac85.
$$
Using $r\ge t$, $q\ge2$, $2^{8/5}>3$, and
$\mathrm e<11/4$, we find
$$
n>\frac{t}{\mathrm e}q^{\rho-m}2^{8/5}
>\frac{12}{11}tq^{\rho-m}
>\frac{13}{12}tq^{\rho-m}.
$$
This contradicts \eqref{eq:diagonal-upper}. Hence,
\begin{equation}\label{eq:x-threshold}
\lambda>2+\frac{2}{5a}.
\end{equation}

We use this inequality to check the conditions of
Corollary~\ref{cor:radius-two}. Set
$L:=a+1$ and $B:=2t+2L=2r+2$. Thus, the bound given by
the corollary is exactly the bound we need. Take
$$
A:=\left\lceil\frac{t}{2^{a+1}-1}\right\rceil,
$$
the smallest integer for which $t\le A(2^L-1)$. In particular,
$A\le t/(2^{a+1}-1)+1$.

Since $\lambda>2$, \eqref{eq:coverrough} gives
$n>(r/\mathrm e)q^{t(\lambda-1)}>(r/3)2^t>2B$.
Therefore, $N:=n-B>n/2$. Using $q-1\ge q/2$ and
\eqref{eq:coverrough} again, we obtain
$$
\vol_q(N,2)
\ge\binom N2(q-1)^2
>\frac{n^2q^2}{64}
>\frac{r^2}{576}q^{2t(\lambda-1)+2}
\ge q^{2t(\lambda-1)+2},
$$
where the last inequality holds because $r\ge t\ge132$.

It remains to show that $2t(\lambda-1)+2>\rho+A-1$. By
\eqref{eq:x-threshold}, $t-a>0$, and the bound on $A$,
\begin{align*}
2t(\lambda-1)+2-(\rho+A-1)
&=(t-a)\lambda-2t+3-A\\
&>
t\left(\frac{2}{5a}-\frac{1}{2^{a+1}-1}\right)
-2a+\frac85.
\end{align*}
The coefficient of $t$ is positive, since $5a<2^{a+2}-2$ for every $a\ge1$. For $1\le a\le5$, the
right-hand side is positive at $t=132$: its values are,
respectively, $42/5$, $36/7$, $22/5$, $394/155$, and $34/525$.
It is therefore positive for every $t\ge132$. For $a\ge6$,
the assumption $t\ge5a^2$ gives the lower bound
$$
\frac85-\frac{5a^2}{2^{a+1}-1}>0.
$$
The last inequality holds at $a=6$, and
$a^2/(2^{a+1}-1)$ decreases thereafter.

Thus, $\vol_q(n-B,2)>q^{\rho+A-1}$. All the conditions of
Corollary~\ref{cor:radius-two} hold, so
$d_t(\C)\le B=2r+2$, a contradiction.
\end{proof}

Theorem~\ref{thm:diagonal} bounds $t$ when the gap $r-t$ is fixed. We now bound the redundancy $\rho$ of a counterexample for every fixed pair $(t,r)$.
More precisely, we give an explicit threshold above which a single
Hamming ball yields the desired bound on $d_t(\C)$.

\begin{theorem}\label{thm:tail}
Let $\C$ be an $[n,k]_q$ linear code with redundancy $\rho$, let
$3\le t\le\min\{k,\rho\}$, and set $r:=R_t(\C)$. Define
$$
\ell:=\left\lfloor\frac{2r+2}{t+1}\right\rfloor
$$
and
$$
D(t,r):=
\left\lceil
\frac{r\bigl((\ell+1)t+\ell-1\bigr)}{\ell t-r}
\right\rceil.
$$
Then $\ell t>r$. If $\rho\ge D(t,r)$, then
$d_t(\C)\le2r+2$.
\end{theorem}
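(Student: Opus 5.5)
First I would verify $\ell t>r$. By definition $\ell(t+1)>2r+2-(t+1)=2r-t+1$, which gives $\ell t>2r-t+1-\ell$. Since $r\ge t$ we also have $\ell\ge\lfloor(2t+2)/(t+1)\rfloor=2$. If $\ell\le r-t$, this bound already yields $\ell t>r+1-\ell+(r-t)\ge r$. If instead $\ell>r-t$, then $\ell\ge r-t+1$, and $\ell t\ge(r-t+1)t\ge r$ holds for $t\ge 1$ because $(r-t+1)t-r=(r-t)(t-1)\ge0$. Strictness in the boundary case $r=t$ follows from $\ell\ge2$, since then $\ell t\ge 2t>t=r$. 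In the subcase $\ell\ge r-t+1$ with $r>t$ we have $(r-t)(t-1)>0$ because $t\ge3$. So the claim follows in every case.

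The main part is a single application of Lemma~\ref{lem:hamming-ball}, with $s=t$ and radius $\ell$. It gives $d_t(\C)\le(t+1)\ell\le 2r+2$, provided that
$$\vol_q(n,\ell)>q^{\rho+t-1}.$$
We argue by contradiction, assuming $d_t(\C)\ge 2r+3$. By Proposition~\ref{prop:inputs} we may assume $r>t$, and by the Singleton bound $\rho+t>2r+2$. To check the volume inequality we need a lower bound on $n$, and \eqref{eq:coverrough} supplies one: $\binom nr\ge q^{t(\rho-r)}$, hence $n\ge r\,q^{t(\rho-r)/r}$ up to constants. We also use
$$\vol_q(n,\ell)\ge\binom n\ell(q-1)^\ell\ge\left(\frac{n(q-1)}{\ell}\right)^\ell.$$
A cleaner route is to compare $\binom n\ell$ directly with $\binom nr$ via $\binom n\ell\ge\binom nr^{\ell/r}\cdot c$, using log-concavity of binomial coefficients: $\binom n\ell^{1/\ell}\ge\binom nr^{1/r}$ for $\ell\le r$ when $n\ge 2r$. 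The resulting estimate is $\vol_q(n,\ell)\gtrsim q^{\ell t(\rho-r)/r}$.

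It then suffices to have $\ell t(\rho-r)/r>\rho+t-1$ with some slack to absorb the constants. This rearranges to $\rho(\ell t-r)>r(\ell t+t-1)$. The threshold $D(t,r)$ has numerator $r((\ell+1)t+\ell-1)=r(\ell t+t-1)+r\ell$, and the extra $r\ell$ supplies exactly the slack: $\rho(\ell t-r)\ge r(\ell t+t-1)+r\ell$. So I would set things up so that the loss from the binomial estimate is at most a factor $q^{\ell}$. Concretely, I would use $\binom nr\le n^r/r!$ together with $\binom n\ell\ge(n-\ell)^\ell/\ell!$. Combining these gives $\binom n\ell^{r}\ge$ something like $\binom nr^{\ell}\cdot(r!/\ell!^{r/\ell})\cdot(1-\ell/n)^{\ell r}$. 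I would then show that this, multiplied by $(q-1)^{\ell r}$, is at least $q^{-\ell r}$ times $\binom nr^{\ell}$, or more simply use $(q-1)\ge q/2$ and $r!\ge(\ell!)^{r/\ell}$. I expect this constant bookkeeping to be the main obstacle: making the loss fit exactly into the $r\ell$ slack in $D(t,r)$. The first thing I would try is the cruder bound $\vol_q(n,\ell)\ge\binom n\ell(q-1)^\ell$ with $(q-1)^\ell\ge q^{\ell}/2^\ell$, together with $\binom n\ell\ge(n/\ell)^\ell$ and $n\ge (r/\mathrm e)q^{t(\rho-r)/r}$. This yields $\vol\ge (r(q-1)/(\mathrm e\ell))^\ell q^{\ell t(\rho-r)/r}$. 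Since $r\ge\ell$ and one needs $q^{\ell t(\rho-r)/r}\ge q^{\rho+t-1-\ell}$ roughly, this should give the extra $r\ell$ after multiplying by $r$. Borderline cases with $r(q-1)<\mathrm e\ell$ occur only for small $r$, and those can be handled by the known small cases of Proposition~\ref{prop:inputs}.

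Finally, I would note that $n>(t+1)\ell$ and $t\le k$ hold, so Lemma~\ref{lem:hamming-ball} applies. This concludes the contradiction and gives $d_t(\C)\le 2r+2$.
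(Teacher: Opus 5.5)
Your overall plan matches the paper's: a single application of Lemma~\ref{lem:hamming-ball} with $s=t$ and radius $\ell$, the lower bound on $n$ from \eqref{eq:coverrough}, and the observation that the term $r\ell$ in the numerator of $D(t,r)$ buys exactly a factor $q^{\ell}$ of slack. Your proof of $\ell t>r$ is a valid alternative to the paper's $r=ut+v$ argument. There is a gap, however: the decisive inequality $\vol_q(n,\ell)>q^{\rho+t-1}$ is never established, and the route you commit to is justified incorrectly. From $\vol_q(n,\ell)>\bigl(r(q-1)/(\mathrm e\ell)\bigr)^\ell q^{\ell t(\rho-r)/r}$, the prefactor does not need to be at least $1$. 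It only needs to be at least $q^{-\ell}$, which amounts to $\mathrm e\ell\le rq(q-1)$. Your justification ``since $r\ge\ell$'' does not give this when $q=2$; it only yields $2r/(\mathrm e\ell)\ge2/\mathrm e<1$. Your fallback is also false. For $q=2$ and $t=3$ one has $\ell=\lfloor(r+1)/2\rfloor\ge r/2>r/\mathrm e$ for \emph{every} $r$, and similarly for all $r\ge13$ when $t=4$. So the ``borderline'' cases are not confined to small $r$, and Proposition~\ref{prop:inputs} says nothing about $q=2<r$. The missing ingredient is the bound $\ell\le2(r+1)/(t+1)\le2r/t\le2r/3$. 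It gives $rq(q-1)/(\mathrm e\ell)\ge3/\mathrm e>1$ for every $q\ge2$, so there are no exceptional cases at all. This is exactly how the paper closes the argument: $r/(6\ell)\ge t/12\ge1/4$, hence $(nq/(2\ell))^\ell>q^{\ell t(\rho-r)/r-\ell}$.

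The ``cleaner route'' you mention and then abandon actually finishes the proof, with no constants to absorb. Since $k\mapsto\log\binom{n}{k}$ is concave and vanishes at $k=0$, the quantity $\binom{n}{k}^{1/k}$ is nonincreasing for $1\le k\le n$. Hence $\binom{n}{\ell}\ge\binom{n}{r}^{\ell/r}$ holds exactly whenever $\ell\le r\le n$; there is no factor $c$, and $n\ge2r$ is not needed. Combining this with $\binom{n}{r}\ge q^{t(\rho-r)}$ gives $\vol_q(n,\ell)\ge\binom{n}{\ell}\ge q^{\ell t(\rho-r)/r}\ge q^{\rho+t-1+\ell}>q^{\rho+t-1}$ whenever $\rho\ge D(t,r)$. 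This argument is shorter than the paper's. It uses the $r\ell$ slack only to get strictness, whereas the paper spends all of the $q^{\ell}$ slack absorbing the losses from $(q-1)\ge q/2$, $\mathrm e<3$ and $\binom{n}{\ell}\ge(n/\ell)^\ell$. Finally, the contradiction framing is unnecessary: \eqref{eq:coverrough} holds for every code, and the paper's proof is direct.
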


\begin{proof}
We want to apply Lemma~\ref{lem:hamming-ball} with $s=t$.
The choice of $\ell$ gives $(t+1)\ell\le2r+2$, so the conclusion
of that lemma will be the bound we need.

We first check that $\ell t>r$. Write $r=ut+v$, where
$u\ge1$ and $0\le v<t$. The integer $\ell_0:=u+1$ satisfies
$$
2r+2-\ell_0(t+1)=(u-1)(t-1)+2v\ge0.
$$
Hence, $\ell\ge \ell_0>r/t$, as claimed. We also have
$$
\ell\le\frac{2(r+1)}{t+1}\le\frac{2r}{t}
$$
because $r\ge t$. In particular, $\ell\le r$ and
$r/(6\ell)\ge t/12\ge1/4$.

We now bound the volume of a Hamming ball of radius $\ell$.
Since $t\le k$ and $r\le\rho$, we have
$n\ge\rho+t\ge r+t>\ell$. Also,
$$
\binom{n}{\ell}
=\prod_{i=0}^{\ell-1}\frac{n-i}{\ell-i}
\ge\left(\frac{n}{\ell}\right)^\ell:
$$
each factor is at least $n/\ell$ because $n\ge \ell$.
Using $q-1\ge q/2$, $\mathrm e<3$, and
\eqref{eq:coverrough}, we obtain
\begin{align*}
\vol_q(n,\ell)
&\ge\binom{n}{\ell}(q-1)^\ell\ge\left(\frac{nq}{2\ell}\right)^\ell>\left(\frac{r}{6\ell}\right)^\ell
q^{\ell t(\rho-r)/r+\ell}
\ge q^{\ell t(\rho-r)/r-\ell}.
\end{align*}
For the last inequality, we used
$(r/(6\ell))^\ell\ge4^{-\ell}\ge q^{-2\ell}$.

To apply Lemma~\ref{lem:hamming-ball} with $s=t$, we need
$\vol_q(n,\ell)>q^{\rho+t-1}$. The preceding bound gives this
whenever
$$
\frac{\ell t(\rho-r)}{r}-\ell\ge\rho+t-1,
$$
or, equivalently,
$$
\frac{\ell t-r}{r}\rho\ge(\ell+1)t+\ell-1.
$$
Since $\ell t>r$, the last inequality holds when
$\rho\ge D(t,r)$. Lemma~\ref{lem:hamming-ball} therefore gives
$$
d_t(\C)\le(t+1)\ell\le2r+2,
$$
as required.
\end{proof}

Theorems~\ref{thm:diagonal} and~\ref{thm:tail} leave only finitely many
parameters for each fixed positive value of $a=r-t$. We now use this
to settle $1\le a\le5$.

\begin{theorem}\label{thm:fivegaps}
Let $\C$ be an $[n,k]_q$ linear code with redundancy $\rho$,
and let $t\in[\min\{k,\rho\}]$. If $R_t(\C)\le t+5$, then $d_t(\C)\le2R_t(\C)+2$.
\end{theorem}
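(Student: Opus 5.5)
The argument is a reduction to a finite computation, organized by the gap $a:=R_t(\C)-t=r-t$. If $a=0$, the statement is the case $r=t$ of Proposition~\ref{prop:inputs}. If $t\le 2$ or $\rho\le 50$, we conclude by Proposition~\ref{prop:inputs} or Theorem~\ref{thm:finite}. So assume $t\ge3$, $1\le a\le5$, $\rho\ge51$, and, arguing by contradiction, $d_t(\C)\ge 2r+3$. Then the parameters satisfy \eqref{eq:finite-parameters}, so $q<r$, $3r<2\rho$, and the pairs $(3,4),(3,5),(4,5),(4,6)$ are excluded.

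\textbf{Step 1: bound $t$.} Theorem~\ref{thm:diagonal} settles every $t\ge\max\{132,5a^2\}$. For $1\le a\le5$ this threshold is $132$, since $5a^2\le125$. Hence we may assume $3\le t\le131$, which gives finitely many pairs $(t,r)=(t,t+a)$.

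\textbf{Step 2: bound $\rho$.} For each such pair, Theorem~\ref{thm:tail} settles all $\rho\ge D(t,r)$. With $r=t+a$ and $t\ge a+1$ we have $\ell=\lfloor (2t+2a+2)/(t+1)\rfloor$, so $\ell=2$ as soon as $2a<t+1$, i.e.\ for almost all pairs. Then $\ell t-r=t-a$ and
\[
D(t,r)=\left\lceil \frac{(t+a)(3t+1)}{t-a}\right\rceil,
\]
which is roughly $3t+O(a)$. For the few small $t$ with $\ell=3$, $D$ is also explicit. So $\rho$ is confined to $51\le\rho<D(t,r)$, and since $q<r\le136$, only finitely many prime powers $q$ remain. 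This leaves an explicit finite set of tuples $(q,\rho,t,r)$.

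\textbf{Step 3: the finite check.} Each remaining tuple is handled exactly as in the proof of Theorem~\ref{thm:finite}. First fix a test length $N_0\ge\rho+t$ justified by \eqref{eq:finite-parameters}, or by $\vol_{q^t}(N_0-1,r)<q^{t\rho}$ via \eqref{eq:cover}. Then either verify $N_0>M(q,\rho,t,r)$, which contradicts \eqref{eq:length-upper}, or find pairs $(\ell_i,s_i)$ with $S_v=t$ and $W_v\le 2r+2$ satisfying \eqref{eq:successive-uniform} at length $N_0$ (or use Corollary~\ref{cor:radius-two} with $L=a+1$). Theorem~\ref{thm:successive} then gives $d_t\le 2r+2$ for every $[N_0,N_0-\rho]_q$ code, and Lemma~\ref{lem:shorten} transfers this to $\C$. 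This only requires checking integer inequalities; no codes are enumerated.

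\textbf{Main obstacle.} The difficulty is the size and tightness of the finite check, especially for $t=3$ and $q=2$ with $a$ near $5$. For large $t$ the radius gap gives lots of room: $\rho/r$ is forced close to $2$ from above, and the $\ell=2$ puncturing only barely works. If the generic choice $\ell_i=2$ fails for some tuple, I would first search over sequences $(\ell_i,s_i)$ that mix radii $2$ and $3$. Failing that, I would sharpen $N_0$ by using the exact volume $\vol_{q^t}$ rather than \eqref{eq:coverrough}, and the binary bound \eqref{eq:binary-length} when $q=2,t=3$. I expect this to clear every remaining tuple.
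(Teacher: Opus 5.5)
Your proposal follows the paper's proof essentially step for step. The same reductions (Proposition~\ref{prop:inputs}, Theorem~\ref{thm:finite}, Theorem~\ref{thm:diagonal} via $5a^2\le125<132$, and Theorem~\ref{thm:tail}) leave $3\le t\le131$, $q<r$ and $51\le\rho<D(t,r)$, and the paper then excludes these tuples by the same kind of exact computation you describe: a proved test length compared with $M(q,\rho,t,r)$, radius-two puncturing with $L=a+1$ via Lemma~\ref{lem:app-fixed-radius}, and transfer by Lemma~\ref{lem:shorten}.

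There are two small remarks. First, for the last $661$ tuples the paper uses a sharper test length together with the recursive generalized-weight bounds of Appendix~\ref{app:weights} (Hamming, binary even-distance and Griesmer bounds combined with puncturing over every possible value of $d_s$); your appeal to the proof of Theorem~\ref{thm:finite} covers this, but your explicit fallback list does not. Second, $\ell$ in Theorem~\ref{thm:tail} can also equal $4$ (e.g.\ $t=3$, $a\in\{4,5\}$), which is harmless since every pair with $\ell\ge3$ has $D(t,r)<51$ and is therefore already vacuous.
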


\begin{proof}
Set $r:=R_t(\C)$. The case $r=t$ and the orders $t\le2$ are
covered by Proposition~\ref{prop:inputs}. We may therefore assume
$t\ge3$ and set $a:=r-t\in\{1,\ldots,5\}$.

If $t\ge132$, the result follows from
Theorem~\ref{thm:diagonal}, since $5a^2\le125$. If $\rho\le50$,
it follows from Theorem~\ref{thm:finite}, while if
$\rho\ge D(t,r)$ it follows from Theorem~\ref{thm:tail}.
Finally, the case $q\ge r$ is covered by
Proposition~\ref{prop:inputs}.

Consequently, a counterexample must satisfy
\begin{equation}\label{eq:gap-range}
\begin{gathered}
1\le a\le5,\qquad 3\le t\le131,\qquad r=t+a,\qquad
q<r\text{ a prime power},\\
\max\left\{51,\,2r-t+3,\,
\left\lfloor\frac{3r}{2}\right\rfloor+1\right\}
\le\rho<D(t,r).
\end{gathered}
\end{equation}
Here $\rho\ge2r-t+3$ follows from
\eqref{eq:failure-range}, while
$\rho\ge\lfloor3r/2\rfloor+1$ follows from $3r<2\rho$.
The four pairs excluded in \eqref{eq:finite-parameters} do not occur in \eqref{eq:gap-range}, since $D(t,r)\le39$ for those pairs and the latter range requires $\rho\ge51$. The other restrictions on $(q,\rho,t,r)$ in \eqref{eq:finite-parameters} follow from \eqref{eq:gap-range}.

We complete the proof by an exact calculation over the
$2{,}396{,}360$ tuples $(q,\rho,t,r)$ in
\eqref{eq:gap-range}. For each tuple, the program computes the
lower bound $n\ge N_0$ of Appendix~\ref{app:test-length}, which
follows from \eqref{eq:finite-parameters} and \eqref{eq:coverrough}.
In $1{,}353{,}595$ cases $N_0>M(q,\rho,t,r)$, and the tuple is
excluded by \eqref{eq:length-upper}. In $1{,}033{,}256$ cases the
conditions of Lemma~\ref{lem:app-fixed-radius} hold at length $N_0$,
with $\ell=2$, $L=a+1$ and $A$ as in Appendix~\ref{app:test-length}, so
every $[N_0,N_0-\rho]_q$ code satisfies $d_t\le2r+2$. For the remaining
$9{,}509$ tuples the input records supply a sharper test length $N$,
justified as in Appendix~\ref{app:test-length}. In $8{,}848$ of these
cases $N>M(q,\rho,t,r)$. In the other $661$ cases, the
generalized-weight bounds of Appendix~\ref{app:weights} show that every
$[N,N-\rho]_q$ code satisfies $d_t\le2r+2$. Whenever the bound has been
proved at a test length, Lemma~\ref{lem:shorten} gives the same bound
for $\C$. Thus, no tuple in \eqref{eq:gap-range} can correspond to a
counterexample. No codes are enumerated.
\end{proof}

We have therefore excluded every counterexample with $r-t\le5$.
It remains to consider gaps $r-t\ge6$. We first show that all such
counterexamples are excluded when the order $t$ is sufficiently large.

\subsection{The case \texorpdfstring{$t\ge32$}{t >= 32}}\label{sec:highorder}

We now prove the conjecture for all orders $t\ge32$.
By Theorem~\ref{thm:fivegaps}, it remains to consider
$a=r-t\ge6$.

Our first step is to show that a counterexample in this
range must satisfy a stronger lower bound on $\rho/r$
than the bound $3r<2\rho$. The stronger bound
gives the extra factor needed to estimate
Hamming balls after puncturing the code. We will then use
these estimates to prove the conjecture separately
in the ranges
$$r\le2t, \qquad 2t\le r\le3t, \qquad r\ge3t.$$

\begin{lemma}\label{lem:ratio-new}
Let $\C$ be an $[n,k]_q$ linear code with redundancy
$\rho$, let $t\in[\min\{k,\rho\}]$, and set
$r:=R_t(\C)$ and $a:=r-t$. If $\C$ is a
counterexample for order $t$, $t\ge32$, and
$a\ge6$, then
\begin{equation}
\frac{\rho}{r}>2+\frac{2}{5a}.
\end{equation}
\end{lemma}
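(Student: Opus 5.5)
The plan is to reuse the first half of the proof of Theorem~\ref{thm:diagonal}. That argument only needed $t\ge5a^2$ in order to control the size of $m=2r-t+2$ relative to $\rho-m$, so I will redo it with weaker, $a$-independent estimates. Suppose $\C$ is a counterexample with $t\ge32$ and $a\ge6$. By Proposition~\ref{prop:inputs} we have $3r<2\rho$ and $q<r$. Set $\lambda:=\rho/r$ and $\gamma:=\rho-m$. Then $m=t+2a+2$, and \eqref{eq:failure-range} gives $\gamma\ge1$. Assume, for contradiction, that $\lambda\le2+2/(5a)$. The key identity, exactly as in Theorem~\ref{thm:diagonal}, is
$$
t(\lambda-1)-\gamma=2+a(2-\lambda)\ge\tfrac85 .
$$
Hence \eqref{eq:coverrough} gives the lower bound
$$
n>\frac{r}{\mathrm e}\,q^{\gamma+2+a(2-\lambda)}.
$$
This lower bound will be played against Lemma~\ref{lem:incidence}.

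For the upper bound I apply Lemma~\ref{lem:incidence} with $x=m-b$, where $b$ is the least integer with $q^b\ge r$. Then $b\le\log_2 r+1$, which is much smaller than $t$ since $t\ge32$. This yields
$$
n\le m-b+(t+b-1)\frac{q^b}{q^b-1}q^{\gamma}\le m+\Bigl(1+\frac1{r-1}\Bigr)(t+b-1)q^\gamma .
$$
The point of the choice $a\ge6$ is the following. Since $r=t+a$ and $2^{8/5}/\mathrm e>1.11$, the lower bound is at least $1.11(t+a)q^\gamma$. This exceeds the coefficient $(t+b)q^\gamma$ in the upper bound by roughly $(0.11r+a-b)q^\gamma$, and this excess is positive and of order $r$. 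It then remains to absorb the additive term $m\le 2r$.

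I split into two cases according to $\lambda$.

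\emph{Case $2\le\lambda\le2+2/(5a)$.} Here $\gamma=\lambda r-2r+t-2\ge t-2\ge30$. So $q^\gamma\ge 2^{30}$, and $m$ is negligible against $(0.11r+a-b)q^\gamma$. This gives a contradiction by a direct numerical check.

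\emph{Case $3/2<\lambda<2$.} Here $\gamma$ may be small, possibly $\gamma=1$ when $a$ is large, so the term $m$ is no longer negligible. However, the gain in the exponent is now $2+a(2-\lambda)>2$. The lower bound is therefore at least $(r/\mathrm e)q^{\gamma+2}\ge\frac4{\mathrm e}\,r\,q^\gamma>1.47\,r\,q^\gamma$. Against this I must beat $m+(t+b)q^\gamma$, which reduces to showing $(0.47r+a-b-1)q^\gamma>r+a+2$. This needs $q^\gamma\gtrsim3$.

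I expect the main obstacle to be the very smallest values of $\gamma$, namely $\gamma\in\{1,2\}$ with $q=2$, in the case $\lambda<2$. The plan is as follows. Small $\gamma$ forces $\lambda$ to be much smaller than $2$; indeed $\gamma\ge 1$ and $\lambda\le 2$ link $a(2-\lambda)$ to $t-\gamma-2$. Using $\rho=\lambda r$, I would rewrite the gain as $a(2-\lambda)=t-\gamma-2-(\lambda-1)\cdot 0$ more carefully, i.e.\ $t(\lambda-1)=\gamma+2+a(2-\lambda)$ together with $\lambda>3/2$. This shows $a(2-\lambda)\ge t/2-\gamma-2$, so the lower bound gains a factor $q^{t/2-\gamma}$, which is enormous for $t\ge32$. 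If this estimate turns out to be too lossy, the fallback is to use the bound of Proposition~\ref{prop:weighted-length} instead of Lemma~\ref{lem:incidence} for $\gamma\le2$. In that range $U_q(\gamma+2,t+1)$ is at most a small multiple of $t$. With either bound, every subcase contradicts \eqref{eq:coverrough}, and hence $\lambda>2+2/(5a)$.
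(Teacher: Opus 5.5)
Your proposal is correct and essentially follows the paper's proof. Both argue by contradiction, playing the incidence bound of Lemma~\ref{lem:incidence} at $x=m-b$ against \eqref{eq:coverrough} through the identity $t(\lambda-1)-\gamma=2+a(2-\lambda)$, splitting at $\lambda=2$, and using $\gamma\ge t-2$ to absorb $m$ when $\lambda\ge2$. The differences are minor: you take $q^b\ge r$ rather than $q^b\ge t$, and in the tiny-$\gamma$ subcase (really only $q=2$, $\gamma=1$) you invoke $3r<2\rho$ to get $n>(r/\mathrm e)q^{t/2}$, whereas the paper instead uses $a\ge6$ and $t\ge32$ to obtain $a(2-\lambda)=a(t-2-\gamma)/r\ge1$ whenever $\gamma\le3$.
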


\begin{proof}
Set $m:=2r-t+2$, $\gamma:=\rho-m$, and
$\lambda:=\rho/r$. By \eqref{eq:failure-range},
$\gamma\ge1$. Let $b$ be the least positive
integer such that $q^b\ge t$. Applying
Lemma~\ref{lem:incidence} with parameter $m-b$,
as in the proof of Theorem~\ref{thm:diagonal},
gives
\begin{equation}\label{eq:ratio-cap}
n\le m+(t+b+1)q^\gamma.
\end{equation}
Here $b\le t-1$, so $m-b\ge0$. Moreover,
\begin{equation}\label{eq:ratio-b}
b+1\le\frac t4.
\end{equation}
Indeed, if $2^j\le t<2^{j+1}$, then $j\ge5$ and
$b+1\le j+2\le2^j/4\le t/4$.

We first prove that $\lambda>2$. Suppose that
$\lambda\le2$. Since $\gamma=\rho-2r+t-2$,
the covering bound \eqref{eq:coverrough} gives
\begin{equation}\label{eq:ratio-cover}
nq^{-\gamma}>
\frac r{\mathrm e}q^{2+a(2-\lambda)}.
\end{equation}
If $\gamma\ge4$, then \eqref{eq:ratio-cap},
\eqref{eq:ratio-b}, and $m\le2r$ give
$$
nq^{-\gamma}
\le\frac r8+\frac{5t}{4}
\le\frac{11r}{8}.
$$
On the other hand, \eqref{eq:ratio-cover} gives
$nq^{-\gamma}>4r/\mathrm e>16r/11$,
a contradiction.

If $\gamma\le3$, then
$$
a(2-\lambda)
=\frac{a(t-2-\gamma)}{t+a}
\ge\frac{6(t-5)}{t+6}
\ge1.
$$
The last inequality holds for $t\ge32$.
Consequently \eqref{eq:ratio-cover} gives
$nq^{-\gamma}>8r/\mathrm e>32r/11$,
whereas \eqref{eq:ratio-cap} gives
$$
nq^{-\gamma}
\le\frac m2+\frac{5t}{4}
\le\frac{9r}{4}.
$$
This is again a contradiction. Hence, $\lambda>2$.

Suppose now that
$\lambda\le2+2/(5a)$. Then
$2+a(2-\lambda)\ge8/5$, and
\eqref{eq:ratio-cover} gives
\begin{equation}\label{eq:ratio-lower}
nq^{-\gamma}>\frac{12r}{11},
\end{equation}
since $q^{8/5}\ge2^{8/5}>3$ and
$\mathrm e<11/4$.

As $\lambda>2$, we have $\gamma>t-2$.
Thus, \eqref{eq:ratio-cap} implies
\begin{equation}\label{eq:ratio-upper}
nq^{-\gamma}
\le t+b+1+2^{3-t}r.
\end{equation}
If $t\ge128$, the argument for
\eqref{eq:b-log} gives $b+1\le t/14$.
Since $2^{3-t}<1/128$ and $t\le r$,
\eqref{eq:ratio-upper} yields
$$
nq^{-\gamma}
<
\left(\frac{15}{14}+\frac1{128}\right)r
<\frac{12r}{11},
$$
contrary to \eqref{eq:ratio-lower}.

If $32\le t<128$, then $b+1\le8$.
Since $r\ge t+6$ and
$$
\left(\frac{12}{11}-\frac1{128}\right)(t+6)
-(t+8)
=\frac{117t-2114}{1408}>0,
$$
\eqref{eq:ratio-upper} is again smaller
than $12r/11$. This contradicts
\eqref{eq:ratio-lower} and proves
the statement.
\end{proof}

For the rest of this subsection, set $B:=2r+2$. The next lemma gives a lower bound for a Hamming ball of radius $\ell$ after up to $B$ coordinates of $\C$ have been removed.

\begin{lemma}\label{lem:ball-exponent}
Let $\C$ be an $[n,k]_q$ linear code with redundancy
$\rho$, let $t\in[\min\{k,\rho\}]$, and set
$r:=R_t(\C)$ and $a:=r-t$. Suppose that $\C$
is a counterexample for order $t$, with $t\ge32$
and $a\ge6$. If an integer $\ell\ge1$ satisfies
$\ell t\ge r$ and $12\ell\le r$, then $n>2B$ and
\begin{equation}
\vol_q(n-B,\ell)>q^{\rho+E_\ell},
\qquad
E_\ell:=\ell(t+1)-2r+
\frac{2(\ell t-r)}{5a}.
\end{equation}
In particular,
$\vol_q(n-B,\ell)>q^{\rho+\ell(t+1)-2r}$.
\end{lemma}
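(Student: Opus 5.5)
The plan is to combine the ratio bound of Lemma~\ref{lem:ratio-new} with the covering lower bound \eqref{eq:coverrough}, exactly as in the proof of Theorem~\ref{thm:diagonal}, and then to estimate the Hamming ball from below by a single binomial term. Throughout, set $\lambda:=\rho/r$. By Lemma~\ref{lem:ratio-new}, which applies since $t\ge32$ and $a\ge6$, we have $\lambda>2+2/(5a)$. Hence \eqref{eq:coverrough} gives
$$
n>\frac r{\mathrm e}\,q^{t(\lambda-1)}.
$$
Since $t(\lambda-1)>t\ge32$, this is at least $(r/\mathrm e)2^{32}$, which is far larger than $4r+4=2B$. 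This proves $n>2B$. Consequently $N:=n-B>n/2$, and in particular $N\ge\ell$, since $\ell\le r/12<n$.

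For the volume, I will keep only the top term and bound it as in the proof of Theorem~\ref{thm:tail}:
$$
\vol_q(N,\ell)\ge\binom N\ell(q-1)^\ell\ge\left(\frac{Nq}{2\ell}\right)^\ell>\left(\frac{nq}{4\ell}\right)^\ell .
$$
Here I use $\binom N\ell\ge(N/\ell)^\ell$ and $q-1\ge q/2$. Inserting the covering bound gives
$$
\vol_q(N,\ell)>\left(\frac{r}{4\mathrm e\ell}\right)^\ell q^{\ell t(\lambda-1)+\ell}.
$$
The hypothesis $12\ell\le r$ gives $r/(4\mathrm e\ell)\ge3/\mathrm e>1$, so the prefactor can simply be dropped, leaving $\vol_q(n-B,\ell)>q^{\ell t(\lambda-1)+\ell}$.

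It then remains to compare exponents. Using $\rho=\lambda r$, a direct expansion gives
$$
\ell t(\lambda-1)+\ell-\rho-E_\ell=(\ell t-r)\Bigl(\lambda-2-\frac{2}{5a}\Bigr).
$$
This is nonnegative because $\ell t\ge r$ and $\lambda>2+2/(5a)$. Hence $\vol_q(n-B,\ell)>q^{\rho+E_\ell}$, and the strictness is inherited from the strict inequality in the covering estimate. The final claim follows because $E_\ell\ge\ell(t+1)-2r$, again since $\ell t\ge r$.

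The only real content is the ratio bound, which is already supplied by Lemma~\ref{lem:ratio-new}. I expect the main obstacle to be bookkeeping rather than ideas. One point is ensuring that the constant losses ($\mathrm e$, the factor $2$ from $q-1\ge q/2$, and the factor $2$ from $N>n/2$) are absorbed by $r/(12\ell)\ge1$. The other is checking that the exponent identity above is exactly factorable, so that no slack is needed beyond $\lambda>2+2/(5a)$.
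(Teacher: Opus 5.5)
Your proposal is correct and follows the paper's proof essentially step by step: the covering bound with $\lambda=\rho/r$, the estimate $\vol_q(n-B,\ell)>(nq/(4\ell))^\ell$, absorbing the factor $(r/(4\mathrm e\ell))^\ell>1$ via $12\ell\le r$, and the exact factorization $(\ell t-r)\bigl(\lambda-2-\frac{2}{5a}\bigr)$ of the exponent gap. The only nit is that $N\ge\ell$ should be justified by $N>B=2r+2>\ell$, not by $\ell<n$.
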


\begin{proof}
Set $\lambda:=\rho/r$. By
Lemma~\ref{lem:ratio-new}, $\lambda>2$.
The covering bound \eqref{eq:coverrough} gives
\begin{equation}\label{eq:ball-length}
n>\frac r{\mathrm e}q^{t(\lambda-1)}.
\end{equation}
In particular, $n>2B$. Indeed,
$r\ge t+6\ge38$ and
$q^{t(\lambda-1)}>2^t$, so the right-hand
side of \eqref{eq:ball-length} exceeds
$4r+4=2B$.

Set $N:=n-B$. Then $N>n/2$ and $N>\ell$.
Using $q-1\ge q/2$ and
$\binom N\ell\ge(N/\ell)^\ell$, we obtain
$$
\vol_q(N,\ell)
\ge\binom N\ell(q-1)^\ell
>\left(\frac{nq}{4\ell}\right)^\ell.
$$
By \eqref{eq:ball-length}, the last expression
is greater than
$$
\left(\frac{r}{4\mathrm e\ell}\right)^\ell
q^{\ell(t(\lambda-1)+1)}.
$$
Since $12\ell\le r$ and $\mathrm e<3$, we have
$r/(4\mathrm e\ell)>1$. Therefore,
$\vol_q(n-B,\ell)>
q^{\ell(t(\lambda-1)+1)}$.
Finally,
$$
\ell(t(\lambda-1)+1)-(\rho+E_\ell)
=(\ell t-r)
\left(\lambda-2-\frac{2}{5a}\right)
\ge0
$$
by Lemma~\ref{lem:ratio-new} and
$\ell t\ge r$. This proves the first inequality in the statement. The second one follows from $\ell t\ge r$.
\end{proof}

We now use Lemma~\ref{lem:ball-exponent} to show that a putative counterexample $\C$ contains a $t$-dimensional subcode supported on at most $B=2r+2$ coordinates. When $r\le2t$, we use
a computer calculation for $6\le r-t\le9$
and two puncturing arguments for $r-t\ge10$.

\begin{proposition}\label{prop:large-diagonal}
Let $\C$ be an $[n,k]_q$ linear code with
redundancy $\rho$, let
$32\le t\le\min\{k,\rho\}$, and set
$r:=R_t(\C)$. If $r\le2t$, then
$d_t(\C)\le2r+2$.
\end{proposition}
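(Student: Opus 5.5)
We argue by contradiction. Suppose that $\C$ is a counterexample for order $t$ with $32\le t$ and $r\le2t$. By Theorem~\ref{thm:fivegaps} we may assume $a:=r-t\ge6$, and then $6\le a\le t$. The goal is to produce a $t$-dimensional subcode supported on at most $B=2r+2$ coordinates via Theorem~\ref{thm:successive}, checking the uniform conditions \eqref{eq:successive-uniform} with the ball estimate of Lemma~\ref{lem:ball-exponent}. Since $r\le2t$, radius $\ell=2$ satisfies $\ell t\ge r$. Since $r\ge38$, it also satisfies $12\ell\le r$. Lemma~\ref{lem:ball-exponent} then gives $n>2B$ and
$$\vol_q(n-B,2)>q^{\rho+E_2},\qquad E_2=2-2a+\frac{4(t-a)}{5a}.$$
Case (1) of Theorem~\ref{thm:successive} is trivial, so we may assume $n>B$ and $\rho+t>B$.

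\emph{Main range, $a\ge10$ and $E_2$ not too small.} We take $\ell_i=2$ at every step, with $v\le a+1$ steps, so that $W_v=2(t+v)\le2r+2$. This is the greedy doubling scheme of Corollary~\ref{cor:radius-two}. Step $i$ needs
$$\rho-S_{i-1}-2(i-1)+s_i-1<\rho+E_2,$$
so we may take $s_i=\min\{t-S_{i-1},\,S_{i-1}+2(i-1)+A\}$ with $A:=\lceil E_2\rceil$, provided $A\ge1$. Then $S_i$ reaches $A(2^i-1)+2(2^i-i-1)$, and it suffices that $t\le A(2^{a+1}-1)+2(2^{a+1}-a-2)$. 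For $a\ge10$ the term $2^{a+1}$ dwarfs $t$ unless $t$ is exponentially large in $a$. In that regime, however, $E_2\approx 4t/(5a)$ is large and gives a large $A$. So the inequality should reduce to an elementary check, split according to whether $t\ge c\,a^2$ (then $A\gtrsim t/a$ and $2^{a+1}A\ge t$) or $t<c\,a^2\le 2^{a}$ (then the additive term $2(2^{a+1}-a-2)$ alone exceeds $t$).

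\emph{Second argument, when $E_2\le0$.} This happens when $t$ is comparable to $a$, i.e.\ $r$ is close to $2t$. There I would open with a single larger radius. Take $\ell_1=3$, for which $E_3=t+3-2a+\frac{2(2t-a)}{5a}$ is positive because $a\le t$. Lemma~\ref{lem:ball-exponent} still applies, since $3t\ge r$ and $36\le r$. This step lets $s_1$ be of order $E_3$ at cost $3(s_1+1)$. After it, we continue with $\ell_i=2$, where the accumulated $S_{i-1}$ now supplies the slack that $E_2$ lacked, because the exponent decreases by $S_{i-1}+2(i-1)$ relative to $W$. I would then verify the budget $3(s_1+1)+\sum_{i\ge2}2(s_i+1)\le2t+2a+2$. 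The extra cost $s_1+1$ of the first step must fit into the slack $2a+2-2(v-1)$.

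\emph{Small gaps $6\le a\le9$.} Here $E_2$ and the slack $2a+2$ are too tight for a clean closed-form argument. I would settle these pairs $(t,r)$ by the computer check of Appendix~\ref{app:computations}. For $t$ beyond an explicit bound, they are already covered by Theorem~\ref{thm:diagonal}, since $5a^2\le405$. For the finitely many remaining $t$, we use Lemma~\ref{lem:ratio-new} to get $\rho/r>2+2/(5a)$, and we check the successive-puncturing inequalities with $\ell_i\in\{2,3\}$ as exact integer inequalities.

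The main obstacle is the second argument. Near $r=2t$ the budget $2a+2$ barely covers the extra cost of the radius-$3$ step, so choosing $s_1$ to be large enough, yet not exceeding the budget, is where the inequalities will be tightest.
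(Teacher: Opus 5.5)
Your setup and your treatment of $6\le a\le9$ match the paper's (Theorem~\ref{thm:diagonal} for $t\ge5a^2$, an exact check below that). The gap is in the case $a\ge10$ with $t<3a$, roughly $r>\tfrac43t$. It begins with the claim that $E_3>0$ ``because $a\le t$'', which is false. Indeed $E_3=t-2a+3+\frac{2(2t-a)}{5a}$ equals $\frac{17}{5}-t<0$ at $r=2t$, and it is negative whenever $t\le2a-5$. At the first step $W=S=0$, so the requirement $s_1-1\le E_\ell$ forces $s_1\le0$ for both $\ell=2$ and $\ell=3$. Near $r=2t$, no radius-$2$ or radius-$3$ step can open the chain at all.

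Even when $E_3>0$, a single radius-$3$ step gives at most $s_1\le E_3+1\approx t-2a+4$. The following radius-$2$ step needs $1\le s_2\le E_2+1+(W_1-S_1)=E_2+4+2s_1$, which means roughly $s_1\ge a-3$. So a single radius-$3$ opening only works for $t\ge3a-O(1)$.

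Your picture of where $E_2\le0$ is also off. This happens exactly when $t\le(5a^2-3a)/2$, so your second argument must cover all of $a\le t\lesssim\frac52a^2$, not just $t$ comparable to $a$. Your ``main range'' is simply the argument of Theorem~\ref{thm:diagonal}.

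For $t\ge3a$, your second argument is essentially the paper's Case~2. The paper takes exactly $s_1=a-1$ at radius $3$, using $E_3\ge a+3$, and pads the support to $3a$ coordinates via Lemma~\ref{lem:extension}. It then applies Corollary~\ref{cor:radius-two} to the punctured code with $L=\lfloor a/2\rfloor$ and $A=4+\lfloor2(t-a)/(5a)\rfloor$. This also covers your main range, so no separate radius-$2$-only case is needed.

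What you are missing is the idea for $t<3a$: open with radius $4$. This is always admissible, since $4(t+1)-2r+1=2(t-a)+5\ge5$.
\begin{enumerate}
\item Use radius $4$ until the nullity reaches $c=\max\{0,\lceil(2a-t)/3\rceil\}$. This makes $W-S$ large enough for radius $3$ to become admissible.
\item Use radius $3$ until the nullity reaches $a$.
\item Use radius $2$ until the nullity reaches $t$.
\end{enumerate}
Each step size is chosen greedily as in \eqref{eq:phase-rule}. The total support is $2t+a+c+4L_4+3L_3+2L_2$. The real work is the step-count inequality $c+4L_4+3L_3+2L_2\le a+2$, verified exactly for $32\le t<64$ and analytically for $t\ge64$ in Appendix~\ref{app:iteration-counts}. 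One also needs $12\cdot4\le r$, which holds because $c>0$ forces $r>3t/2\ge48$.

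Your closing remark places the difficulty in the budget (``the extra cost $s_1+1$ must fit into the slack''). But near $r=2t$ the obstruction is not the budget: radius $3$ is simply unavailable there.
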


\begin{proof}
Suppose, for a contradiction, that
$d_t(\C)\ge2r+3$. By
Theorem~\ref{thm:fivegaps}, we may assume
$a:=r-t\ge6$. Since $r\le2t$, we also
have $a\le t$. We distinguish the cases
$6\le a\le9$, $a\ge10$ with $t\ge3a$,
and $a\ge10$ with $t<3a$.

\medskip
\noindent\textbf{Case 1: $6\le a\le9$.}
If $t\ge5a^2$, the result follows from
Theorem~\ref{thm:diagonal}. Thus, assume
$32\le t<5a^2$. There are
$$
\sum_{a=6}^{9}(5a^2-32)=1022
$$
pairs $(t,a)$ in this range.

For each pair $(t,a)$, the input records supply a
sequence of positive integer pairs $(\ell_i,s_i)$,
which the program described in
Appendix~\ref{app:steps} checks. At step $i$, it uses a Hamming
ball of radius $\ell_i$ to increase the
nullity by $s_i$, as in
Theorem~\ref{thm:successive}. In the notation
of that theorem, it checks $S_v=t$, $W_v\le B$,
and, for every $i\in[v]$,
\begin{equation}\label{eq:symbolic-chain}
\ell_i t\ge r,\qquad
12\ell_i\le r,\qquad
s_i-1-W_{i-1}+S_{i-1}
\le \ell_i(t+1)-2r+
\frac{2(\ell_i t-r)}{5a}.
\end{equation}
The first two inequalities allow us to apply
Lemma~\ref{lem:ball-exponent}. Together with
the last inequality, it gives
$$
\vol_q(n-B,\ell_i)>
q^{\rho-W_{i-1}+S_{i-1}+s_i-1}
$$
at every step. Hence, all the conditions of
Theorem~\ref{thm:successive} hold, and
$d_t(\C)\le B$, yielding a contradiction.

\medskip
\noindent\textbf{Case 2: $a\ge10$ and $t\ge3a$.}
We first obtain an $(a-1)$-dimensional subcode
of $\C$ supported on at most $3a$ coordinates.
To this end, we apply Lemma~\ref{lem:ball-exponent}
with $\ell=3$. Its hypotheses are satisfied:
since $t\ge3a$, we have
$3t\ge r=t+a$, and since $a\ge10$, we have
$r=t+a\ge4a\ge40>36=12\cdot3$. Moreover,
$$
E_3
=3(t+1)-2(t+a)+\frac{2(3t-r)}{5a}
\ge t-2a+3
\ge a+3.
$$
Thus,
$\vol_q(n,3)\ge\vol_q(n-B,3)>q^{\rho+a-2}$.
Lemma~\ref{lem:hamming-ball}, applied with
$s=a-1$, gives $d_{a-1}(\C)\le3a$.

Let $F$ be the support of an
$(a-1)$-dimensional subcode attaining
$d_{a-1}(\C)$. By Lemma~\ref{lem:puncture},
$\eta(F)=a-1$. We apply Lemma~\ref{lem:extension} with
$x:=2a+1$. Since $|F|\le3a$ and
$\eta(F)=a-1$, we have
$|F|-\eta(F)\le2a+1$. Also
$2a+1\le\rho$, because $\rho>2r$
by Lemma~\ref{lem:ratio-new}. Thus, there
is a set $F'\supseteq F$ with
$|F'|=3a$ and $\eta(F')=a-1$.
Puncturing $\C$ on $F'$ gives a code
$\C'$ of length $n':=n-3a$ and
redundancy $\rho':=\rho-2a-1$.

The coordinates in $F'$ already give nullity $a-1$.
We therefore need to obtain a further nullity
$u:=t-(a-1)=t-a+1$ in the punctured code
$\C'$. We will apply
Corollary~\ref{cor:radius-two} with
$$
L:=\left\lfloor\frac a2\right\rfloor,
\qquad
A:=4+\left\lfloor\frac{2(t-a)}{5a}\right\rfloor.
$$
We first show that $u\le A(2^L-1)$.
For every $a\ge10$, we have
\begin{equation}\label{eq:highorder-power}
2(2^L-1)\ge5a.
\end{equation}
This holds for $a=10,11$ and remains
true when $a$ increases by two, since
the left-hand side doubles and the
right-hand side increases by $10$.
As $A\ge3+2(t-a)/(5a)$, we obtain
$$
A(2^L-1)
\ge3(2^L-1)+(t-a)
\ge t-a+1=u.
$$

We next check the length hypothesis of
Corollary~\ref{cor:radius-two} for $\C'$.
Since $2L\le a$, we have
$$
3a+2u+2L
=2t+a+2+2L
\le2t+2a+2=B.
$$
As $\C$ is a counterexample, $n>B$.
Therefore, $n'=n-3a>2u+2L$.

It remains to check the volume hypothesis of
Corollary~\ref{cor:radius-two} for $\C'$.
We apply Lemma~\ref{lem:ball-exponent} with
$\ell=2$. Its hypotheses hold because
$r=t+a\le2t$ and $r\ge40>24$.
Writing $X:=2(t-a)/(5a)$, the definition
$A=4+\lfloor X\rfloor$ gives
$$
E_2=2-2a+X
\ge2-2a+\lfloor X\rfloor
=A-2a-2.
$$
Moreover, the inequality
$3a+2u+2L\le B$ proved above gives
$n'-2u-2L\ge n-B$. Since the volume
of a Hamming ball is nondecreasing in
the length, we obtain
$$
\vol_q(n'-2u-2L,2)
\ge\vol_q(n-B,2)
>q^{\rho+A-2a-2}
=q^{\rho'+A-1}.
$$

Corollary~\ref{cor:radius-two} now gives
$d_u(\C')\le2u+2L$. The kernel of the
puncturing map from $\C$ to $\C'$ has
dimension $\eta(F')=a-1$. Hence, the
inverse image of a $u$-dimensional
subcode of $\C'$ has dimension
$(a-1)+u=t$ and is supported on at most
$3a+2u+2L\le B$ coordinates. This
contradicts $d_t(\C)>B$.

\medskip
\noindent\textbf{Case 3: $a\ge10$ and $t<3a$.}
We apply Theorem~\ref{thm:successive} using
Hamming radii $4$, $3$, and $2$. The radius-$4$
steps will give the initial nullity needed
to use radius $3$; the radius-$3$ steps
will bring the nullity to $a$; and the
radius-$2$ steps will bring it to $t$.

We first determine how much nullity can
be obtained at one step. Suppose that
the previous steps have produced nullity
$S$ using $W$ coordinates. A step with
Hamming radius $\ell$ that adds nullity
$s$ requires, by
Theorem~\ref{thm:successive},
$$
\vol_q(n-B,\ell)>
q^{\rho-W+S+s-1}.
$$
Lemma~\ref{lem:ball-exponent} gives this
inequality whenever
\begin{equation}\label{eq:highorder-step}
s\le\ell(t+1)-2r+1+W-S.
\end{equation}

Set
$$
c:=\max\left\{0,
\left\lceil\frac{2a-t}{3}\right\rceil
\right\}.
$$
If $c>0$, we use radius-$4$ steps
until the total nullity reaches $c$.
The inequality $t-2a+3c\ge0$
ensures that a radius-$3$ step is
then possible. If $c=0$, we begin
with radius $3$.

Here is the rule for the steps. We use radius $4$ until the total nullity
is $c$, radius $3$ until it is $a$, and
radius $2$ until it is $t$. Suppose that
the current total nullity is $S$ and the
current radius is $\ell$. Let $d$ be
$c-S$, $a-S$, or $t-S$, according as
$\ell$ is $4$, $3$, or $2$. Thus, $d$
is the additional nullity needed before
we change radius or finish.

As long as $d>0$, choose
\begin{equation}\label{eq:phase-rule}
s:=\min\left\{
d,\,
\ell(t+1)-2r+1+W-S
\right\}.
\end{equation}
After this step, replace $S$ by $S+s$
and $W$ by $W+\ell(s+1)$. The choice
$s\le d$ stops us at the prescribed
nullity, while the other bound on $s$
is exactly \eqref{eq:highorder-step}.

We check that the second quantity in
\eqref{eq:phase-rule} is positive
whenever a step is needed. If $c>0$,
the first step uses radius $4$.
At that point $S=W=0$, and the
second quantity equals
$$
4(t+1)-2r+1=2(t-a)+5\ge5.
$$

Let $L_4$ be the number of radius-$4$
steps, taking $L_4=0$ if $c=0$.
After these steps, $S=c$ and
$W=4(c+L_4)$. Thus, when we begin
using radius $3$, the upper bound
on $s$ in \eqref{eq:highorder-step} is
$$
3(t+1)-2r+1+W-S
=t-2a+4+3c+4L_4
\ge4,
$$
because $t-2a+3c\ge0$.

Let $L_3$ be the number of radius-$3$
steps. After these steps, $S=a$ and
$$
W=4(c+L_4)+3(a-c+L_3).
$$
Thus, when we begin using radius $2$,
the upper bound on $s$ in
\eqref{eq:highorder-step} is
$$
2(t+1)-2r+1+W-S
=c+4L_4+3L_3+3
\ge3.
$$

For a fixed radius $\ell$, each step
increases $W-S$ by
$\ell(s+1)-s=(\ell-1)s+\ell>0$.
Hence, the upper bound on $s$ remains
positive during the steps with that
radius. The rule
\eqref{eq:phase-rule} therefore reaches
total nullity $t$.

Let $L_2$ be the number of radius-$2$
steps. The steps of radii $4$, $3$,
and $2$ increase the nullity by
$c$, $a-c$, and $t-a$, respectively.
Their total support bound, in the
notation of
Theorem~\ref{thm:successive}, is
$$
W=4(c+L_4)+3(a-c+L_3)+2(t-a+L_2)
=2t+a+c+4L_4+3L_3+2L_2.
$$
Appendix~\ref{app:iteration-counts}
proves that the steps chosen by
\eqref{eq:phase-rule} satisfy
\begin{equation}\label{eq:three-phase-cost}
W=2t+a+c+4L_4+3L_3+2L_2
\le2t+2a+2=B.
\end{equation}

We still have to verify that
Lemma~\ref{lem:ball-exponent} applies
to every step. Its requirements are
$\ell t\ge r$ and $12\ell\le r$.
The first holds because $\ell\ge2$
and $r\le2t$.

For the second, note that
$r=t+a\ge42>36=12\cdot3$, which covers the radii
$3$ and $2$. If radius $4$ is used, then $c>0$,
so $a>t/2$ and
$r=t+a>3t/2\ge48=12\cdot4$.
Thus, $12\ell\le r$ for every
radius used.

The choice of $s$ in
\eqref{eq:phase-rule} ensures
\eqref{eq:highorder-step} at every
step. Lemma~\ref{lem:ball-exponent}
therefore gives the uniform volume
conditions in
Theorem~\ref{thm:successive}.
The steps reach total nullity $t$,
and \eqref{eq:three-phase-cost}
gives $W\le B$. Hence,
$d_t(\C)\le B$, contrary to
our assumption.
\end{proof}

We next consider $2t\le r\le3t$. In this
range we use two puncturing steps: the
first produces nullity $s$, and the second
produces the remaining nullity $t-s$.

\begin{proposition}\label{prop:middle-ratio}
Let $\C$ be an $[n,k]_q$ linear code with
redundancy $\rho$, let
$32\le t\le\min\{k,\rho\}$, and set
$r:=R_t(\C)$. If $2t\le r\le3t$, then
$d_t(\C)\le2r+2$.
\end{proposition}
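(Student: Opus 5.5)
Suppose, for a contradiction, that $d_t(\C)\ge 2r+3$, and set $a:=r-t$ and $B:=2r+2$. Since $2t\le r\le 3t$, we have $t\le a\le 2t$. In particular $a\ge 32\ge 6$, so Lemmas~\ref{lem:ratio-new} and~\ref{lem:ball-exponent} apply.

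\textbf{Why radius $2$ or $3$ alone fails.} In this range $E_2$ and $E_3$ are very negative once $r>3t/2$: for example $E_3\approx 3t-2r\le -t$. So the radii $2$ and $3$ used in Proposition~\ref{prop:large-diagonal} cannot start the process. I would therefore apply Theorem~\ref{thm:successive} with exactly two steps.
\begin{itemize}
\item The first step uses a large radius $\ell_1$ and adds nullity $s$.
\item The second step uses radius $\ell_2=3$ and adds the remaining nullity $t-s$.
\end{itemize}
Radius $3$ is admissible in Lemma~\ref{lem:ball-exponent} because $3t\ge r$, and $36\le r$ since $r\ge 2t\ge 64$.

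\textbf{Conditions on the two steps.} With $W_1=\ell_1(s+1)$ and $S_1=s$, the second-step condition $s_2-1-W_1+S_1\le E_3$ becomes
$$
t-1-\ell_1(s+1)\le 3(t+1)-2r+\frac{2(3t-r)}{5a}.
$$
This holds as soon as $\ell_1(s+1)\ge 2r-2t-4=2a-4$.

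The total support must satisfy
$$
W_2=\ell_1(s+1)+3(t-s+1)\le 2r+2.
$$
So I would choose $\ell_1(s+1)$ as close as possible to $2a-4$ from above. Then $W_2\approx 2r+t-3s-1$, which forces $s\ge (t-3)/3$, roughly. This suggests taking $s\approx\lceil t/3\rceil$ and $\ell_1\approx\lceil (2a-4)/(s+1)\rceil$, which is about $6a/t\in[6,12]$.

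The first step needs $\ell_1 t\ge r$ and $s-1\le E_{\ell_1}$. Both should hold with room to spare. Indeed $\ell_1 t\approx 6a\ge 2r=2t+2a$ because $a\ge t$, so $E_{\ell_1}\approx \ell_1(t+1)-2r$ is of order $a$, which far exceeds $s\approx t/3$. The slack term $2(\ell t-r)/(5a)$ in $E_\ell$ can absorb the rounding losses.

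\textbf{Rounding.} The choice of $\ell_1$ and $s$ must be made precise. I would fix $s:=\lceil (t-1)/3\rceil$ (or a nearby value) and let $\ell_1$ be the least integer with $\ell_1(s+1)\ge 2a-4$. I would then check $W_2\le B$ using
$$
\ell_1(s+1)<2a-4+(s+1).
$$
If the overshoot $s+1$ is too large, I would split the first phase into two steps of radius $\ell_1$ and $\ell_1-1$, mixing them so that the total support contributed by the first phase lands within $O(1)$ of $2a-4$.

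\textbf{Main obstacle.} The hardest part is the condition $12\ell_1\le r$ in Lemma~\ref{lem:ball-exponent}. With $\ell_1$ up to $12$, it needs $r\ge 144$, whereas here only $r\ge 64$ is guaranteed.

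For large $t$, say $t\ge 72$, it is automatic because $r\ge 2t$. For the finitely many pairs $(t,r)$ with $32\le t<72$ and $2t\le r\le 3t$, I would do one of two things.
\begin{itemize}
\item Re-prove the volume estimate with the weaker constant: replace $(r/(4\mathrm e\ell))^\ell>1$ by absorbing a factor $q^{-c\ell}$ into the slack $2(\ell t-r)/(5a)$, which is large here.
\item Alternatively, as in Case~1 of Proposition~\ref{prop:large-diagonal}, supply explicit chains $(\ell_i,s_i)$ and verify \eqref{eq:symbolic-chain} by computer.
\end{itemize}
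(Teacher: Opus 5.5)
\textbf{The committed parameter choice fails.} Your two-step structure matches the paper's: Theorem~\ref{thm:successive}, with the volume conditions supplied by Lemma~\ref{lem:ball-exponent}, a large first radius and then a small one. But the parameters you commit to do not work, and both obstacles you flag come from that choice rather than from the problem.

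With $\ell_2=3$, the second-step condition and the support bound together are equivalent to
\[
2a-4\le \ell_1(s+1)\le 2a-t+3s-1 .
\]
For $s=\lceil (t-1)/3\rceil$ the right end is at most $2a$. So $\ell_1(s+1)$ must land in a window of at most five integers, while consecutive multiples of $s+1\ge 12$ are at least $12$ apart. Taking the least $\ell_1$ with $\ell_1(s+1)\ge 2a-4$ can overshoot by up to $s$, and then the support bound fails.

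Concretely, for $t=33$, $r=66$ you get $s=11$, $\ell_1=6$ and $W_2=72+69=141>134=2r+2$. Moreover $12\ell_1=72>r$, so Lemma~\ref{lem:ball-exponent} does not even apply.

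This is structural, not a bad instance. Eliminating $s$ from the two displayed inequalities gives $\ell_1 t\le 6a-12$, up to the small term $2(3t-r)/(5a)$. So $\ell_1\approx 6a/t$ is essentially the largest admissible radius and leaves no room for rounding.

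Your fallbacks (splitting the first phase, or re-deriving the volume estimate) are not carried out. Also, the slack $2(\ell t-r)/(5a)$ is less than $2$ here, so it cannot absorb the loss from dropping $12\ell_1\le r$. The real room is in $E_{\ell_1}-(s-1)$.

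\textbf{The missing idea is to reverse the roles of $\ell_1$ and $s$.} Use the large margin of the first step to take $\ell_1$ small and fixed, and let the nullity absorb the rounding. The paper takes $(\ell_1,\ell_2)=(5,3),(6,3),(7,4)$ for $r/t$ in $[2,9/4]$, $(9/4,8/3)$, $[8/3,3]$ respectively, and sets
\[
z=s+1=\left\lceil\frac{2r-(\ell_2-1)t-\ell_2-1}{\ell_1}\right\rceil .
\]
This choice gives the following.
\begin{enumerate}
\item The second-step condition holds by the definition of $z$.
\item The overshoot of $\ell_1 z$ is at most $\ell_1-1\le 6$, a constant rather than $s$.
\item The first-step condition and the support bound become inequalities affine in $r$, checked at the endpoints of each interval.
\item Since $\ell_1\le 7$, the condition $12\ell_1\le r$ holds automatically for all $t\ge 32$.
\end{enumerate}
For $t=33$, $r=66$ this yields $\ell_1=5$, $z=13$, and total support $65+66=131\le 134$. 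No modified volume estimate or computer check is needed.
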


\begin{proof}
Suppose, for a contradiction, that
$d_t(\C)\ge2r+3$. We choose the Hamming
radii of the two steps according to
$r/t$:
$$
\begin{array}{c|ccc}
\text{Range of }r/t
 &[2,9/4]&(9/4,8/3)&[8/3,3]\\ \hline
\text{Radii }(\ell_1,\ell_2)
 &(5,3)&(6,3)&(7,4).
\end{array}
$$
For the chosen pair $(\ell_1,\ell_2)$, set
$$
z:=\left\lceil
\frac{2r-(\ell_2-1)t-\ell_2-1}{\ell_1}
\right\rceil,
\qquad s:=z-1.
$$
The intervals in the table give
$2\le z\le t$. Indeed, in the three rows the numerator
$2r-(\ell_2-1)t-\ell_2-1$ is at least $2t-4$, $5t/2-4$ and
$7t/3-5$, respectively, which exceeds $\ell_1$ for $t\ge32$;
and it is at most $5t/2$, $10t/3$ and $3t$, respectively,
which is at most $\ell_1t$. Thus, $1\le s<t$,
and we may apply
Theorem~\ref{thm:successive} with
$(\ell_1,s)$ followed by
$(\ell_2,t-s)$. These steps have
support bounds $\ell_1z$ and
$\ell_2(t-s+1)$, respectively.

We first check the volume conditions.
For the first step, the weaker estimate
in Lemma~\ref{lem:ball-exponent} gives
the required inequality provided that
\begin{equation}\label{eq:middle-first}
s-1\le\ell_1(t+1)-2r.
\end{equation}
Before the second step, the accumulated
nullity is $s$ and the accumulated
support bound is $\ell_1z$. The
corresponding requirement is
$$
t-\ell_1z-1\le\ell_2(t+1)-2r.
$$
This is equivalent to
$$
\ell_1z\ge
2r-(\ell_2-1)t-\ell_2-1,
$$
which holds by the definition of $z$.
The choice of $z$ therefore ensures
the volume condition for the second
step. We still have to check
\eqref{eq:middle-first}.

Because the numerator defining $z$
is an integer,
$$
z\le
\frac{2r-(\ell_2-1)t-\ell_2
+\ell_1-2}{\ell_1}.
$$
Using this bound and $s=z-1$,
we obtain \eqref{eq:middle-first}
whenever
\begin{equation}\label{eq:middle-check-first}
2(\ell_1+1)r
\le(\ell_1^2+\ell_2-1)t
+\ell_1^2+\ell_1+\ell_2+2.
\end{equation}
Since $\ell_1>\ell_2$ in every row of the table, the quantity
$\ell_1z+\ell_2(t-s+1)=(\ell_1-\ell_2)z+\ell_2(t+2)$ is increasing
in $z$. The same bound on $z$ therefore gives
$\ell_1z+\ell_2(t-s+1)\le2r+2$
whenever
\begin{equation}\label{eq:middle-check-support}
2\ell_2r
\ge(\ell_1+\ell_2^2-\ell_2)t
+\ell_1^2+\ell_2^2-4\ell_1+2\ell_2.
\end{equation}

For each pair of radii in the table, both inequalities are affine in $r$.
Since the left-hand side of \eqref{eq:middle-check-first} increases
with $r$, this inequality is strongest at the upper endpoint of the
corresponding interval, while \eqref{eq:middle-check-support} is
strongest at the lower endpoint. We call the \emph{margin} of an
inequality the amount by which its larger side exceeds its smaller side.
At the three upper endpoints, the margins in
\eqref{eq:middle-check-first} are
$$
35,\qquad \frac{2t}{3}+47,\qquad
4t+62.
$$
At the three lower endpoints, the margins in
\eqref{eq:middle-check-support} are
$$
t-20,\qquad \frac{3t}{2}-27,\qquad
\frac{7t}{3}-45.
$$
They are all nonnegative for $t\ge32$. Hence, the first volume
condition holds and the two support bounds sum to at most $2r+2$.

Finally, we check the hypotheses
of Lemma~\ref{lem:ball-exponent} for
both radii. Since $r\ge2t$, we have
$a=r-t\ge t\ge32$, so the lemma applies. The intervals give
$\ell_1t\ge r$ and $\ell_2t\ge r$.
They also give
$12\max\{\ell_1,\ell_2\}\le r$:
at the lower endpoints, this amounts
to $60\le2t$, $72\le9t/4$,
and $84\le8t/3$, respectively.
Thus, the lemma applies at both steps.
Theorem~\ref{thm:successive} gives
$d_t(\C)\le2r+2$, contrary to our
assumption.
\end{proof}

We finally consider $r\ge3t$. We again
use two puncturing steps, dividing the
required nullity as nearly in half as
possible.

\begin{proposition}\label{prop:large-ratio}
Let $\C$ be an $[n,k]_q$ linear code with
redundancy $\rho$, let
$32\le t\le\min\{k,\rho\}$, and set
$r:=R_t(\C)$. If $r\ge3t$, then
$d_t(\C)\le2r+2$.
\end{proposition}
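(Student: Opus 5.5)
We follow the two-step scheme of Proposition~\ref{prop:middle-ratio}. Throughout, we argue with Lemma~\ref{lem:ball-exponent} and Theorem~\ref{thm:successive} applied with $B=2r+2$.

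Suppose, for a contradiction, that $d_t(\C)\ge2r+3$. Then $a=r-t\ge2t\ge64$, so Lemma~\ref{lem:ball-exponent} is available for every radius $\ell$ with $\ell t\ge r$ and $12\ell\le r$. Write $\kappa:=r/t\ge3$. We use two steps:
\begin{itemize}
\item a first step of radius $\ell_1$ producing nullity $s$;
\item a second step of radius $\ell_2<\ell_1$ producing the remaining nullity $t-s$.
\end{itemize}
As in Proposition~\ref{prop:middle-ratio}, we set
$$
z:=\left\lceil\frac{2r-(\ell_2-1)t-\ell_2-1}{\ell_1}\right\rceil,
\qquad s:=z-1.
$$

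With this choice, the computation in Proposition~\ref{prop:middle-ratio} carries over verbatim, since it used nothing about the table except $\ell_1>\ell_2$ and $1\le s<t$. Three conditions follow.
\begin{itemize}
\item The volume condition for the second step is automatic from the definition of $z$.
\item The volume condition for the first step follows from the weaker estimate of Lemma~\ref{lem:ball-exponent} once \eqref{eq:middle-check-first} holds.
\item The total support $\ell_1z+\ell_2(t-s+1)$ is at most $2r+2$ once \eqref{eq:middle-check-support} holds.
\end{itemize}
Thus everything reduces to choosing $\ell_1,\ell_2$ as functions of $r$ and $t$ so that both affine inequalities hold, together with $\ell_2t\ge r$, $12\ell_1\le r$, and $2\le z\le t$.

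The concrete choice I would try is $\ell_2:=\lceil r/t\rceil$ and $\ell_1:=\lceil 2r/t\rceil+2$. Heuristically, with $\ell_1\approx2\kappa+2$ and $\ell_2\approx\kappa$, the leading terms compare as follows.
\begin{itemize}
\item For \eqref{eq:middle-check-first}, the left side is $\approx(4\kappa^2+6\kappa)t$ and the right side is $\approx(4\kappa^2+9\kappa+3)t$, leaving a margin of order $\kappa t$.
\item For \eqref{eq:middle-check-support}, the left side is $\approx2\kappa^2t$. The right side is at most $(\kappa^2+3\kappa+2)t$, after using $\ell_2\le\kappa+1$.
\end{itemize}
The condition $\ell_2t\ge r$ holds by the choice of $\ell_2$. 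The condition $12\ell_1\le r$ becomes $12(2\kappa+3)\le\kappa t$, which holds for $t\ge32$ and $\kappa\ge3$. The bounds $2\le z\le t$ follow as in Proposition~\ref{prop:middle-ratio}: the numerator of $z$ is roughly $(2\kappa-\ell_2+1)t\in[(\kappa-1)t,(\kappa+1)t]$, which lies between $\ell_1$ and $\ell_1t$.

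The main obstacle is the support inequality \eqref{eq:middle-check-support} for small $\kappa$, where the ceilings cost the most. It is tight near $\kappa=3$: the leading terms give $2\kappa^2=18$ against $\kappa^2+3\kappa+2=20$ in the worst rounding. The lower-order terms $\ell_1^2+\ell_2^2-4\ell_1+2\ell_2$ also matter there, since they are $O(\kappa^2)$ rather than $O(\kappa t)$.

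To handle this, I would split into $r\ge4t$, where the asymptotic margins clearly dominate, and $3t\le r<4t$. In the second range, $\ell_2\in\{3,4\}$ and $\ell_1$ is bounded, so one can argue row by row as in the table of Proposition~\ref{prop:middle-ratio}. I expect this to take a few subintervals of $r/t$ with explicit pairs $(\ell_1,\ell_2)$, for instance $(8,3)$ and $(9,4)$, perhaps refining $\ell_1$ down by one where \eqref{eq:middle-check-first} has slack. In each subinterval one checks the endpoint margins for $t\ge32$.

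Once all the inequalities are verified, Theorem~\ref{thm:successive} with the pairs $(\ell_1,s)$ and $(\ell_2,t-s)$ gives $d_t(\C)\le2r+2$. This contradicts the assumption $d_t(\C)\ge2r+3$.
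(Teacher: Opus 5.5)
\textbf{There is a gap: your $\ell_1$ breaks a hypothesis of Lemma~\ref{lem:ball-exponent}.} The reduction you borrow from Proposition~\ref{prop:middle-ratio} is sound. The problem is the radius $\ell_1=\lceil 2r/t\rceil+2$, which violates the hypothesis $12\ell_1\le r$. Your verification of that hypothesis is false. The inequality $12(2\kappa+3)\le\kappa t$ means $\kappa(t-24)\ge36$, which at $t=32$ requires $\kappa\ge 9/2$; at $\kappa=3$ it reads $108\le 96$.

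Concretely, for $(t,r)=(32,97)$ you get $\ell_1=\lceil 194/32\rceil+2=9$ and $12\ell_1=108>97$. At $t=32$ the same failure occurs for every $r\in\{97,\dots,107\}\cup\{113,\dots,119\}\cup\{129,130,131\}$. There are failures for $t=33,34,35$ as well, for example $(33,100)$, $(34,103)$ and $(35,106)$. So the failures are not confined to $3t\le r<4t$: $r=129,130,131$ at $t=32$ lie in your "clearly dominated" range $r\ge4t$. Your fallback pair $(9,4)$ fails at the same parameters.

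This constraint is genuinely tight. At $(t,r)=(32,96)$ one needs $\ell_1\le 8=2\kappa+2$. That is exactly the bound $\ell_1\le 2r/t+2$ the paper proves for its choice $\ell_1=\lceil(2r+s-1)/(t+1)\rceil$.

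Conversely, the obstacle you single out does not exist. Your comparison of $2\kappa^2$ with $\kappa^2+3\kappa+2$ uses $\ell_2\ge\kappa$ on one side of \eqref{eq:middle-check-support} and $\ell_2\le\kappa+1$ on the other. Writing $\ell_2=\kappa+\delta$ with $0\le\delta<1$ gives $\ell_2(2\kappa-\ell_2+1)=\kappa^2+\kappa+\delta(1-\delta)\ge\kappa^2+\kappa$. So even with your $\ell_1$, the $t$-part of that inequality has margin exceeding $(\kappa^2-\kappa-3)t\ge 3t$, which dominates the constant term for $t\ge32$. As written, the proof therefore rests on a false claim, and the proposed case analysis targets the wrong inequality.

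\textbf{The fix.} Lowering $\ell_1$ by one, which you mention only in passing, repairs the argument, and no case split is needed. Take $\ell_1:=\lceil 2r/t\rceil+1$, so that $2\kappa+1\le\ell_1<2\kappa+2$.
\begin{itemize}
\item Hypotheses of the lemma: $12\ell_1<24(\kappa+1)\le 32\kappa\le r$, while $\ell_2<\kappa+1<\ell_1$ and $\ell_2t\ge r$.
\item For \eqref{eq:middle-check-first}: $(\ell_1^2+\ell_2-1)t-2(\ell_1+1)r=t\bigl(\ell_1(\ell_1-2\kappa)-2\kappa+\ell_2-1\bigr)\ge \ell_2t>0$.
\item For \eqref{eq:middle-check-support}: $2\ell_2r-(\ell_1+\ell_2^2-\ell_2)t=t\bigl(\ell_2(2\kappa-\ell_2+1)-\ell_1\bigr)>(\kappa^2-\kappa-2)t$. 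The constant term $\ell_1^2-4\ell_1+\ell_2^2+2\ell_2$ is below $5\kappa^2+4\kappa-1$, and $32(\kappa^2-\kappa-2)-(5\kappa^2+4\kappa-1)=9(3\kappa-7)(\kappa+1)>0$.
\item The range of $z$: with $N$ the numerator of $z$, the chain $\ell_1<\kappa t-\kappa-2<N<(\kappa+1)t\le\ell_1t$ gives $2\le z\le t$.
\end{itemize}

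\textbf{Comparison with the paper.} With this correction your route is a valid alternative. The paper fixes the balanced split $s=\lfloor t/2\rfloor$, derives $\ell_1$ from $s$, and takes $\ell_2$ as a maximum of two ceilings. It then bounds the support by two affine functions checked at $2r=6t$. You instead fix both radii from $\kappa$ and let $s=z-1$ absorb the rounding, which reuses the algebra of Proposition~\ref{prop:middle-ratio} verbatim.
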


\begin{proof}
Suppose, for a contradiction, that
$d_t(\C)\ge2r+3$. Set
$s:=\lfloor t/2\rfloor$ and $u:=t-s$.
We apply Theorem~\ref{thm:successive}
with $(\ell_1,s)$ followed by
$(\ell_2,u)$.

Lemma~\ref{lem:ball-exponent}
gives the volume condition for the
first step if
$s-1\le\ell_1(t+1)-2r$.
We therefore choose
$$
\ell_1:=
\left\lceil\frac{2r+s-1}{t+1}\right\rceil
$$
and set $w:=\ell_1(s+1)$, the
support bound for that step. Before
the second step, the accumulated
nullity and support bound are $s$
and $w$. Its volume condition
follows from the weaker estimate
of Lemma~\ref{lem:ball-exponent}
provided that
$$
u-1-w+s=t-w-1
\le\ell_2(t+1)-2r.
$$
In addition, the lemma requires
$\ell_2t\ge r$. We satisfy both
inequalities by taking
$$
\ell_2:=\max\left\{
\left\lceil\frac rt\right\rceil,
\left\lceil\frac{2r-w+t-1}{t+1}\right\rceil
\right\}.
$$

We check the other hypotheses of
Lemma~\ref{lem:ball-exponent}, which can be used because $a=r-t\ge2t\ge64$.
First, $\ell_1t\ge r$. Indeed,
$\ell_1(t+1)\ge2r+s-1$, and
$$
t(2r+s-1)-r(t+1)
=r(t-1)+t(s-1)\ge0.
$$
We also have $\ell_2\le\ell_1$.
The first term defining $\ell_2$
is at most $\ell_1$ because
$\ell_1t\ge r$. For the second
term, $\ell_1(t+1)\ge2r+s-1$
and $w=\ell_1(s+1)\ge u$ imply
$$
2r-w+t-1\le\ell_1(t+1).
$$
Hence,
$\lceil(2r-w+t-1)/(t+1)\rceil\le\ell_1$.

Since $2r+s-1$ is an integer,
$$
\ell_1
\le\frac{2r+s+t-1}{t+1}
\le\frac{2r}{t}+2
\le\frac r{12}.
$$
The final inequality holds for
$t\ge32$ and $r\ge3t$.
Consequently $12\ell_1\le r$
and $12\ell_2\le r$. All the
hypotheses of
Lemma~\ref{lem:ball-exponent}
have now been checked for both
steps.

It remains to show that their
support bounds sum to at most
$2r+2$, that is,
\begin{equation}\label{eq:large-ratio-support}
w+\ell_2(u+1)\le2r+2.
\end{equation}
The bound on $\ell_1$ above gives
$$
w\le
\frac{(s+1)(2r+s+t-1)}{t+1}.
$$
We consider the two terms in the definition of $\ell_2$ separately,
removing the ceiling in each case by a simple upper bound.

If $\ell_2=\lceil r/t\rceil$, then
$\lceil r/t\rceil\le(r+t-1)/t$,
and hence
$w+\ell_2(u+1)\le U_1(2r)$, where
$$
U_1(y):=
\frac{(s+1)(y+s+t-1)}{t+1}
+\frac{(u+1)(y/2+t-1)}{t}.
$$
If instead
$$
\ell_2=
\left\lceil\frac{2r-w+t-1}{t+1}\right\rceil,
$$
then
$w+\ell_2(u+1)\le U_2(2r)$, where
$$
U_2(y):=
\frac{s(s+1)(y+s+t-1)}{(t+1)^2}
+\frac{(u+1)(y+2t-1)}{t+1}.
$$
For this second estimate, the
coefficient of $w$ is
$1-(u+1)/(t+1)=s/(t+1)>0$,
so we may substitute the upper
bound on $w$.

Both $U_1(y)$ and $U_2(y)$ are
affine functions of $y$ whose
coefficients of $y$ are smaller
than $1$. As $2r\ge6t$, it is
enough to check
$U_1(6t)\le6t+2$ and
$U_2(6t)\le6t+2$. Write
$t=2h$ or $t=2h+1$, so that
$h\ge16$. After clearing
positive denominators, these
four inequalities reduce to the
positivity of
$$
\begin{array}{c|cc}
& U_1 & U_2\\ \hline
t=2h
&2h^3-18h^2+h+1
&h^3-4h^2+8h+3\\
t=2h+1
&2h^2-9h-2
&h^3-3h^2+4.
\end{array}
$$
All four expressions are positive
for $h\ge16$. This proves
\eqref{eq:large-ratio-support}.

The two steps therefore satisfy
the volume conditions of
Theorem~\ref{thm:successive},
have total nullity $s+u=t$,
and have total support bound at
most $2r+2$. The theorem gives
$d_t(\C)\le2r+2$, contrary to
our assumption.
\end{proof}

Propositions~\ref{prop:large-diagonal},
\ref{prop:middle-ratio}, and
\ref{prop:large-ratio} cover the range
$r-t\ge6$. Together with
Theorem~\ref{thm:fivegaps}, they give
the result for every $t\ge32$.

\begin{theorem}\label{thm:highorder}
Let $\C$ be an $[n,k]_q$ linear
code with redundancy $\rho$, and let
$32\le t\le\min\{k,\rho\}$. Then
$$
d_t(\C)\le2R_t(\C)+2.
$$
\end{theorem}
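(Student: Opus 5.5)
The statement is essentially a case assembly of results already established. Set $r:=R_t(\C)$ and $a:=r-t$. We know $t\le r$, and the case $r=t$ (that is, $a=0$) is covered by Proposition~\ref{prop:inputs}, so we may assume $a\ge1$. We then split on the size of the gap $a$ and, for large gaps, on the ratio $r/t$, arguing by contradiction throughout: suppose $d_t(\C)\ge2r+3$, so that $\C$ is a counterexample for order $t$.

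First, if $1\le a\le5$, Theorem~\ref{thm:fivegaps} applies directly, since its hypothesis $R_t(\C)\le t+5$ holds. This gives $d_t(\C)\le2r+2$.

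Now let $a\ge6$. Since $t\ge32$, the hypotheses of Lemma~\ref{lem:ratio-new} and Lemma~\ref{lem:ball-exponent} are available, which is what the three propositions above rely on. We split according to $r/t$, and the three ranges together cover every $r\ge t$:
\begin{itemize}
\item if $r\le2t$, apply Proposition~\ref{prop:large-diagonal};
\item if $2t\le r\le3t$, apply Proposition~\ref{prop:middle-ratio};
\item if $r\ge3t$, apply Proposition~\ref{prop:large-ratio}.
\end{itemize}
Each of these propositions assumes only $32\le t\le\min\{k,\rho\}$ together with the corresponding range of $r$. Each therefore yields $d_t(\C)\le2r+2$, contradicting $d_t(\C)\ge2r+3$.

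No further estimates are needed, so there is no real obstacle. The only care required is bookkeeping. Proposition~\ref{prop:large-diagonal} internally invokes Theorem~\ref{thm:fivegaps} to reduce to $a\ge6$, and its Case~1 relies on a finite computer check, both of which are already incorporated in that proposition. The three intervals for $r/t$ overlap at their endpoints but leave no gaps, and the admissibility condition $t\le\min\{k,\rho\}$ is passed through unchanged to each cited result.
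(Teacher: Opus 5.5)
Your proof is correct and matches the paper's: you dispatch $r-t\le5$ via Theorem~\ref{thm:fivegaps}, then apply Propositions~\ref{prop:large-diagonal}, \ref{prop:middle-ratio} and \ref{prop:large-ratio} according to whether $r\le2t$, $2t\le r\le3t$ or $r\ge3t$. Your separate treatment of $a=0$ via Proposition~\ref{prop:inputs} is harmless but redundant, since Theorem~\ref{thm:fivegaps} already covers every code with $R_t(\C)\le t+5$, including $r=t$.
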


\begin{proof}
Set $r:=R_t(\C)$. If $r-t\le5$,
the result follows from
Theorem~\ref{thm:fivegaps}.
Otherwise $r-t\ge6$, and we apply
Proposition~\ref{prop:large-diagonal}
when $r\le2t$,
Proposition~\ref{prop:middle-ratio}
when $2t\le r\le3t$, or
Proposition~\ref{prop:large-ratio}
when $r\ge3t$.
\end{proof}

\subsection{The case \texorpdfstring{$3\le t\le31$}{3 <= t <= 31}}\label{sec:low}

It remains to consider the finitely many orders $3\le t\le31$. Our goal in
this subsection is to obtain, for each such $t$, an explicit upper bound on
the radius $r$ of a possible counterexample. The case $t=3$ requires some
separate estimates, which we treat first. We then show that whenever
$r>2t$, a counterexample must satisfy $\rho>2r$, and use this extra
redundancy to derive sharper Hamming-ball estimates. These estimates yield
uniform large-radius bounds for $4\le t\le31$. We finally return to $t=3$
and treat its remaining large-radius ranges.

\subsubsection{A preliminary estimate for \texorpdfstring{$t=3$}{t=3}}

We begin with an estimate specific to $t=3$. Besides reducing the
remaining range for $t=3$, it will also be used in the general redundancy
bound below.

For the proof we use the following exact recurrence for Hamming-ball
volumes:
\begin{equation}\label{eq:recurrence}
\vol_Q(n+2,r+1)-Q^2\vol_Q(n,r)
=
(Q-1)^{r+1}
\left(
\binom n{r+1}-(Q-1)\binom nr
\right).
\end{equation}
Indeed, separating the last two coordinates gives
$$
\vol_Q(n+2,r+1)
=\vol_Q(n,r+1)+2(Q-1)\vol_Q(n,r)
 +(Q-1)^2\vol_Q(n,r-1).
$$
Subtracting $Q^2\vol_Q(n,r)$ and using
$\vol_Q(n,j)-\vol_Q(n,j-1)=\binom nj(Q-1)^j$
gives \eqref{eq:recurrence}.

\begin{theorem}\label{thm:order3}
Let $\C$ be an $[n,k]_q$ linear code with redundancy $\rho$,
let $\min\{k,\rho\}\ge 3$, and set $r:=R_3(\C)$.
\begin{enumerate}[label=(\alph*)]
    \item Over every finite field, if $\rho\le2r+2$, then
    $d_3(\C)\le2r+2$.
    \item Over the binary field, the same conclusion holds if
    $\rho\le2r+3$.
\end{enumerate}
\end{theorem}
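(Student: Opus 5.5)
Assume $\C$ is a counterexample for order $3$, so that $d_3(\C)\ge 2r+3$, and derive a contradiction by playing the upper bound on $n$ from Proposition~\ref{prop:weighted-length} against the ball-covering bound \eqref{eq:cover}, with the recurrence \eqref{eq:recurrence} used to control the ball volume.

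By \eqref{eq:failure-range} with $t=3$ we have $\rho\ge 2r$. Under hypothesis (a) this gives $\rho\in\{2r,2r+1,2r+2\}$, and under (b) it gives $\rho\in\{2r,\dots,2r+3\}$. Hence $h:=\rho-2r+3\in\{3,4,5\}$, or $h\in\{3,\dots,6\}$ when $q=2$. Proposition~\ref{prop:weighted-length} then gives
\[
n\le 2r-3+U_q(h,4)=:2r+K,
\]
where $K=K(q,\rho-2r)$ is a constant independent of $r$. When $q=2$ we have $U_2(h,4)=2^h$ by \eqref{eq:binary-length}, so $K\le 61$ even for $h=6$. For $q\ge3$ write $4=(q+1)\alpha+\beta$. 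For $q=3$ this gives $\alpha=1,\beta=0$ and $U_3(h,4)=P_h$. For $q\ge4$ it gives $\alpha=0,\beta=4$ and $U_q(h,4)=1+3P_{h-1}$. The key point is that in all cases $K$ is $O(q^{h-1})$, and this is small compared with $Q:=q^3$ raised to the power allowed by the redundancy.

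Next, set $Q=q^3$. Since $\vol_Q(N,r)$ is nondecreasing in $N$, the ball-covering bound gives $\vol_Q(2r+K,r)\ge\vol_Q(n,r)\ge Q^{\rho}\ge Q^{2r}$. I want to contradict this, for every $r$, by using \eqref{eq:recurrence} as a descending induction in $r$.

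The right-hand side of \eqref{eq:recurrence} is nonpositive exactly when $\binom{N}{r+1}\le(Q-1)\binom Nr$, that is, when $N-r\le(Q-1)(r+1)$. With $N=2(r-1)+K$ and $r-1$ in place of $r$, this holds as soon as $r+K\le(Q-1)r$, which is true for all $r\ge1$ once $Q\ge 8$ and $K\le (Q-2)r$. In that range,
\[
\vol_Q(2r+K,r)\le Q^2\,\vol_Q(2(r-1)+K,r-1),
\]
and iterating down to a base radius $r_0$ gives
\[
\vol_Q(2r+K,r)\le Q^{2(r-r_0)}\vol_Q(2r_0+K,r_0).
\]
It therefore suffices to show $\vol_Q(2r_0+K,r_0)<Q^{2r_0+(\rho-2r)}$ at a suitable base radius $r_0$. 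Because we also know $r\ge4$ and $q<r$ from \eqref{eq:finite-parameters}, only finitely many small pairs $(q,r_0)$ and small offsets $\rho-2r$ remain. These base cases reduce to explicit integer inequalities, which I would verify directly, by computer if necessary. For large $q$ the base case is immediate because the leading term $\binom{2r_0+K}{r_0}(Q-1)^{r_0}$ is far below $Q^{2r_0}$. Any pair for which the induction condition fails is handled directly by the same finite check.

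The main obstacle is the tightest case: $q=2$ in part (b) with $\rho=2r+3$, where $h=6$ and $K=61$. Here $Q=8$, so the comparison requires $\vol_8(2r+61,r)<8^{2r+3}$. The induction condition $r+61\le 7r$ forces $r\ge11$. The base case at $r_0$ of this size must be checked numerically, and its margin may be thin. If it fails, the first thing I would try is the sharper bound of Lemma~\ref{lem:incidence}, taking the minimum in \eqref{eq:M} to lower $K$ before running the same induction. The second fallback is to restart the induction at a larger $r_0$ that is verified exactly. For $q\ge3$ and $\rho\le 2r+2$ I expect comfortable margins, since $Q\ge27$.
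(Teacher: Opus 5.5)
Your proposal is correct and follows the paper's proof. Both bound $n\le 2r+b$ via Proposition~\ref{prop:weighted-length}, use the recurrence \eqref{eq:recurrence} to show that $\vol_{q^3}(2r+b,r)/q^{3(2r+\varepsilon)}$ is nonincreasing in $r$ (from $r=q+1$ when $q\ge3$, and unimodal in the binary case), and then verify finitely many base inequalities exactly; in the tightest binary case the margin does hold without your fallbacks, since $\vol_8(79,9)\approx 8.46\cdot10^{18}<8^{21}\approx 9.22\cdot10^{18}$, with the value at $r=10$ equal to that at $r=9$.
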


\begin{proof}
We prove (b) first. Suppose that $q=2$ and that $\C$ is a counterexample. By \eqref{eq:failure-range}, $\rho\ge2r$.
Thus, $\varepsilon:=\rho-2r$ belongs to $\{0,1,2,3\}$. Set
$b_\varepsilon:=2^{\varepsilon+3}-3$. By Eq.~\eqref{eq:binary-length},
$n\le2r+b_\varepsilon$.

For fixed $\varepsilon$, put
$F_\varepsilon(r):=\vol_8(2r+b_\varepsilon,r)/8^{2r+\varepsilon}$.
By Eq.~\eqref{eq:recurrence}, the sign of
$F_\varepsilon(r+1)-F_\varepsilon(r)$ is the sign of
$b_\varepsilon-7-6r$. Since $r\ge3$, the largest value of
$F_\varepsilon(r)$ occurs at $r=3,3,4,9$ for
$\varepsilon=0,1,2,3$, respectively. When $\varepsilon=3$,
the same value is also attained at $r=10$. We compare these
values below.
\begin{center}
\small
\begin{tabular}{rrrrr}
\hline
$\varepsilon$ & $r$ & $2r+b_\varepsilon$
& $\vol_8(2r+b_\varepsilon,r)$ & $8^{2r+\varepsilon}$\\
\hline
0&3&11&59\,368&262\,144\\
1&3&19&340\,880&2\,097\,152\\
2&4&37&161\,272\,049&1\,073\,741\,824\\
3&9&79&8\,458\,053\,920\,168\,206\,726
&9\,223\,372\,036\,854\,775\,808\\
\hline
\end{tabular}
\end{center}
In every row the fourth entry is smaller than the fifth. Hence,
$F_\varepsilon(r)<1$ for every admissible $r$. Since
$n\le2r+b_\varepsilon$ and the volume is increasing in the length,
$\vol_8(n,r)\le\vol_8(2r+b_\varepsilon,r)<8^{\rho}$, contradicting
\eqref{eq:cover}.

\medskip

We now prove (a). The binary case follows from (b), so assume
$q\ge3$. If $q\ge r$, Proposition~\ref{prop:inputs} gives
$d_3(\C)\le2r+1$. We may therefore assume $r\ge q+1$.
Suppose that $\C$ is a counterexample with $\rho\le2r+2$,
and set $\varepsilon:=\rho-2r$ and $Q:=q^3$. By \eqref{eq:failure-range},
$\rho\ge2r$, so $\varepsilon\in\{0,1,2\}$.
By Proposition~\ref{prop:weighted-length},
$n\le2r+b$, where $b:=U_q(\varepsilon+3,4)-3$.
Eq.~\eqref{eq:U} gives $U_3(h,4)=P_h$ and
$U_q(h,4)=1+3P_{h-1}$ for $q\ge4$.

Set $F(r):=\vol_Q(2r+b,r)/Q^{2r+\varepsilon}$.
By Eq.~\eqref{eq:recurrence}, the sign of
$F(r+1)-F(r)$ is the sign of
$b-(Q-1)-(Q-2)r$. This is negative for $r\ge q+1$.
Indeed, for $q=3$ we have $b\le118<26+25\cdot4$.
For $q\ge4$, we have $b\le3q^3+3q^2+3q+1$,
and the claim follows from
$(q-1)q^3-3q^2-5q-4>0$.

It is therefore enough to prove $F(q+1)<1$. Put
$r_0:=q+1$ and $n_0:=2r_0+b$. For $q\ge5$,
the values of $n_0$ satisfy
\begin{center}
\begin{tabular}{rcc}
\hline
$\varepsilon$ & $n_0$ & Upper bound\\
\hline
0 & $5q+3$ & $Q$\\
1 & $3q^2+5q+3$ & $Q$\\
2 & $3q^3+3q^2+5q+3$ & $4Q$\\
\hline
\end{tabular}
\end{center}
We use
$\vol_Q(n_0,r_0)\le\binom{n_0}{r_0}Q^{r_0}$.
For $\varepsilon=0,1$, the table gives $n_0<Q$, and hence
$\binom{n_0}{r_0}\le n_0^{r_0}/r_0!<Q^{r_0}$.
This implies $F(r_0)<1$. For $\varepsilon=2$ and
$q\ge11$, we have $n_0<4Q$ and $r_0\ge12$, so
$\binom{n_0}{r_0}<
(\mathrm e n_0/r_0)^{r_0}<Q^{r_0}$,
since $\mathrm e<3$. Thus, $F(r_0)<1$ in this case as well.

The cases not covered by these estimates are listed below. In
each case, exact integer evaluation gives
$\vol_{q^3}(n_0,r_0)<q^{3(2r_0+\varepsilon)}$.
\begin{center}
\small
\begin{tabular}{rrrr@{\qquad}|rrrr}
\hline
$q$&$\varepsilon$&$r_0$&$n_0$
&$q$&$\varepsilon$&$r_0$&$n_0$\\
\hline
3&0&4&18&3&2&4&126\\
3&1&4&45&4&2&5&263\\
4&0&5&23&5&2&6&478\\
4&1&5&71&7&2&8&1214\\
&&&&8&2&9&1771\\
&&&&9&2&10&2478\\
\hline
\end{tabular}
\end{center}
These comparisons are also checked by the programs described in
Appendix~\ref{app:computations}. We have proved $F(q+1)<1$,
and the monotonicity of $F$ gives $F(r)<1$ for every
$r\ge q+1$. Since $n\le2r+b$ and the volume is increasing in the
length, $\vol_Q(n,r)\le\vol_Q(2r+b,r)<Q^{\rho}$, contradicting
\eqref{eq:cover}.
\end{proof}

\subsubsection{Redundancy of a putative counterexample}

We next show that, if the generalized covering radius $r=R_t(\C)$
is larger than $2t$, the redundancy of a counterexample must satisfy
$\rho>2r$. We will use this inequality to bound the size of Hamming
balls after puncturing $\C$, as required by Theorem~\ref{thm:successive}.

\begin{lemma}\label{lem:low-ratio}
Let $\C$ be an $[n,k]_q$ linear code with redundancy $\rho$, let
$3\le t\le\min\{k,\rho\}$, and set $r:=R_t(\C)$. If $\C$ is a
counterexample for order $t$ and $r>2t$, then $\rho>2r$.
\end{lemma}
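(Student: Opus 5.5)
Assume $\C$ is a counterexample for order $t$ with $3\le t$, $r>2t$, and suppose for contradiction that $\rho\le2r$. By \eqref{eq:failure-range} we have $\rho\ge2r-t+3$. Hence the integer $\gamma=\rho-m=\rho-2r+t-2$ satisfies $1\le\gamma\le t-2$, and the quotient dimension $\rho-2r+t=\gamma+2$ in Proposition~\ref{prop:weighted-length} is at most $t$. The plan is the same comparison of the two length bounds used in Lemma~\ref{lem:ratio-new}, now exploiting $r>2t$ rather than $t\ge32$. For the upper bound we use \eqref{eq:weighted-length}: $n\le 2r-t+U_q(\gamma+2,t+1)$. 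Here $U_q(\gamma+2,t+1)$ is at most roughly $\lceil (t+1)/(q+1)\rceil P_{\gamma+2}$, which is $O(t\,q^{\gamma+1})$ and in particular bounded by a function of $t$ and $q$ alone, independent of $r$. So $n\le 2r+c_q(t)$ with $c_q(t)=U_q(t,t+1)-t$.

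For the lower bound we use the ball-covering bound \eqref{eq:cover}, $\vol_{q^t}(n,r)\ge q^{t\rho}\ge q^{t(2r-t+3)}$. The key point is that when $n\le 2r+c$ with $c$ fixed and $r$ large, we have $\vol_{q^t}(n,r)\le \binom{n}{r}q^{tr}\le 2^{2r+c}q^{tr}$. This is far smaller than $q^{t(2r-t+3)}=q^{tr}\cdot q^{t(r-t+3)}$ once $q^{t(r-t+3)}>2^{2r+c}$. Since $r>2t$ gives $r-t+3>r/2$, it suffices that $q^{tr/2}\ge 2^{3r/2}\cdot 2^{r/2}\cdots$; more precisely $t\ge3$ gives $q^{tr/2}\ge 2^{3r/2}$, which beats $2^{2r}$ only marginally. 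So the clean argument will split the remaining error into the factor $2^{c}$ and the comparison of $2^{2r}$ against $q^{t(r-t+3)}$. I would make this exact with the recurrence \eqref{eq:recurrence}, as in Theorem~\ref{thm:order3}. Set $Q=q^t$ and, for fixed $\gamma$ (equivalently fixed $\varepsilon=\rho-2r$), let $F(r):=\vol_Q(2r+b,r)/Q^{\rho}$ with $b=U_q(\gamma+2,t+1)-t$. Then $F(r+1)-F(r)$ has the sign of $\binom{2r+b}{r+1}-(Q-1)\binom{2r+b}{r}$, that is, of $b+r-(Q-1)(r+1)$, roughly $b-(Q-2)r$. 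Since $b\le t\,q^{t}/(q-1)$ roughly and $r>2t$, this is negative as soon as $(Q-2)\cdot 2t$ exceeds $b$, which should hold for all $q\ge2$, $t\ge3$ (check small cases like $q=2$, $t=3$ directly). Thus $F$ is decreasing in $r$ on $r>2t$, and it suffices to check $F(2t+1)<1$ for each $\gamma\in[1,t-2]$. Here $\rho\le2r$ forces the exponent $t\rho$ to be at least $t(2r-t+3)$, and we take the worst case $\rho=2r-t+3+(\gamma-1)$ with the matching $b$.

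The main obstacle is this base-case inequality $\vol_{q^t}(4t+2+b,2t+1)<q^{t\rho}$, which must hold uniformly in $t\ge3$, $q$, and $\gamma$. I would first bound $\vol\le\binom{n_0}{r_0}Q^{r_0}$ with $\binom{n_0}{r_0}\le(\mathrm e n_0/r_0)^{r_0}$. We have $n_0/r_0=O(q^{\gamma+1})$, while $Q^{\rho-r_0}/Q^{0}$ has exponent $t(\rho-r_0)\ge t(\gamma+2t-1)$ in $q$. Hence the required inequality $(\mathrm e n_0/r_0)^{2t+1}<q^{t(\rho-2t-1)}$ reduces to comparing $(2t+1)(\gamma+1+O(1))\log q$ with $t(2t+\gamma)\log q$. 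This is comfortable for $t$ beyond a small threshold. The finitely many small $(q,t,\gamma)$ where the constants bite, presumably $t=3,4$ with $q=2,3$, should be settled by exact integer evaluation as in Theorem~\ref{thm:order3}. Alternatively, for $t=3$, Theorem~\ref{thm:order3} already gives $\rho\ge2r+3>2r$ directly.
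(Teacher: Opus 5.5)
\textbf{There is a genuine gap in the base case.} Your route is different from the paper's, and most of its skeleton is sound. Fix $\varepsilon=\rho-2r\in[3-t,0]$. Then the constant $b=U_q(\gamma+2,t+1)-t<\frac{(t+1)q^t}{q^2-1}$ does not depend on $r$. By \eqref{eq:recurrence}, $F(r)=\vol_Q(2r+b,r)/Q^{2r+\varepsilon}$ decreases for $r\ge 2t+1$, since $b$ is far below $(Q-2)(2t+1)$. The case $t=3$ can be taken from Theorem~\ref{thm:order3}, exactly as the paper does.

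The gap is the base case $F(2t+1)<1$, which you correctly call the main obstacle but then evaluate wrongly. At $r_0=2t+1$ one has $\rho_0=2r_0-t+2+\gamma=3t+\gamma+4$. So the available exponent is $t(\rho_0-r_0)=t(t+\gamma+3)$, not $t(2t+\gamma-1)$. With the correct value,
\[
t(t+\gamma+3)-(2t+1)(\gamma+1)=(t+1)(t-\gamma)-1,
\]
which is only $2t+1$ at $\gamma=t-2$, the case $\rho=2r$. The quadratic terms cancel. After taking $r_0$-th roots in $\binom{n_0}{r_0}\le(\mathrm e n_0/r_0)^{r_0}$, the requirement at $\gamma=t-2$ is exactly $\mathrm e\,n_0/r_0<q^{\gamma+2}$.

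This is a constant-factor inequality whose margin does not grow with $t$. So your plan of handling large $t$ asymptotically ("comfortable beyond a small threshold") and small $t$ by exact evaluation does not go through as written. An estimate $n_0/r_0=O(q^{\gamma+1})$ with an untracked constant $C$ cannot decide it: you would need $\mathrm e C<q$.

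The inequality is nevertheless true, so your argument can be completed. When $\beta\ge1$, the definition of $U_q$ gives $(q+1)U_q(h,B)-BP_h=(\beta-q-1)(P_{h-1}-1)\le 0$. Hence $U_q(h,B)\le\frac{B}{q+1}P_h$ and $n_0<3t+2+\frac{(t+1)q^{\gamma+2}}{q^2-1}$. For $t\ge4$ this gives
\[
\mathrm e\,n_0/r_0<\tfrac{14\mathrm e}{9}+\tfrac{5\mathrm e}{27}q^{\gamma+2},
\]
which is below $q^{\gamma+2}$ whenever $q^{\gamma+2}\ge 9$. Since $t(t+\gamma+3)/(2t+1)\ge\gamma+2$ for all $\gamma\le t-2$, this settles every case except $q=2$, $\gamma=1$. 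There the exponent $t(t+4)/(2t+1)\ge 32/9$ gives $2^{32/9}>11$, while the left side is below $8.26$.

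\textbf{Comparison with the paper.} The paper needs no base case at all. It applies Lemma~\ref{lem:incidence} with $x=m-b$, where $q^b\ge t$, to get $n\le m+(t+b+1)q^\gamma$. It compares this with $n>(r/\mathrm e)\,q^{\gamma+2+a(2-\lambda)}$ and divides by $q^\gamma$. The hypothesis $r>2t$ then gives $t+b+1<r$, and a four-case split on $\gamma$ finishes. Your version replaces that split with a monotonicity argument modelled on Theorem~\ref{thm:order3}. The price is a base-case estimate that needs the sharp constant in $U_q$.
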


\begin{proof}
For $t=3$, the result follows from Theorem~\ref{thm:order3}. Suppose therefore
that $t\ge4$ and, by contradiction, that
$$
\lambda:=\frac{\rho}{r}\le2.
$$
Set $a:=r-t$ and define
$$
m:=2r-t+2
\qquad \textnormal{and} \qquad
\gamma:=\rho-m\ge1,
$$
as in \eqref{eq:gamma}, and let $b$ be the least positive integer such that $q^b\ge t$. Then
$b\le t-1$.

Applying Lemma~\ref{lem:incidence} with the parameter $m-b$ gives
\begin{equation}\label{eq:capratio}
n\le m+(t+b+1)q^\gamma.
\end{equation}
Indeed,
$$
\frac{t+b-1}{1-q^{-b}}
\le
t+\frac{bt}{t-1}=t+b+\frac{b}{t-1}
\le t+b+1,
$$
where the first inequality follows from $q^b\ge t$, and the last from
$b\le t-1$.

On the other hand, Eq.~\eqref{eq:coverrough} gives
\begin{equation}\label{eq:coverratio}
nq^{-\gamma}>
\frac r{\mathrm e}q^{2+a(2-\lambda)}.
\end{equation}
Since
$$
\gamma=\rho-(2r-t+2)=r(\lambda-2)+t-2,
$$
we also have
$$
a(2-\lambda)=\frac{a(t-2-\gamma)}{r}.
$$

Suppose first that $\gamma\ge3$. Since $q^\gamma\ge8$, $m<2r$, and
$t+b+1\le2t<r$, Eq.~\eqref{eq:capratio} gives
$$
nq^{-\gamma}
<
\frac r4+r
=
\frac{5r}{4}.
$$
In contrast, Eq.~\eqref{eq:coverratio} gives
$$
nq^{-\gamma}>
\frac{4r}{\mathrm e}>
\frac{16r}{11},
$$
a contradiction.

Suppose next that $\gamma=1$. Since $r>2t$, we have $a/r>1/2$, while
$t-3\ge1$. Hence,
$$
2+a(2-\lambda)
=
2+\frac{a(t-3)}r
>
\frac52.
$$
Therefore, Eq.~\eqref{eq:coverratio} gives
$$
nq^{-1}>
\frac{2^{5/2}r}{\mathrm e}>2r.
$$
On the other hand, since $m=2r-t+2$, $t+b+1\le2t$ and $r\ge2t+1$,
Eq.~\eqref{eq:capratio} gives
$$
nq^{-1}\le\frac m2+t+b+1\le r+\frac{3t}{2}+1<2r,
$$
again a contradiction.

Now, let $\gamma=2$ and $t\ge5$. As above,
$$
2+a(2-\lambda)
=
2+\frac{a(t-4)}r
>
\frac52,
$$
so Eq.~\eqref{eq:coverratio} gives $nq^{-2}>2r$. But
Eq.~\eqref{eq:capratio} gives
$$
nq^{-2}
<
\frac r2+r
=
\frac{3r}{2},
$$
which is impossible.

It remains to treat the case $t=4$ and $\gamma=2$. Here $b\le2$ and, since
$r>8$, we have $r\ge9$. Thus, Eq.~\eqref{eq:capratio} gives
$$
nq^{-2}
\le
\frac r2+7
<
\frac{16r}{11},
$$
whereas Eq.~\eqref{eq:coverratio} gives
$$
nq^{-2}>
\frac{4r}{\mathrm e}>
\frac{16r}{11}.
$$
This is again a contradiction.

All possible values of $\gamma$ have been excluded. Hence, $\rho>2r$.
\end{proof}

\subsubsection{Hamming balls after puncturing}\label{sec:ball}

For $t\le31$, the uniform estimate used in the large-order case is no longer
strong enough in all the ranges that remain. We therefore derive a sharper
lower bound for ordinary Hamming balls after puncturing. The argument uses
the covering lower bound on $n$, together with an elementary estimate for
binomial coefficients.

We begin with the following elementary estimate.

\begin{lemma}\label{lem:factorial}
For every integer $j\ge1$,
$$
j!\le \mathrm e\,j\left(\frac{j}{\mathrm e}\right)^j
<3j\left(\frac{j}{\mathrm e}\right)^j.
$$
Moreover, for every $N\ge j$,
$$
\binom Nj\ge\frac{(N-j+1)^j}{j!}.
$$
\end{lemma}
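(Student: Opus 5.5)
The plan is to prove the two inequalities separately, both by elementary means.

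For the factorial bound, I would use the standard comparison of $\log j!=\sum_{i=1}^{j}\log i$ with an integral. Since $\log$ is increasing, $\log i\le\int_i^{i+1}\log x\,dx$ for $i\le j-1$. Keeping the last term $\log j$ separately then gives
$$
\log j!\le\log j+\int_1^{j}\log x\,dx=\log j+j\log j-j+1 .
$$
Exponentiating yields $j!\le \mathrm e\,j\,(j/\mathrm e)^j$. One checks that $j=1$ is included: the bound reads $1\le \mathrm e\cdot 1\cdot \mathrm e^{-1}=1$, with equality. The strict inequality $\mathrm e\,j(j/\mathrm e)^j<3j(j/\mathrm e)^j$ is just $\mathrm e<3$. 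If one prefers to avoid integrals, an induction works equally well. The ratio of consecutive right-hand sides is
$$
\frac{(j+1)^{j+2}\mathrm e^{-j}}{j^{j+1}\mathrm e^{-j+1}}
=\frac{(j+1)}{\mathrm e}\left(1+\frac1j\right)^{j+1},
$$
which is at least $j+1$ because $(1+1/j)^{j+1}\ge \mathrm e$. This is the only mildly nontrivial fact needed, and it follows from $\log(1+x)\ge x/(1+x)$ applied with $x=1/j$.

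For the binomial bound, I would write
$$
\binom Nj=\frac{1}{j!}\prod_{i=0}^{j-1}(N-i).
$$
Each factor satisfies $N-i\ge N-j+1$ for $0\le i\le j-1$, and it is positive since $N\ge j$. Hence the product is at least $(N-j+1)^j$, which gives the claim.

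Neither step is a real obstacle. The only point requiring care is the direction of the integral comparison, and the endpoint term $\log j$ that produces the extra factor $j$ in the bound.
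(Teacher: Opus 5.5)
Your proposal is correct and follows essentially the same route as the paper: the integral comparison $\log(j!)\le\int_1^j\log x\,dx+\log j$ followed by exponentiation and $\mathrm e<3$, and the product formula with each factor $N-i\ge N-j+1$ for the binomial bound. Your induction via $(1+1/j)^{j+1}\ge\mathrm e$ is a valid optional alternative that the paper does not need.
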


\begin{proof}
The lower bound for the binomial coefficient follows immediately from the product formula
$$
\binom Nj
=
\prod_{i=0}^{j-1}\frac{N-i}{j-i}
\ge
\frac{(N-j+1)^j}{j!}.
$$

For the first inequality, the monotonicity of $\log x$ gives
$$
\log(j!)
\le
\int_1^j\log x\,dx+\log j
=
(j+1)\log j-j+1.
$$
Exponentiating, we obtain
$$
j!
\le
\mathrm e\,j\left(\frac{j}{\mathrm e}\right)^j.
$$
The strict inequality follows from $\mathrm e<3$.
\end{proof}

We now turn the lower bound on $n$ from Eq.~\eqref{eq:coverrough} into a
lower bound for ordinary Hamming balls after $W$ coordinates have been
punctured. The quantity $\beta_q(W,\ell)$ introduced below measures the loss
caused by puncturing and by the factorial estimate. When it is sufficiently
large, the resulting bound takes the simple exponential form needed in the
successive puncturing argument.

\begin{lemma}\label{lem:sharp}
Let $\C$ be an $[n,k]_q$ linear code with redundancy $\rho$, let
$t\in[\min\{k,\rho\}]$, and set $r:=R_t(\C)$. Suppose that
$$\lambda:=\frac{\rho}{r}>2.
$$
For integers $W\ge0$ and $\ell\ge1$, define
\begin{equation}\label{eq:beta}
\beta_q(W,\ell):=
\left(1-\frac1q\right)
\left(
\frac r\ell-\frac{11(W+\ell-1)}{4\ell q^t}
\right).
\end{equation}
If $\beta_q(W,\ell)>0$, then $n-W\ge \ell$ and
\begin{equation}\label{eq:sharpball}
\vol_q(n-W,\ell)>
\frac{\beta_q(W,\ell)^\ell}{3\ell}\,
q^{\ell t(\lambda-1)+\ell}.
\end{equation}

Suppose in addition that $\ell t\ge r$ and that
$\beta_q(W,\ell)\ge c$ for a real number $c$ for which there is an
integer $j\ge1$ with
\begin{equation}\label{eq:cj}
\ell\ge j,\qquad
c^j\ge3j,\qquad
c\ge\frac{j+1}{j}.
\end{equation}
Then
\begin{equation}\label{eq:simpleexponent}
\vol_q(n-W,\ell)>
q^{\rho+\ell(t+1)-2r}.
\end{equation}

For fixed $t,r,W,\ell$, the same conclusions hold for every
$q\ge q_0$ if the corresponding conditions on $\beta$ are checked
with $\beta_{q_0}(W,\ell)$ in place of $\beta_q(W,\ell)$.
\end{lemma}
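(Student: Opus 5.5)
We will bound the volume from below by its single top term and feed in the covering lower bound on $n$. Throughout, $N:=n-W$.

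\emph{Step 1: show $N\ge\ell$.} From \eqref{eq:coverrough}, $n>\frac r{\mathrm e}q^{t(\lambda-1)}$. Since $\lambda>2$, this gives $n>\frac r{\mathrm e}q^t$, and hence
$$\frac{N-\ell+1}{\ell}>\frac{r}{\mathrm e\,\ell}q^{t(\lambda-1)}-\frac{W+\ell-1}{\ell}.$$
Because $\mathrm e<11/4$ and $q^{t(\lambda-1)}\ge q^t$, we can factor out $q^{t(\lambda-1)}/\mathrm e$ and bound the subtracted term. This yields
$$\frac{N-\ell+1}{\ell}>\frac{4}{11}q^{t(\lambda-1)}\Bigl(\frac r\ell-\frac{11(W+\ell-1)}{4\ell q^{t}}\Bigr).$$
A cleaner route is to use $1/\mathrm e>4/11$ directly together with $q^{t(\lambda-1)}\ge q^t$. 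The bracket is positive whenever $\beta_q(W,\ell)>0$, so $N\ge\ell$.

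\emph{Step 2: the volume bound \eqref{eq:sharpball}.} Keep only the $i=\ell$ term, $\vol_q(N,\ell)\ge\binom N\ell(q-1)^\ell$. Apply Lemma~\ref{lem:factorial}: $\binom N\ell\ge (N-\ell+1)^\ell/\ell!$ and $\ell!<3\ell(\ell/\mathrm e)^\ell$. This gives
$$\vol_q(N,\ell)>\frac1{3\ell}\Bigl(\frac{\mathrm e(N-\ell+1)(q-1)}{\ell}\Bigr)^\ell.$$
Now insert $n>\frac r{\mathrm e}q^{t(\lambda-1)}$ without weakening the factor $\mathrm e$; this is why the factorial bound carries the factor $(\ell/\mathrm e)^\ell$. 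We get
$$\mathrm e(N-\ell+1)/\ell> q^{t(\lambda-1)}\Bigl(\frac r\ell-\frac{\mathrm e(W+\ell-1)}{\ell q^{t(\lambda-1)}}\Bigr)\ge q^{t(\lambda-1)}\Bigl(\frac r\ell-\frac{11(W+\ell-1)}{4\ell q^t}\Bigr).$$
Writing $(q-1)=q(1-1/q)$ then gives exactly $\beta_q(W,\ell)^\ell q^{\ell t(\lambda-1)+\ell}/(3\ell)$, which is \eqref{eq:sharpball}. The only delicate point here is the bookkeeping of the factor $\mathrm e$.

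\emph{Step 3: the simplified exponent \eqref{eq:simpleexponent}.} First compare exponents: $\ell t(\lambda-1)+\ell-(\rho+\ell(t+1)-2r)=(\ell t-r)(\lambda-2)\ge0$, using $\ell t\ge r$ and $\lambda>2$. It therefore suffices to show $\beta^\ell\ge3\ell$. From \eqref{eq:cj}, $c^j\ge3j$. For $i\ge j$, induct using $c\ge (j+1)/j\ge (i+1)/i$: the ratio $3(i+1)/(3i)=(i+1)/i\le c$. This gives $c^i\ge 3i$ for all $i\ge j$, in particular for $i=\ell$. Hence $\beta^\ell\ge c^\ell\ge3\ell$.

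\emph{Step 4: monotonicity in $q$.} For fixed $t,r,W,\ell$, $\beta_q(W,\ell)$ is increasing in $q$: the factor $1-1/q$ increases, and the subtracted term decreases in $q$. Any positivity or $\ge c$ check made at $q_0$ therefore transfers to all $q\ge q_0$. The main obstacle is only the careful handling of constants in Step 2; everything else is routine.
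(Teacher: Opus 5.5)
Your proof is correct and follows essentially the same route as the paper: the covering bound $n>\frac{r}{\mathrm e}q^{t(\lambda-1)}$, the top-term estimate $\binom{n-W}{\ell}(q-1)^\ell$ combined with Lemma~\ref{lem:factorial}, the exponent comparison $(\ell t-r)(\lambda-2)\ge0$, and the induction $c^i\ge3i$. The one phrase to tighten is in Step~4: $\beta_q(W,\ell)$ is not increasing in $q$ unconditionally, since a negative second factor can make the product decrease; however, the hypothesis $\beta_{q_0}(W,\ell)>0$ makes both factors positive and nondecreasing for $q\ge q_0$, and this is exactly how the paper concludes.
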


\begin{proof}
By Eq.~\eqref{eq:coverrough},
$$
n>\frac r{\mathrm e}q^{t(\lambda-1)}.
$$
Since $\lambda>2$, we also have
$q^{t(\lambda-1)}>q^t$. Using $\mathrm e<11/4$, it follows that
$$
n-W-\ell+1>
\frac r{\mathrm e}q^{t(\lambda-1)}
\left(
1-\frac{11(W+\ell-1)}{4rq^t}
\right).
$$
If $\beta_q(W,\ell)>0$, the expression in parentheses is positive, and hence
$n-W-\ell+1>0$. Since this is an integer, $n-W\ge\ell$.

By Lemma~\ref{lem:factorial},
$$
\binom{n-W}{\ell}
\ge
\frac{(n-W-\ell+1)^\ell}{\ell!}
>
\frac1{3\ell}
\left(
\frac{\mathrm e(n-W-\ell+1)}{\ell}
\right)^\ell.
$$
Therefore,
\begin{align*}
\vol_q(n-W,\ell)
&\ge
\binom{n-W}{\ell}(q-1)^\ell\\
&>
\frac1{3\ell}
\left(
\frac{\mathrm e(n-W-\ell+1)(q-1)}{\ell}
\right)^\ell\\
&>
\frac{\beta_q(W,\ell)^\ell}{3\ell}
q^{\ell t(\lambda-1)+\ell},
\end{align*}
which proves Eq.~\eqref{eq:sharpball}.

Suppose now that $\ell t\ge r$ and $\beta_q(W,\ell)\ge c$, with
$$
\ell\ge j,\qquad
c^j\ge3j,\qquad
c\ge\frac{j+1}{j}.
$$
These conditions imply
$$
c^\ell\ge3\ell
$$
for every $\ell\ge j$. Indeed, this holds for $\ell=j$, and if
$c^\ell\ge3\ell$, then
$$
c^{\ell+1}\ge3\ell c\ge3(\ell+1),
$$
since
$$
c\ge\frac{j+1}{j}\ge\frac{\ell+1}{\ell}.
$$
Hence,
$$
\frac{\beta_q(W,\ell)^\ell}{3\ell}\ge1.
$$

Moreover,
\begin{align*}
\ell t(\lambda-1)+\ell-\rho
&=(\ell t-r)\lambda-\ell t+\ell\\
&\ge2(\ell t-r)-\ell t+\ell\\
&=\ell(t+1)-2r,
\end{align*}
where we used $\lambda>2$ and $\ell t\ge r$. Together with the strict
inequality in Eq.~\eqref{eq:sharpball}, this gives
$$
\vol_q(n-W,\ell)>
q^{\rho+\ell(t+1)-2r},
$$
which proves Eq.~\eqref{eq:simpleexponent}.

Finally, fix $t,r,W,\ell$ and suppose that
$\beta_{q_0}(W,\ell)>0$. Both factors in the definition of
$\beta_q(W,\ell)$ are then positive and nondecreasing for
$q\ge q_0$. Hence,
$\beta_q(W,\ell)\ge\beta_{q_0}(W,\ell)$ for every such $q$,
which proves the last assertion.
\end{proof}

For the finite parameter range considered later, we will use an
exact rational version of the preceding estimate. It avoids the
factorial approximation in Lemma~\ref{lem:factorial} and allows the
required inequalities to be checked using integer arithmetic.

\begin{lemma}\label{lem:K}
Let $\C$ be an $[n,k]_q$ linear code with redundancy $\rho$, let
$t\in[\min\{k,\rho\}]$, set $r:=R_t(\C)$, and suppose that
$\lambda:=\rho/r>2$. Let $2\le q_0\le q$, and let $W\ge0$ and $\ell\ge1$
be integers with $\ell t\ge r$. Define
\begin{equation}\label{eq:Ksymbolic}
K_{q_0}(W,\ell):=
\left(1-\frac1{q_0}\right)
\left(
\frac{4r}{11}-\frac{W+\ell-1}{q_0^t}
\right).
\end{equation}
If
$$
K_{q_0}(W,\ell)>0
\qquad\text{and}\qquad
K_{q_0}(W,\ell)^\ell\ge\ell!,
$$
then $n-W\ge\ell$ and
$$
\vol_q(n-W,\ell)>
q^{\rho+\ell(t+1)-2r}.
$$
\end{lemma}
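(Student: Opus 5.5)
We follow the proof of Lemma~\ref{lem:sharp}, but keep the exact factor $\ell!$ in the binomial estimate instead of passing through Lemma~\ref{lem:factorial}'s Stirling-type bound, and replace $1/\mathrm e$ by the rational constant $4/11$.

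\emph{Step 1: a lower bound on $n-W-\ell+1$.} By \eqref{eq:coverrough}, $n>\frac{r}{\mathrm e}q^{t(\lambda-1)}>\frac{4r}{11}q^{t(\lambda-1)}$, using $\mathrm e<11/4$. Since $\lambda>2$, we have $q^{t(\lambda-1)}>q^t\ge q_0^t$. Hence
\[
n-W-\ell+1>q^{t(\lambda-1)}\left(\frac{4r}{11}-\frac{W+\ell-1}{q^{t(\lambda-1)}}\right)\ge q^{t(\lambda-1)}\left(\frac{4r}{11}-\frac{W+\ell-1}{q_0^t}\right).
\]
The hypothesis $K_{q_0}(W,\ell)>0$ makes the bracket positive. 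Since $n-W-\ell+1$ is an integer, it follows that $n-W\ge\ell$.

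\emph{Step 2: the volume estimate.} Use $\vol_q(n-W,\ell)\ge\binom{n-W}{\ell}(q-1)^\ell$ together with the product bound $\binom{N}{\ell}\ge (N-\ell+1)^\ell/\ell!$ from Lemma~\ref{lem:factorial}, valid since $N=n-W\ge\ell$. Write $(q-1)=q(1-1/q)\ge q(1-1/q_0)$. Combining with Step 1 gives
\[
\vol_q(n-W,\ell)>\frac{1}{\ell!}\,\Bigl(q^{t(\lambda-1)}\cdot q\cdot K_{q_0}(W,\ell)\Bigr)^{\ell}=\frac{K_{q_0}(W,\ell)^\ell}{\ell!}\,q^{\ell t(\lambda-1)+\ell}.
\]
The inequality is strict because Step 1 is strict and all factors are positive. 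The hypothesis $K_{q_0}(W,\ell)^\ell\ge\ell!$ then yields $\vol_q(n-W,\ell)>q^{\ell t(\lambda-1)+\ell}$.

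\emph{Step 3: the exponent.} As in Lemma~\ref{lem:sharp},
\[
\ell t(\lambda-1)+\ell-\rho=(\ell t-r)\lambda-\ell t+\ell\ge 2(\ell t-r)-\ell t+\ell=\ell(t+1)-2r,
\]
using $\lambda>2$ and $\ell t\ge r$. This gives the claim.

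The only delicate point is the bookkeeping in Step 1. The division by $q^{t(\lambda-1)}$ must be handled so that replacing $q$ by $q_0$ only weakens the bound. This works because $q^{t(\lambda-1)}\ge q_0^t$ appears in a denominator of a subtracted term, and $1-1/q\ge1-1/q_0$. Everything else is routine.
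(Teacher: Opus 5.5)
Your proof is correct and follows the paper's argument essentially step for step. Both use the covering bound with $\mathrm e<11/4$, the exact product bound $\binom{N}{\ell}\ge(N-\ell+1)^\ell/\ell!$, the comparisons $q-1\ge q(1-1/q_0)$ and $q^{t(\lambda-1)}\ge q_0^t$, and the same exponent computation as in Lemma~\ref{lem:sharp}. The only cosmetic difference is that you pass to $q_0$ directly via $q^{t(\lambda-1)}\ge q_0^t$, whereas the paper first uses $q^{t(\lambda-1)}>q^t$ and then the monotonicity in $q$ of the bracket.
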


\begin{proof}
By \eqref{eq:coverrough} and $\mathrm e<11/4$, we have
$n>\frac{4r}{11}q^{t(\lambda-1)}$. Since $\lambda>2$, also
$q^{t(\lambda-1)}>q^t$, and hence
$W+\ell-1\le\frac{W+\ell-1}{q^t}\,q^{t(\lambda-1)}$. Therefore,
$$
n-W-\ell+1>
\left(
\frac{4r}{11}-\frac{W+\ell-1}{q^t}
\right)
q^{t(\lambda-1)}.
$$
Both $1-1/q$ and $\frac{4r}{11}-\frac{W+\ell-1}{q^t}$ are nondecreasing
in $q$, and both are positive at $q=q_0$ because $K_{q_0}(W,\ell)>0$.
Multiplying by $q-1=q(1-1/q)$ therefore gives
$$
(n-W-\ell+1)(q-1)>K_{q_0}(W,\ell)\,q^{t(\lambda-1)+1}>0.
$$
In particular $n-W-\ell+1>0$, so $n-W\ge\ell$. The product formula of
Lemma~\ref{lem:factorial} now gives
$$
\vol_q(n-W,\ell)
\ge\binom{n-W}{\ell}(q-1)^\ell
\ge\frac{\bigl((n-W-\ell+1)(q-1)\bigr)^\ell}{\ell!}
>\frac{K_{q_0}(W,\ell)^\ell}{\ell!}\,q^{\ell(t(\lambda-1)+1)}
\ge q^{\ell(t(\lambda-1)+1)}.
$$
Finally, as in the proof of Lemma~\ref{lem:sharp}, $\lambda>2$ and
$\ell t\ge r$ give
$\ell(t(\lambda-1)+1)-\rho=(\ell t-r)\lambda-\ell t+\ell\ge\ell(t+1)-2r$.
\end{proof}

After clearing denominators, the hypotheses of Lemma~\ref{lem:K} can be
checked using integer arithmetic.

We now use these estimates to obtain an explicit large-radius threshold for each order $4\le t\le31$.

\subsubsection{The case of large \texorpdfstring{$r$}{r} for \texorpdfstring{$4\le t\le31$}{4 <= t <= 31}}

We now apply Lemma~\ref{lem:sharp} to obtain, for each fixed order
$4\le t\le31$, an explicit threshold above which the conjecture holds.
Since all the thresholds considered below satisfy $r>2t$, Lemma~\ref{lem:low-ratio}
ensures that any counterexample in this range satisfies $\rho>2r$.

As in the proof of Proposition~\ref{prop:large-ratio}, we use two
successive puncturing steps. We choose their parameters so that the
inequalities required by Theorem~\ref{thm:successive}, together with
the bound on the total support, follow from affine inequalities in
$r$. For each $t$, we check these inequalities at $r=r_0(t)$
and show that their slopes are nonnegative. They then hold for every
$r\ge r_0(t)$.

Fix $4\le t\le31$, choose an integer $s\in[t]$, and set
$$
u:=t-s.
$$
For an integer $r$, define
\begin{equation}\label{eq:tailchain}
\ell:=\left\lceil\frac{2r+s-1}{t+1}\right\rceil,
\qquad
w:=(s+1)\ell,
\qquad
f:=\max\left\{
\left\lceil\frac rt\right\rceil,
\left\lceil\frac{2r-w+t-1}{t+1}\right\rceil
\right\}.
\end{equation}

We apply Theorem~\ref{thm:successive} in two steps, with
$(\ell_1,s_1)=(\ell,s)$ and $(\ell_2,s_2)=(f,u)$. The first step
punctures $w=\ell(s+1)$ coordinates. By the definitions of
$\ell$ and $f$, we have $ft\ge r$ and $\ell t\ge r$ (the latter because
$t(2r+s-1)-r(t+1)=r(t-1)+t(s-1)\ge0$, as in the proof of
Proposition~\ref{prop:large-ratio}), and
$$
\ell(t+1)-2r\ge s-1,
\qquad
f(t+1)-2r\ge t-w-1.
$$
Thus, whenever Lemma~\ref{lem:sharp} gives
\eqref{eq:simpleexponent} for both steps, the resulting volume
bounds imply the inequalities required by
Theorem~\ref{thm:successive}.

The ceilings in \eqref{eq:tailchain} make it inconvenient to bound the
two steps directly as functions of $r$. We use the following affine
lower and upper bounds for $\ell$ and $w$, and upper bounds for the
two terms defining $f$:
\begin{align*}
\ell_-&:=\frac{2r+s-1}{t+1},
&
\ell_+&:=\frac{2r+s+t-1}{t+1},
\\
w_-&:=(s+1)\ell_-,
&
w_+&:=(s+1)\ell_+,
\\
f_A&:=\frac{r+t-1}{t},
&
f_B&:=\frac{2r-w_-+2t-1}{t+1}.
\end{align*}
Indeed, $\ell_-\le\ell\le\ell_+$ and
$w_-\le w\le w_+$, while
$f\le\max\{f_A,f_B\}$.

The two steps puncture $w+(u+1)f$ coordinates. If the term
defining $f_A$ is larger, this number is at most
$U_A:=w_++(u+1)f_A$. For the other term, we first write
$$
w+\frac{u+1}{t+1}(2r-w+2t-1)
=
\frac{s}{t+1}w+\frac{(u+1)(2r+2t-1)}{t+1}.
$$
The coefficient $s/t+1$ is positive, so this expression is at most
$$
U_B:=\frac{s}{t+1}w_+
+\frac{(u+1)(2r+2t-1)}{t+1}.
$$
Thus, the two steps puncture at most $\max\{U_A,U_B\}$
coordinates.

To apply Lemma~\ref{lem:sharp} over every finite field, we use
$q_0=2$. For $c>0$, define
$$
\Phi_c(W,E):=
\frac12\left(
r-\frac{11(W+E-1)}{4\cdot2^t}
\right)-cE.
$$
By Eq.~\eqref{eq:beta}, we have
$\Phi_c(W,E)=E(\beta_2(W,E)-c)$. Thus,
$\Phi_c(W,E)\ge0$ implies $\beta_q(W,E)\ge c$
for every $q\ge2$. The function $\Phi_c(W,E)$ decreases
as either $W$ or $E$ increases, so we can use the upper
bounds on $\ell$, $w$, and $f$ obtained above.

Choose $c_1,j_1,c_2,j_2$ such that
$c_i^{j_i}\ge3j_i$ and
$c_i\ge(j_i+1)/j_i$ for $i=1,2$.
It is then sufficient to check
\begin{equation}\label{eq:fivemargins}
\begin{aligned}
2r+2-U_A&\ge0,\\
2r+2-U_B&\ge0,\\
\Phi_{c_1}(0,\ell_+)&\ge0,\\
\Phi_{c_2}(w_+,f_A)&\ge0,\\
\Phi_{c_2}(w_+,f_B)&\ge0,
\end{aligned}
\end{equation}
together with $\ell_-\ge j_1$ and $r/t\ge j_2$.

The first two inequalities bound the number of coordinates
punctured in the two steps by $2r+2$. The last three,
together with $\ell\ge\ell_-$ and $f\ge r/t$,
give the hypotheses of Lemma~\ref{lem:sharp} needed for
the two volume inequalities in
Theorem~\ref{thm:successive}.

For fixed $t$, $s$, $c_1$, and $c_2$, the five left-hand sides
in \eqref{eq:fivemargins} are affine functions of $r$. To
establish the inequalities for every $r\ge r_0(t)$, it is
therefore enough to check that each left-hand side is nonnegative
at $r=r_0(t)$ and has nonnegative slope.

\begin{theorem}\label{thm:tails}
Let $\C$ be an $[n,k]_q$ linear code with redundancy $\rho$,
let $4\le t\le\min\{k,\rho,31\}$, and set $r:=R_t(\C)$.
If $r\ge r_0(t)$, where $r_0(t)$ is given in
Table~\ref{tab:tails}, then $d_t(\C)\le2r+2$.
\end{theorem}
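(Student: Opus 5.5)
We argue by contradiction, following the two-step scheme set up just before the statement. Suppose $\C$ is a counterexample for order $t$ with $4\le t\le31$ and $r\ge r_0(t)$. The thresholds $r_0(t)$ in Table~\ref{tab:tails} all satisfy $r_0(t)>2t$, so Lemma~\ref{lem:low-ratio} gives $\lambda=\rho/r>2$. Hence Lemma~\ref{lem:sharp} is available.

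For each $t$ we fix, from the table, an integer $s\in[t]$ and constants $c_1,j_1,c_2,j_2$ satisfying $c_i^{j_i}\ge3j_i$ and $c_i\ge(j_i+1)/j_i$. We then define $\ell,w,f$ by \eqref{eq:tailchain} and apply Theorem~\ref{thm:successive} with $B=2r+2$ and the two steps $(\ell,s)$, $(f,u)$, where $u=t-s$.

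If $n\le B$ or $\rho+t\le B$, case~(1) of that theorem already gives the contradiction. Otherwise we verify the uniform conditions \eqref{eq:successive-uniform}.

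\textbf{First step.} By Lemma~\ref{lem:sharp} with $W=0$ and radius $\ell$, using $\ell t\ge r$ and $\beta_q(0,\ell)\ge c_1$, we get $\vol_q(n,\ell)>q^{\rho+\ell(t+1)-2r}\ge q^{\rho+s-1}$, because $\ell(t+1)-2r\ge s-1$. The same bound holds with $n$ replaced by $n-W$ for any admissible $W$. Since $\beta$ is decreasing in $W$, we instead apply the lemma directly with $W=w_+\ge w$. This covers the uniform condition at length $n-B$ whenever $B\le w_+$; in the remaining case we use the nonuniform condition \eqref{eq:successive} with $W_0=0$, which needs only $W=0$.

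\textbf{Second step.} The nonuniform condition \eqref{eq:successive} asks for $\vol_q(n-w,f)>q^{\rho-w+s+u-1}$. Lemma~\ref{lem:sharp} with $W=w$ and radius $f$ gives the exponent $\rho+f(t+1)-2r\ge\rho+t-w-1$, which is exactly what is needed.

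Both applications of Lemma~\ref{lem:sharp} require $\beta_q(W,E)\ge c_i$ with $E\ge j_i$. Because $\Phi_c(W,E)=E(\beta_2(W,E)-c)$ and $\Phi$ is decreasing in both arguments, these requirements reduce to the three $\Phi$-inequalities in \eqref{eq:fivemargins}, together with $\ell_-\ge j_1$ and $r/t\ge j_2$. Here we use $f\le\max\{f_A,f_B\}$, $w\le w_+$ and $\ell\le\ell_+$. The total punctured support $w+(u+1)f$ is at most $\max\{U_A,U_B\}$, so the first two inequalities in \eqref{eq:fivemargins} give $W_2\le2r+2$. Theorem~\ref{thm:successive} then yields $d_t(\C)\le2r+2$, a contradiction.

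\textbf{Main obstacle.} What remains is to check, for each of the 28 values of $t$, the five affine inequalities of \eqref{eq:fivemargins} and the two threshold inequalities for all $r\ge r_0(t)$.
\begin{itemize}
\item For fixed $(t,s,c_1,c_2)$ each left-hand side is affine in $r$. It therefore suffices to check nonnegativity at $r=r_0(t)$ and nonnegativity of the slope.
\item The slopes of $2r+2-U_A$ and $2r+2-U_B$ are $2-\frac{2(s+1)}{t+1}-\frac{u+1}{t}$ and $2-\frac{2s(s+1)}{(t+1)^2}-\frac{2(u+1)}{t+1}$. These are positive when $s\approx t/2$, as in Proposition~\ref{prop:large-ratio}.
\item The $\Phi$ slopes are $\tfrac12$ minus terms of order $2^{-t}$ and $c_i\cdot(\text{slope of }E)$, roughly $\tfrac12-\frac{2c_1}{t+1}$ and $\tfrac12-\frac{c_2}{t}$. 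For $t\ge4$ these are positive provided $c_i$ stays moderate.
\end{itemize}
The delicate point is choosing $s,c_i,j_i$ so that all margins are simultaneously nonnegative at a reasonably small $r_0(t)$. For small $t$, $c_i$ must be near $2$ with $j_i$ around $3$ or $4$ (for instance $c=2$, $j=4$ works since $16\ge12$ and $2\ge5/4$), which forces $\ell_-\ge4$ and $r\ge4t$. The table values come from searching over these choices. The final verification is an exact rational evaluation at $r_0(t)$, carried out by computer as in Appendix~\ref{app:computations}.
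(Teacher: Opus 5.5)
Your proposal is essentially the paper's proof: the same contradiction setup via Lemma~\ref{lem:low-ratio}, the same two steps $(\ell,s)$ and $(f,u)$ in Theorem~\ref{thm:successive}, and the same reduction of Lemma~\ref{lem:sharp}'s hypotheses through the monotonicity of $\Phi_c$ to the five affine inequalities in \eqref{eq:fivemargins}, checked at $r_0(t)$ together with nonnegative slopes. Two side remarks. First, the detour in your first step about the uniform condition and $W=w_+$ is unnecessary and is not covered by what is checked, since only $\Phi_{c_1}(0,\ell_+)\ge0$ is verified; just verify the nonuniform condition \eqref{eq:successive} with $W_0=0$, as you end up doing. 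Second, your heuristic that small $t$ needs $c_i$ near $2$ contradicts your own slope estimate: for $t=4$ one needs $c_1<5/4$, which is why the table uses $c_1=9/8$ with $j_1=41$.
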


\begin{proof}
Suppose that $\C$ is a counterexample. Every value of $r_0(t)$
in Table~\ref{tab:tails} is greater than $2t$. Hence,
$r>2t$, and Lemma~\ref{lem:low-ratio} gives $\rho>2r$.
We may therefore apply Lemma~\ref{lem:sharp}.

For each $t$, the table specifies $s$, $(c_1,j_1)$, and
$(c_2,j_2)$. At $r=r_0(t)$, these choices satisfy the five
inequalities in \eqref{eq:fivemargins}, as well as
$\ell_-\ge j_1$ and $r/t\ge j_2$. The constants also satisfy
$c_i^{j_i}\ge3j_i$ and $c_i\ge(j_i+1)/j_i$ for $i=1,2$.

Each left-hand side in \eqref{eq:fivemargins} has nonnegative
slope as a function of $r$. Moreover, $\ell_-$ and $r/t$
increase with $r$. Thus, all these inequalities hold for every
$r\ge r_0(t)$.

The last three inequalities in \eqref{eq:fivemargins} now allow
us to apply Lemma~\ref{lem:sharp} to both puncturing steps.
As shown above, the resulting volume bounds give the
inequalities in Theorem~\ref{thm:successive}. The first two
inequalities in \eqref{eq:fivemargins} show that the two steps
puncture at most $2r+2$ coordinates. Theorem~\ref{thm:successive}
therefore gives $d_t(\C)\le2r+2$, contrary to the assumption
that $\C$ is a counterexample.
\end{proof}

\begin{table}[ht]
\centering
\begin{tabular}{rrrrrrr}
\hline
$t$ & $r_0(t)$ & $s$ & $c_1$ & $j_1$ & $c_2$ & $j_2$\\
\hline
4  & 102 & 2  & $9/8$ & 41 & $6/5$ & 24\\
5  & 39  & 1  & $4/3$ & 13 & $7/4$ & 5\\
6  & 28  & 2  & $3/2$ & 8  & $2$   & 4\\
7  & 30  & 2  & $7/4$ & 5  & $2$   & 4\\
8  & 26  & 3  & $7/4$ & 5  & $5/2$ & 2\\
9  & 28  & 4  & $7/4$ & 5  & $5/2$ & 2\\
10 & 31  & 4  & $7/4$ & 5  & $5/2$ & 2\\
11 & 33  & 5  & $7/4$ & 5  & $5/2$ & 2\\
12 & 37  & 5  & $7/4$ & 5  & $5/2$ & 2\\
13 & 38  & 6  & $7/4$ & 5  & $5/2$ & 2\\
14 & 42  & 6  & $7/4$ & 5  & $5/2$ & 2\\
15 & 44  & 7  & $7/4$ & 5  & $5/2$ & 2\\
16 & 48  & 7  & $7/4$ & 5  & $5/2$ & 2\\
17 & 49  & 8  & $7/4$ & 5  & $5/2$ & 2\\
18 & 53  & 8  & $7/4$ & 5  & $5/2$ & 2\\
19 & 55  & 9  & $7/4$ & 5  & $5/2$ & 2\\
20 & 58  & 9  & $7/4$ & 5  & $5/2$ & 2\\
21 & 60  & 10 & $7/4$ & 5  & $5/2$ & 2\\
22 & 64  & 10 & $7/4$ & 5  & $5/2$ & 2\\
23 & 66  & 11 & $7/4$ & 5  & $5/2$ & 2\\
24 & 69  & 11 & $7/4$ & 5  & $5/2$ & 2\\
25 & 71  & 12 & $7/4$ & 5  & $5/2$ & 2\\
26 & 74  & 13 & $7/4$ & 5  & $5/2$ & 2\\
27 & 77  & 13 & $7/4$ & 5  & $5/2$ & 2\\
28 & 79  & 14 & $7/4$ & 5  & $5/2$ & 2\\
29 & 82  & 14 & $7/4$ & 5  & $5/2$ & 2\\
30 & 85  & 15 & $7/4$ & 5  & $5/2$ & 2\\
31 & 88  & 15 & $7/4$ & 5  & $5/2$ & 2\\
\hline
\end{tabular}
\caption{Values used in the proof of Theorem~\ref{thm:tails}.
For each $t$, the inequalities checked at $r=r_0(t)$
continue to hold for every $r\ge r_0(t)$.}
\label{tab:tails}
\end{table}

\subsubsection{The case of large \texorpdfstring{$r$}{r} for \texorpdfstring{$t=3$}{t=3}}\label{sec:three}

Theorem~\ref{thm:tails} bounds the radius of a possible
counterexample for each $4\le t\le31$. We now obtain such a
bound for $t=3$. The nonbinary and binary cases require different
arguments, so we treat them separately.

\begin{theorem}\label{thm:ternary}
Let $\C$ be an $[n,k]_q$ linear code with redundancy $\rho$,
let $\min\{k,\rho\}\ge 3$, and set $r:=R_3(\C)$.
If $q=3$ and $r\ge48$, or if $q\ge4$ and $r\ge26$, then
$d_3(\C)\le2r+2$.
\end{theorem}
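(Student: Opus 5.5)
We argue by contradiction, following the two-step puncturing scheme used for Theorem~\ref{thm:tails}, now specialized to $t=3$. Suppose $\C$ is a counterexample with $q\ge3$ and $r$ in the stated range. Since $r>2t=6$, Lemma~\ref{lem:low-ratio} (which for $t=3$ reduces to Theorem~\ref{thm:order3}(a)) gives $\rho>2r$, in fact $\rho\ge2r+3$. Hence $\lambda:=\rho/r>2$, and the sharpened Hamming-ball estimates of Lemma~\ref{lem:sharp} and Lemma~\ref{lem:K} are available. The reason $q\ge3$ is needed is visible in the factor $(1-1/q)$ of $\beta_q$ and in the term $11(W+\ell-1)/(4\ell q^t)$: with $t=3$, $q_0^t=8$ is too small to absorb the punctured coordinates, whereas $q_0^3=27$ or $64$ is large enough. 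We therefore take $q_0=3$ for the first range and $q_0=4$ for the second, and rely on the monotonicity in $q$ from the last assertion of Lemma~\ref{lem:sharp}.

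For the steps we take $s=1$ and $u=2$ in \eqref{eq:tailchain}. Thus $\ell=\lceil 2r/4\rceil=\lceil r/2\rceil$, $w=2\ell$, and
$$f=\max\{\lceil r/3\rceil,\lceil(2r-w+2)/4\rceil\}.$$
Since $w\approx r$, this gives $f\approx r/3$. The total support is then $w+3f\approx r+r=2r$, plus a small additive error coming from the ceilings. Exactly as in the discussion preceding Theorem~\ref{thm:tails}, Theorem~\ref{thm:successive} requires the following:
\begin{itemize}
\item $2r+2-U_A\ge0$ and $2r+2-U_B\ge0$;
\item $\ell t\ge r$ and $ft\ge r$, which are automatic;
\item $\vol_q(n,\ell)>q^{\rho+4\ell-2r}$ and $\vol_q(n-w,f)>q^{\rho+4f-2r}$.
\end{itemize}
These volume inequalities follow from Lemma~\ref{lem:sharp} once $\beta_{q_0}(0,\ell_+)\ge c_1$ and $\beta_{q_0}(w_+,f)\ge c_2$ for suitable pairs $(c_i,j_i)$ satisfying \eqref{eq:cj}.

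The support bound is delicate for $t=3$ because the ceilings cost a constant. This is the step I expect to be the main obstacle, since the affine slack in $U_A,U_B$ is only $O(1)$. The plan is to split according to $r\bmod 6$ and compute $\ell$, $w$, $f$ exactly rather than through $\ell_\pm$ and $f_A,f_B$. For each residue class we then verify $w+3f\le2r+2$ directly. If some residue class fails, I will try $s=2$, $u=1$, or a three-step chain with radii roughly $r/2$, $r/3$, $r/3$ and nullities $1,1,1$, whose ceiling losses are distributed differently.

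For the volume conditions, $\beta_{q_0}(W,E)\approx(1-1/q_0)(r/E-11W/(4Eq_0^3))$. For $q_0=3$ and $E\approx r/3$, $W\approx r$, this is about $\tfrac23(3-0.3)\approx1.8$. For $E\approx r/2$ it is about $\tfrac23\cdot2=1.33$. We choose the constants accordingly: $c_1=4/3$ with $j_1$ such that $(4/3)^{j_1}\ge3j_1$ (so $j_1=13$, hence $\ell\ge13$, hence $r\ge26$), and $c_2=7/4$ with $j_2=5$. The threshold $r\ge48$ for $q=3$ should come out of these margins at $r=r_0$, together with the nonnegative slopes, exactly as for Table~\ref{tab:tails}. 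For $q\ge4$ the larger $\beta$ gives the lower threshold $26$. If the closed-form bound of Lemma~\ref{lem:sharp} is too lossy near the threshold, we instead check the exact rational criterion of Lemma~\ref{lem:K} at $r=r_0$, and then argue monotonicity in $r$.
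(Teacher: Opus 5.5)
\textbf{The support bound fails for one residue class, and neither fallback repairs it.} Your single two-step scheme with $(s,u)=(1,2)$ does not give support at most $2r+2$ when $r\equiv1\pmod 6$.

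\begin{itemize}
\item For odd $r$ you get $\ell=(r+1)/2$ and $w=r+1$.
\item Lemma~\ref{lem:sharp} needs $ft\ge r$, which forces $f\ge\lceil r/3\rceil$. For $r\equiv1\pmod 3$ this equals $(r+2)/3$.
\item Hence $w+3f=2r+3>2r+2$. Taking $\ell$ larger only increases $w$, and $f$ is already at its minimum, so no choice of radii rescues this split.
\end{itemize}

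Your fallbacks are worse:
\begin{itemize}
\item With $(s,u)=(2,1)$, the first radius is still about $r/2$ and costs $3\ell\approx\tfrac32 r$. The second radius is at least $r/3$ and costs $2f\approx\tfrac23 r$. The total is about $\tfrac{13}{6}r$.
\item A chain of three order-one steps needs $\ell_1\ge r/2$ (from $4\ell_1\ge 2r$) and $\ell_2,\ell_3\ge r/3$. It costs about $\tfrac73 r$.
\end{itemize}

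\textbf{The missing idea.} For odd $r$, do not split at all. Use a single step of order $3$ with radius $\ell=(r+1)/2$. It punctures exactly $4\ell=2r+2$ coordinates. The required inequality $\vol_q(n,\ell)>q^{\rho+2}$ is exactly \eqref{eq:simpleexponent}, because $4\ell-2r=2$. Your two-step scheme is then only needed for even $r$. There it coincides with the paper's argument: $\ell=r/2$, $f=\lceil r/3\rceil$, support $r+3f\le2r+2$, and second-step condition $4f\ge r+2$.

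\textbf{A secondary problem: your constants fail for $q=3$.}
\begin{itemize}
\item Since $\ell\ge r/2$, you have $\beta_3(0,\ell)\le\tfrac23\bigl(2-\tfrac{11(\ell-1)}{108\ell}\bigr)<\tfrac43$ for every $\ell\ge2$.
\item With $w\ge r$ and $f\ge r/3$, $\beta_3(w,f)$ stays below $\tfrac74$ for $r\ge10$; it tends to $\tfrac23\bigl(3-\tfrac{11}{27}\bigr)\approx1.73$.
\end{itemize}
So $(c_1,j_1)=(4/3,13)$ and $(c_2,j_2)=(7/4,5)$ are valid only for $q_0=4$. For $q_0=3$ you need smaller constants, such as $(6/5,24)$ and $(3/2,8)$. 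The threshold $r\ge48$ then comes from the prefactor condition $\ell\ge r/2\ge j_1=24$, not from the affine margins as you guessed. Your Lemma~\ref{lem:K} fallback could also handle this, but you would still need a monotonicity argument in $r$, since that criterion is not affine.
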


\begin{proof}
Suppose that $\C$ is a counterexample. In either range
$r>2t$, so Lemma~\ref{lem:low-ratio} gives $\rho>2r$.

We first choose the puncturing steps. If $r$ is odd, set
$\ell:=(r+1)/2$ and apply Theorem~\ref{thm:successive}
in one step, with order $3$ and radius $\ell$. This step
punctures $4\ell=2r+2$ coordinates. Moreover,
$4\ell-2r=2$, so the exponent in
\eqref{eq:simpleexponent} is sufficient for the volume
inequality in Theorem~\ref{thm:successive}.

If $r$ is even, set $\ell:=r/2$ and
$f:=\lceil r/3\rceil$. We use a first step of order $1$
and radius $\ell$, followed by a step of order $2$ and
radius $f$. The first step punctures $2\ell=r$
coordinates, and the two steps together puncture at most
$r+3f\le2r+2$ coordinates. For the first step, the
exponent in \eqref{eq:simpleexponent} is sufficient with
equality. For the second step it is sufficient that
$4f\ge r+2$, which holds in the range considered here.
We also have $3\ell\ge r$ and $3f\ge r$.

It remains to ensure that Lemma~\ref{lem:sharp} gives
\eqref{eq:simpleexponent} for these radii. For $q_0=3$,
take
$(c_1,j_1)=(6/5,24)$ and $(c_2,j_2)=(3/2,8)$.
For $q_0=4$, take
$(c_1,j_1)=(4/3,13)$ and $(c_2,j_2)=(7/4,5)$.
In each case, $c_i^{j_i}\ge3j_i$ and
$c_i\ge(j_i+1)/j_i$ for $i=1,2$.

Since $\ell\le(r+1)/2$ and, when $r$ is even,
$f\le(r+2)/3$, the conditions
$\beta_{q_0}(0,\ell)\ge c_1$ and
$\beta_{q_0}(r,f)\ge c_2$ follow by checking the
corresponding bounds at these upper estimates for the
radii. Indeed, $\beta_{q_0}(W,E)\ge c$ is equivalent to
$E\bigl(\beta_{q_0}(W,E)-c\bigr)\ge0$, and by \eqref{eq:beta}
the left-hand side is decreasing in $E$ for fixed $W$. After clearing denominators, the expressions to
be checked are affine in $r$. For $q_0=3$, both are
positive at $r=48$ and have positive slopes. For
$q_0=4$, the same is true at $r=26$.

Finally, $\ell\ge r/2\ge j_1$ and, when the second
step is used, $f\ge r/3\ge j_2$. Lemma~\ref{lem:sharp}
therefore supplies the required volume inequalities.
Its last assertion extends the argument with $q_0=4$
to every $q\ge4$. Theorem~\ref{thm:successive} gives
$d_3(\C)\le2r+2$, a contradiction.
\end{proof}

We now consider the binary field. We show that, when $r$ is
sufficiently large, a Hamming ball of radius $\lceil r/2\rceil$
contains enough vectors to apply Lemma~\ref{lem:hamming-ball}.

\begin{theorem}\label{thm:binarytail}
Let $\C$ be an $[n,k]_2$ linear code with redundancy $\rho$,
let $\min\{k,\rho\}\ge 3$, and set $r:=R_3(\C)$.
If $r\ge192$, then $d_3(\C)\le2r+2$.
\end{theorem}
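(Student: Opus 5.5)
We argue by contradiction. Assume that $\C$ is a binary counterexample for order $3$ with $r=R_3(\C)\ge192$. Since $r>2t=6$, Lemma~\ref{lem:low-ratio} gives $\lambda:=\rho/r>2$. The plan follows the remark before the statement: set $\ell:=\lceil r/2\rceil$ and apply Lemma~\ref{lem:hamming-ball} once, directly to $\C$, with $s=3$. This yields $d_3(\C)\le4\ell$. For odd $r$ we have $4\ell=2r+2$, which is exactly the bound we need. For even $r$, however, $4\ell=2r$, so there is slack of $2$. In that case it is cleaner to take $\ell=r/2$ and run the same argument, since a smaller radius only makes the volume condition harder and the conclusion $d_3\le 2r$ is still enough. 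Either way, the task reduces to proving
$$\vol_2(n,\ell)>2^{\rho+2},$$
the hypothesis of Lemma~\ref{lem:hamming-ball} with $h=\rho$ and $s=3$.

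To get this, I use $\vol_2(n,\ell)\ge\binom n\ell$ together with the lower bound $n>\frac r{\mathrm e}2^{3(\lambda-1)}$ from \eqref{eq:coverrough}. I bound $\binom n\ell$ below by $(n-\ell+1)^\ell/\ell!$ and then apply Lemma~\ref{lem:factorial}. This is Lemma~\ref{lem:sharp} with $W=0$ and $q=2$, giving
$$\vol_2(n,\ell)>\frac{\beta_2(0,\ell)^\ell}{3\ell}\,2^{3\ell(\lambda-1)+\ell}.$$
Here $\beta_2(0,\ell)=\tfrac12\bigl(r/\ell-\tfrac{11(\ell-1)}{32\ell}\bigr)$, which is roughly $1-\tfrac{11}{64}\approx0.83$ for $\ell\approx r/2$. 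This is \emph{below} $1$, so the simple-exponent form \eqref{eq:simpleexponent} of Lemma~\ref{lem:sharp} is unavailable. The key point is that we must not discard the extra exponent. Since $\ell\ge r/2$ and $\lambda>2$,
$$3\ell(\lambda-1)+\ell-\rho-2=(3\ell-r)\lambda-2\ell-2\ge(3\ell-r)\cdot2-2\ell-2=4\ell-2r-2\ge -2,$$
and this alone is too weak. So we must also exploit $\lambda>2$ quantitatively: the factor $2^{(3\ell-r)(\lambda-2)}$ with $3\ell-r\approx r/2$ grows exponentially in $r$. The estimate from Lemma~\ref{lem:sharp} loses only the factor $\beta^\ell/(3\ell)\approx0.83^{r/2}/(1.5r)$. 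It therefore suffices to show that $\lambda-2$ is bounded below by a positive constant, roughly $\lambda-2\ge 2\log_2(1/0.83)\approx0.54$ up to lower-order terms, at least when $r$ is large.

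The main obstacle is establishing such a quantitative gap for $\rho/r$. Lemma~\ref{lem:low-ratio} only gives $\lambda>2$. I would strengthen it for $t=3$, $q=2$ by comparing \eqref{eq:binary-length}, namely $n\le2r-3+2^{\rho-2r+3}$, with the ball-covering bound \eqref{eq:cover}, $\vol_8(n,r)\ge8^\rho$. Write $\rho=2r+\varepsilon$. Then $n\lesssim2r+2^{\varepsilon+3}$, while $\binom nr7^r\ge\vol_8(n,r)\ge 8^{2r+\varepsilon}$ forces $\binom nr\ge (64/7)^r$. This requires $n/r$ to be large, so $2^{\varepsilon}\gtrsim c\,r$ for an explicit constant $c$. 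Hence $\varepsilon\ge\log_2 r-O(1)$, which is more than enough to give $\lambda-2=\varepsilon/r$ times $(3\ell-r)\approx r/2$, that is, an additive gain of about $\varepsilon/2\approx\tfrac12\log_2 r$ in the exponent. I expect this additive gain to be exactly what beats the loss $\ell\log_2(1/\beta)+\log_2(3\ell)$. If it does not, I would instead bypass Lemma~\ref{lem:sharp}: use directly $\vol_2(n,\ell)\ge\binom n\ell$ with $n\ge c\,r\,2^{\varepsilon}$ from the covering comparison, and check $\binom{n}{\ell}>2^{2r+\varepsilon+2}$. Since $\binom n\ell\approx(2\mathrm e n/r)^{r/2}$, which is already $\gg 2^{2r}$ once $n\ge 8r/\mathrm e$, this comparison becomes an explicit inequality in $(r,\varepsilon)$. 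The threshold $r\ge192$ would then come from the numeric constants, with monotonicity in $r$ and $\varepsilon$ checked via the ratio of consecutive binomials as in \eqref{eq:recurrence}.
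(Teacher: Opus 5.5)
Your reduction is the same as the paper's. With $\lambda=\rho/r>2$ from Lemma~\ref{lem:low-ratio} and $\ell=\lceil r/2\rceil$, it suffices to prove $\vol_2(n,\ell)>2^{\rho+2}$ and apply Lemma~\ref{lem:hamming-ball} once with $s=3$. The gap is in proving this volume inequality when $\lambda$ is close to $2$. Neither of your two routes works there.

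\emph{First route.} It needs $\lambda-2$ bounded below by a constant, about $\log_2(1/\beta_2(0,\ell))\approx0.27$. Comparing \eqref{eq:binary-length} with the covering bound only yields $\varepsilon=\rho-2r\ge\log_2 r-O(1)$, so $\lambda-2$ may be as small as order $(\log_2 r)/r$. The resulting additive gain $(3\ell-r)(\lambda-2)\approx\varepsilon/2\approx\frac12\log_2 r$ cannot offset the loss $\ell\log_2(1/\beta_2(0,\ell))\approx0.14r$, which is linear in $r$. Your expectation that it does is mistaken.

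\emph{Fallback.} The covering comparison does not give $n\ge c\,r\,2^{\varepsilon}$. The factor $2^{\varepsilon}$ enters only through the \emph{upper} bound \eqref{eq:binary-length}, whereas \eqref{eq:coverrough} gives $n>(8r/\mathrm e)\,2^{3\varepsilon/r}$, which is about $2.94r$ when $\varepsilon\ll r$. At such a length, $(2\mathrm e n/r)^{r/2}$ equals $2^{2r}$, not something much larger, and it is only an upper bound for $\binom{n}{r/2}$. In fact, with $h_2$ the binary entropy function, $\vol_2(n,r/2)\le 2^{n h_2(r/(2n))}\approx 2^{1.93r}<2^{\rho}$ for $n\approx 8r/\mathrm e$. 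So no argument whose only lower bound on $n$ is \eqref{eq:coverrough} can succeed near $\lambda=2$. The bound \eqref{eq:binary-length} is an upper bound on $n$ and cannot enlarge the ball.

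\emph{Missing ingredient.} You need a stronger lower bound on the length in this regime. The paper takes it from the rate restriction \eqref{eq:rate}, $n>5\rho/2=5r\lambda/2$, which holds because codes of rate at most $3/5$ are covered by Proposition~\ref{prop:inputs}. Near $\lambda=2$ this gives $n>5r$, and then $\binom{5r}{r/2}\approx 2^{5r\,h_2(1/10)}\approx 2^{2.34r}$, an exponential margin over $2^{\rho+2}$. The paper's proof splits into two ranges:
\begin{itemize}
\item For $2<\lambda\le 7/3$, it combines $n>5r\lambda/2$ with Lemma~\ref{lem:factorial} to get $\vol_2(n,\ell)>\frac{2^\rho}{3\ell}(343/256)^{r/6}$.
\item For $\lambda\ge7/3$, the covering bound alone gives $n>16r/\mathrm e$ and $\vol_2(n,\ell)>\frac{2^\rho}{3\ell}(27/16)^{r/12}$.
\end{itemize}
The hypothesis $r\ge192$ makes both factors exceed $12r\ge 4\cdot 3\ell$. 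Without \eqref{eq:rate}, or some substantially sharper lower bound on $n$ than \eqref{eq:coverrough}, your argument does not close.
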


\begin{proof}
Suppose that $\C$ is a counterexample. By
Lemma~\ref{lem:low-ratio}, $\lambda:=\rho/r>2$.
Set $\ell:=\lceil r/2\rceil$. We will prove that
$\vol_2(n,\ell)>2^{\rho+2}$. Lemma~\ref{lem:hamming-ball},
applied with $s=3$, will then give
$d_3(\C)\le4\ell\le2r+2$.

Eqs.~\eqref{eq:rate} and~\eqref{eq:coverrough} give
$$
n>\frac52r\lambda
\qquad\text{and}\qquad
n>\frac r{\mathrm e}2^{3(\lambda-1)}.
$$
We consider two ranges for $\lambda$.

Suppose first that $2<\lambda\le7/3$.
Lemma~\ref{lem:factorial}, $\mathrm e>19/7$, $n-\ell+1>n-\ell$ and
$\ell\le(r+1)/2$ give
$$
\vol_2(n,\ell)>
\frac{A_r(\lambda)^\ell}{3\ell},
\qquad
A_r(\lambda):=
\frac{19}{7}\left(\frac{5r\lambda}{r+1}-1\right).
$$
For $r\ge192$, we have $A_r(2)\ge24$ and
$A_r(7/3)\ge28$. The function
$\log A_r(\lambda)-2\lambda\log2$ is concave, so
its minimum on $[2,7/3]$ occurs at one of these
two values. Comparing them gives
$$
\log A_r(\lambda)-2\lambda\log2
\ge\frac13\log\frac{343}{256}.
$$
Since $\ell\ge r/2$ and $\rho=r\lambda$, it follows that
\begin{equation}\label{eq:binarylow}
\vol_2(n,\ell)>
\frac{2^\rho}{3\ell}
\left(\frac{343}{256}\right)^{r/6}.
\end{equation}

Suppose now that $\lambda\ge7/3$. By
\eqref{eq:rate}, $n>5\rho/2>5r$; in particular $n\ge\ell$.
The product formula,
Lemma~\ref{lem:factorial}, and
\eqref{eq:coverrough} give
$$
\vol_2(n,\ell)>
\frac1{3\ell}
\left(\frac r\ell-\frac14\right)^\ell
2^{3\ell(\lambda-1)}.
$$
Here we use
$\mathrm e(\ell-1)/\ell<4\le2^{3(\lambda-1)}/4$.
Since $r\ge192$ and $\ell\le(r+1)/2$, we have
$$
\frac r\ell-\frac14
\ge\frac{2r}{r+1}-\frac14
\ge\frac{1343}{772}>\sqrt3.
$$
Using $\ell\ge r/2$ and $\lambda\ge7/3$, we obtain
\begin{equation}\label{eq:binaryhigh}
\vol_2(n,\ell)>
\frac{3^{r/4}2^{3r(\lambda-1)/2}}{3\ell}
\ge\frac{2^\rho}{3\ell}
\left(\frac{27}{16}\right)^{r/12}.
\end{equation}

For every $r\ge192$, we have
$$
\left(\frac{343}{256}\right)^{r/6}>12r,
\qquad
\left(\frac{27}{16}\right)^{r/12}>12r.
$$
At $r=192$, these are the exact comparisons
$(343/256)^{32}>2304$ and $(27/16)^{16}>2304$.
They continue to hold as $r$ increases: for $A>1$,
the function $A^{r/d}/r$ is increasing when
$\log A/d>1/r$, and $\log A>(A-1)/A$ gives
$$
\frac{87}{6\cdot343}>\frac1{192},
\qquad
\frac{11}{12\cdot27}>\frac1{192}.
$$

Finally, $\ell\le r$. Thus, either
\eqref{eq:binarylow} or \eqref{eq:binaryhigh} gives
$\vol_2(n,\ell)>4\cdot2^\rho=2^{\rho+2}$,
contradicting the assumption that $\C$ is a counterexample.
\end{proof}

We have obtained an upper bound on $r$ for every order
$3\le t\le31$, while Theorem~\ref{thm:highorder} covers
$t\ge32$. Thus, $t$ and $r$ range over finite sets for a
possible counterexample. Theorem~\ref{thm:tail} then bounds
$\rho$, and Proposition~\ref{prop:inputs} gives $q<r$.
Consequently, only finitely many parameter tuples
$(q,\rho,t,r)$ remain.

\subsection{The remaining parameter set}

All unbounded ranges in $t$, $r$, and $\rho$ have now been excluded.
We next collect the restrictions that a counterexample must still satisfy.
Before describing the final finite parameter set, we remove two additional
ranges that are convenient to treat separately.

\begin{theorem}\label{thm:extensions}
Let $\C$ be an $[n,k]_q$ linear code with redundancy $\rho$,
let $t\in[\min\{k,\rho\}]$, and set $r:=R_t(\C)$.
If $r\le2t$ or $r-t\le15$, then $d_t(\C)\le2r+2$.
\end{theorem}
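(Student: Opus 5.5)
The plan is to reduce to a finite set of parameter tuples and then exclude each tuple by the same two-pronged test used in Theorems~\ref{thm:finite} and~\ref{thm:fivegaps}. Set $a:=r-t$ and assume that $\C$ is a counterexample. First I dispose of the cases already settled earlier. If $t\le2$ or $r=t$, Proposition~\ref{prop:inputs} applies. If $t\ge32$, Theorem~\ref{thm:highorder} applies. If $a\le5$, Theorem~\ref{thm:fivegaps} applies. If $\rho\le50$, Theorem~\ref{thm:finite} applies. If $\rho\ge D(t,r)$, Theorem~\ref{thm:tail} applies. Finally, the case $q\ge r$ is again covered by Proposition~\ref{prop:inputs}.

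This leaves $3\le t\le31$ with $6\le a$, and with either $r\le2t$ (so $a\le t\le31$) or $a\le15$. In both situations $r\le t+31\le62$. Hence $q<r$ ranges over finitely many prime powers, and $\rho$ lies in $[\max\{51,2r-t+3,\lfloor3r/2\rfloor+1\},D(t,r))$. The parameter set is therefore an explicit finite list of tuples $(q,\rho,t,r)$. For $t=3$ I would also use Theorem~\ref{thm:order3} to impose $\rho\ge2r+3$ (and $\rho\ge2r+4$ when $q=2$). When $r>2t$, I would use Lemma~\ref{lem:low-ratio} to impose $\rho>2r$, which shrinks the list further.

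For each remaining tuple, I compute a test length $N_0$ from \eqref{eq:finite-parameters} and the covering bound. Concretely, $N_0$ is the least $N$ with $\vol_{q^t}(N,r)\ge q^{t\rho}$, or a bound from the faster tests based on \eqref{eq:coverrough}. If $N_0>M(q,\rho,t,r)$, the tuple contradicts \eqref{eq:length-upper}. Otherwise I verify the hypotheses of Theorem~\ref{thm:successive} at length $N_0$ with $B=2r+2$, using the uniform conditions \eqref{eq:successive-uniform} and a chain of pairs $(\ell_i,s_i)$. For $r\le2t$ I would try first the greedy chain of Corollary~\ref{cor:radius-two}, with radius $2$, $L=a+1$ and $A=\lceil t/(2^{a+1}-1)\rceil$. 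Failing that, I would use a three-phase radius $4,3,2$ rule like \eqref{eq:phase-rule}, or the two-step chains from \eqref{eq:tailchain}. In each case the greedy choice of $s_i$ is the largest value compatible with the exact volume at length $N_0-B$. Lemma~\ref{lem:shorten} then transfers $d_t\le2r+2$ from length $N_0$ to every $n\ge N_0$. All checks are exact integer comparisons of ball volumes with powers of $q$, so no codes are enumerated.

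The main obstacle will be the tuples where the covering lower bound on $n$ is only barely above the length needed. These are small $q$ (especially $q=2,3$), $t=3$ or $4$, and $\rho$ just above $2r$. There neither the length comparison nor a single volume chain at $N_0$ succeeds with the crude $N_0$. For these I would first sharpen $N_0$ by exact evaluation of $\vol_{q^t}(N_0-1,r)<q^{t\rho}$ rather than \eqref{eq:coverrough}, and sharpen $M$ by optimizing over $x$ in \eqref{eq:M}. Then I would search over small chains $(\ell_i,s_i)$ with mixed radii. If a handful of tuples still survive, I would apply the Appendix~\ref{app:weights} generalized-weight bounds at the sharper test length, exactly as was done for the residual $661$ cases in Theorem~\ref{thm:fivegaps}.
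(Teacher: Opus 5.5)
Your proposal is correct and follows essentially the same route as the paper. Both reduce to $3\le t\le31$, $q<r$ and $\max\{51,2r-t+3,\lfloor3r/2\rfloor+1\}\le\rho<D(t,r)$, and then exclude each tuple in one of three ways, always transferring the bound to the original length via Lemma~\ref{lem:shorten}: by a length contradiction $N_0>M(q,\rho,t,r)$, by the radius-two chain (Lemma~\ref{lem:app-fixed-radius}) at a test length, or by a successive-puncturing sequence at a test length. Your additional pruning via Theorem~\ref{thm:fivegaps}, Theorem~\ref{thm:order3} and Lemma~\ref{lem:low-ratio} is valid, since all three are proved earlier without circularity. It only shrinks the $619{,}126$-tuple range that the paper checks directly, a range that overlaps the cases already settled.
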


\begin{proof}
Suppose that $\C$ is a counterexample satisfying one of the
conditions in the statement. Theorem~\ref{thm:highorder}
excludes $t\ge32$, and Proposition~\ref{prop:inputs}
excludes $t\le2$. Thus, $3\le t\le31$.
Theorem~\ref{thm:finite} gives $\rho\ge51$,
Theorem~\ref{thm:tail} gives $\rho<D(t,r)$, and
Proposition~\ref{prop:inputs} gives $q<r$ and
$3r<2\rho$.

By Proposition~\ref{prop:inputs}, we also have $r\ge t+1$.
Since $r\le2t$ or $r-t\le15$, every such counterexample
has parameters in the following range:
\begin{equation}\label{eq:boundary-range}
\begin{gathered}
3\le t\le31,\qquad
t+1\le r\le\max\{2t,t+15\},\qquad
q<r\text{ a prime power},\\
\max\left\{
51,\,
2r-t+3,\,
\left\lfloor\frac{3r}{2}\right\rfloor+1
\right\}
\le\rho<D(t,r).
\end{gathered}
\end{equation}
The bound $\rho\ge2r-t+3$ follows from
\eqref{eq:failure-range}, and
$\rho\ge\lfloor3r/2\rfloor+1$ follows from $3r<2\rho$.

The range in \eqref{eq:boundary-range} contains $619{,}126$
tuples. The program \path{verify_extensions.py}, described in
Appendix~\ref{app:computations}, enumerates the range and
checks each tuple. In $154{,}813$ cases, a proved lower bound
on $n$ exceeds $M(q,\rho,t,r)$, contradicting
\eqref{eq:length-upper}. In $158{,}985$ cases, the conditions of
Lemma~\ref{lem:app-fixed-radius} hold at the test length $N_0$ of
Appendix~\ref{app:test-length}, and Lemma~\ref{lem:shorten} gives
$d_t(\C)\le2r+2$.
For the remaining $305{,}328$ tuples, the program checks
a sequence satisfying Theorem~\ref{thm:successive} at a test length, and Lemma~\ref{lem:shorten} again gives the same bound. The test lengths and the inequalities
checked in these cases are described in
Appendices~\ref{app:test-length}--\ref{app:steps}.

Thus, no tuple in \eqref{eq:boundary-range} can correspond
to a counterexample.
\end{proof}

We can now describe the parameters that remain. For $t=3$,
Theorems~\ref{thm:ternary} and~\ref{thm:binarytail}
give $r<192$ when $q=2$, $r<48$ when $q=3$, and
$r<26$ when $q\ge4$. For $4\le t\le31$,
Theorem~\ref{thm:tails} gives $r<r_0(t)$, where
$r_0(t)$ is listed in Table~\ref{tab:tails}.

Together with the preceding reductions, these results show
that every counterexample must satisfy
\begin{equation}\label{eq:finite}
\begin{gathered}
3\le t\le31,\qquad
r\ge\max\{2t+1,t+16\},\qquad
q<r\text{ a prime power},\\
r<
\begin{cases}
192, & t=3,\ q=2,\\
48, & t=3,\ q=3,\\
26, & t=3,\ q\ge4,\\
r_0(t), & 4\le t\le31,
\end{cases}
\qquad
\rho_0(q,t,r)\le\rho<D(t,r),
\end{gathered}
\end{equation}
where
$$
\rho_0(q,t,r):=
\begin{cases}
\max\{51,2r+4\}, & t=3,\ q=2,\\
\max\{51,2r+3\}, & t=3,\ q\ge3,\\
\max\{51,2r+1\}, & t\ge4.
\end{cases}
$$

For convenience, we summarize the origin of these restrictions.
\begin{center}
\small
\begin{tabular}{ll}
\hline
Restriction & Reason\\
\hline
$t\le31$
& Theorem~\ref{thm:highorder}\\
$r>2t$ and $r-t\ge16$
& Theorem~\ref{thm:extensions}\\
$\rho\ge51$
& Theorem~\ref{thm:finite}\\
$\rho<D(t,r)$
& Theorem~\ref{thm:tail}\\
$q<r$
& Proposition~\ref{prop:inputs}\\
upper bounds on $r$ for $t=3$
& Theorems~\ref{thm:ternary} and~\ref{thm:binarytail}\\
$r<r_0(t)$ for $4\le t\le31$
& Theorem~\ref{thm:tails}\\
lower bounds on $\rho$ for $t=3$
& Theorem~\ref{thm:order3}\\
$\rho\ge2r+1$ for $t\ge4$
& Lemma~\ref{lem:low-ratio}\\
\hline
\end{tabular}
\end{center}

Only finitely many tuples $(q,\rho,t,r)$ satisfy
\eqref{eq:finite}. For each of them, we still have the
upper bound $n\le M(q,\rho,t,r)$ from
\eqref{eq:length-upper} and the lower bounds on $n$
from \eqref{eq:cover} and \eqref{eq:coverrough}.
We use these bounds and Theorem~\ref{thm:successive}
to exclude the remaining tuples.

\subsection{The remaining finite cases}

It remains to exclude the parameters in \eqref{eq:finite}.
We first treat many pairs $(t,r)$ by puncturing sequences
that work for every admissible $q$ and $\rho$. For the
remaining cases, we choose a length $N$ that every putative
counterexample must have or exceed. We then show either
that $N>M(q,\rho,t,r)$, contradicting
\eqref{eq:length-upper}, or that every $[N,N-\rho]_q$
code has $d_t\le2r+2$. In the latter case,
Lemma~\ref{lem:shorten} gives the same conclusion at
every possible original length $n\ge N$.

\begin{theorem}\label{thm:completion}
No tuple in \eqref{eq:finite} can correspond to a
counterexample.
\end{theorem}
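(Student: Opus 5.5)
The plan is a finite, exact verification over the set \eqref{eq:finite}, organized so that most tuples are removed by arguments that do not depend on $q$ or $\rho$. The parameters there are explicitly bounded: $3\le t\le31$, $\max\{2t+1,t+16\}\le r<r_0(t)$ (or $r<192,48,26$ for $t=3$), $q<r$ a prime power, and $\rho_0(q,t,r)\le\rho<D(t,r)$. Two facts are available throughout. Lemma~\ref{lem:low-ratio} and Theorem~\ref{thm:order3} give $\lambda=\rho/r>2$, so Lemmas~\ref{lem:sharp} and~\ref{lem:K} apply. Moreover, $K_{q_0}(W,\ell)$ and $\beta_{q_0}(W,\ell)$ are monotone in $q$ and independent of $\rho$.

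The first pass works with each pair $(t,r)$ alone. I would search for a puncturing sequence $(\ell_i,s_i)_{i\in[v]}$ with $S_v=t$, $W_v\le2r+2$ and $\ell_it\ge r$, satisfying at each step
\[
s_i-1-W_{i-1}+S_{i-1}\le\ell_i(t+1)-2r .
\]
A natural choice is the greedy rule \eqref{eq:phase-rule}, with radii decreasing from about $\lceil (2r+s-1)/(t+1)\rceil$ down to $\lceil r/t\rceil$. For such a sequence I would also require $K_2(W_{i-1},\ell_i)>0$ and $K_2(W_{i-1},\ell_i)^{\ell_i}\ge\ell_i!$ for every $i$. Lemma~\ref{lem:K} with $q_0=2$ then yields \eqref{eq:successive-uniform} after replacing $W_{i-1}$ by the worst case $B$ if needed, or \eqref{eq:successive} directly, for every $q$ and every $\rho$. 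Theorem~\ref{thm:successive} then gives $d_t\le2r+2$, so the whole pair $(t,r)$ is excluded. If the conditions fail at $q_0=2$, the same test is repeated with $q_0=3,4,\dots$, which excludes all $q\ge q_0$. Only the small fields then remain.

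The second pass handles the surviving tuples $(q,\rho,t,r)$ one at a time, exactly as in the proofs of Theorems~\ref{thm:finite} and~\ref{thm:extensions}. I would compute a test length $N_0$, namely the largest value justified by \eqref{eq:finite-parameters} or by $\vol_{q^t}(N_0-1,r)<q^{t\rho}$, evaluated in exact integer arithmetic, using a binary search in $N_0$. If $N_0>M(q,\rho,t,r)$, the tuple is excluded by \eqref{eq:length-upper}. Otherwise I would check the exact volume conditions \eqref{eq:successive} of Theorem~\ref{thm:successive} at length $N_0$ for a chosen sequence. These conditions are computed with exact binomial sums rather than the estimates of Lemma~\ref{lem:K}, and the best sequence would be found by a small dynamic program over the states $(S,W)$. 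Lemma~\ref{lem:shorten} then transfers the bound $d_t\le2r+2$ to every length $n\ge N_0$.

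I expect the main obstacle to be the binary (and perhaps ternary) cases with $t=3$ and $r$ up to $191$, and the small-$t$ tuples where $\rho$ is close to $2r$. In these cases the lower bound on $n$ from \eqref{eq:cover} is weakest and the margin in the volume inequalities is thin. If the direct checks fail there, I would first sharpen $N_0$ by using $\vol_{q^t}$ exactly rather than \eqref{eq:coverrough}. I would also allow sequences whose step sizes vary with $\rho$, and combine them with the sharper length bound \eqref{eq:binary-length} for $q=2$, $t=3$. All of these checks are rational or integer inequalities, so the verification remains exact and never enumerates codes.
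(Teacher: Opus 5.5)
Your plan matches the paper's proof. The paper first excludes whole pairs $(t,r)$ (all $400$ with $5\le t\le31$ and $54$ with $t=4$) by $q$- and $\rho$-independent puncturing sequences, certified via $\ell_it\ge r$, $K_2(W_{i-1},\ell_i)>0$, $K_2(W_{i-1},\ell_i)^{\ell_i}\ge\ell_i!$ and the step inequality, using Lemma~\ref{lem:K} with $q_0=2$ and Theorem~\ref{thm:successive}. It then handles $t=3$ and the remaining $28$ pairs with $t=4$ tuple by tuple. For these it takes a test length $N$ justified by \eqref{eq:finite-parameters} or $\vol_{q^t}(N-1,r)<q^{t\rho}$, and either shows $N>M(q,\rho,t,r)$ or checks the exact volume conditions of Theorem~\ref{thm:successive} at length $N$ followed by Lemma~\ref{lem:shorten}. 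As in the paper, what your scheme leaves unproved is only that the exact computation actually succeeds on every tuple. The paper reports this: $1{,}216$ length contradictions and $64{,}144$ puncturing certificates among the $65{,}360$ numerical tuples.
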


\begin{proof}
We begin with all $400$ pairs $(t,r)$ in \eqref{eq:finite} with
$5\le t\le31$ and $54$ of the $82$ pairs with $t=4$. For each of
these pairs, $r\ge2t+1$, so Lemma~\ref{lem:low-ratio} gives $\rho>2r$
and Lemma~\ref{lem:K} is available. For each such pair $(t,r)$, the
program checks a sequence of steps in
Theorem~\ref{thm:successive} without enumerating
the admissible values of $q$ and $\rho$.

At a step, let $W$ be the number of coordinates already
punctured, let $S$ be the nullity already obtained, and
let $\ell$ and $s$ be the radius and the increase in
nullity for the next step. The program checks
$$
\ell t\ge r,\qquad
K_2(W,\ell)>0,\qquad
K_2(W,\ell)^\ell\ge\ell!,
$$
where $K_2$ is defined in \eqref{eq:Ksymbolic}, as well as
$$
s-1-W+S\le\ell(t+1)-2r.
$$
Lemma~\ref{lem:K}, applied with $q_0=2$, then gives
$$
\vol_q(n-W,\ell)>
q^{\rho-W+S+s-1},
$$
as required by Theorem~\ref{thm:successive}. The program
also checks that the steps reach nullity $t$ and puncture
at most $2r+2$ coordinates.

It remains to consider $t=3$ and the $28$ pairs with $t=4$ not
covered by these sequences, namely
$r\in\{20,\ldots,39,41,42,43,44,46,48,51,53\}$.
For these pairs, the program enumerates the prime powers
$q<r$ and the integers $\rho$ in \eqref{eq:finite}.

For each resulting tuple $(q,\rho,t,r)$, the program
chooses an integer $N$ and establishes $n\ge N$ for
every possible counterexample. It uses the lower bounds
in \eqref{eq:finite-parameters} or, when a larger $N$
is needed, checks
$$
\vol_{q^t}(N-1,r)<q^{t\rho}.
$$
In the latter case, \eqref{eq:cover} implies $n\ge N$.

If $N>M(q,\rho,t,r)$, this contradicts
\eqref{eq:length-upper}. Otherwise, the program checks
at length $N$ the volume inequalities and the remaining
conditions of Theorem~\ref{thm:successive}. These checks
give $d_t(\C_0)\le2r+2$ for every $[N,N-\rho]_q$
code $\C_0$. Lemma~\ref{lem:shorten} then gives
$d_t(\C)\le2r+2$ for every possible original length
$n\ge N$.

There are $65{,}360$ numerical tuples: $40{,}145$
with $t=3$ and $25{,}215$ with $t=4$. In $1{,}216$
cases the lower bound on $n$ exceeds $M(q,\rho,t,r)$.
The other $64{,}144$ cases are settled by puncturing
sequences. The program derives the required pairs and
tuples from \eqref{eq:finite} and checks that all are
accounted for. Every comparison is exact. Hence, no
tuple in \eqref{eq:finite} can correspond to a
counterexample.
\end{proof}

The programs and input records are described in
Appendix~\ref{app:computations}. From the accompanying
repository, the command \texttt{sage verify.sage} runs
all the checks and ends with
\texttt{ALL COMPUTATIONAL CHECKS PASSED}.

\subsection{Proof of the conjecture}

We can now complete the proof of Conjecture~\ref{conj:packing-covering}.
All infinite parameter ranges have been excluded in the preceding results,
and Theorem~\ref{thm:completion} eliminates the remaining finite set.

\MioTeorema*

\begin{proof}
Suppose that, for some admissible $t$, the code $\C$ is a counterexample, and
set
$r:=R_t(\C).$

By Proposition~\ref{prop:inputs}, we may assume $t\ge3$, $r\ge t+1$, $q<r$, and the restrictions in Eq.~\eqref{eq:rate} hold. Moreover,
Theorem~\ref{thm:finite} excludes $\rho\le50$, while
Theorem~\ref{thm:tail} excludes $\rho\ge D(t,r)$.

Theorem~\ref{thm:highorder} excludes $t\ge32$. Hence, $3\le t\le31$. Theorem~\ref{thm:extensions} further excludes both
$r\le2t$ and $r-t\le15$.
Thus, $r\ge\max\{2t+1,t+16\}$.

If $t=3$, Theorem~\ref{thm:order3} gives the field-dependent lower bound on
$\rho$ appearing in Eq.~\eqref{eq:finite}. If $4\le t\le31$, then
$r>2t$, so Lemma~\ref{lem:low-ratio} gives $\rho\ge2r+1$.

Finally, Theorem~\ref{thm:tails} excludes
$r\ge r_0(t)$ for $4\le t\le31$, while for $t=3$,
Theorems~\ref{thm:ternary} and~\ref{thm:binarytail} exclude $r\ge48$ for $q=3$, $r\ge26$ for $q\ge4$, and $r\ge192$ for $q=2$.

Consequently, every remaining counterexample would have parameters satisfying
Eq.~\eqref{eq:finite}. But Theorem~\ref{thm:completion} shows that no tuple
in that finite set can correspond to a counterexample. This contradiction
proves $d_t(\C)\le2R_t(\C)+2$.
\end{proof}

\begin{remark}
As observed in~\cite{YS}, generalized Hamming weights are much better understood than generalized covering radii, and Theorem~\ref{thm:main} makes the known results on the
former available for the latter. For instance, since the weight
hierarchy is strictly increasing~\cite{Wei}, we have
$d_t(\C)\ge d(\C)+t-1$ and hence
$$
R_t(\C)\ge\left\lfloor\frac{d(\C)+t-2}{2}\right\rfloor .
$$
Similarly, the explicit weight hierarchies known for cyclic codes, $q$-ary Reed--Muller
codes, affine Cartesian codes and other families
\cite{FTW,HP,BD,nardi2026structure,moreno2026generalized} give explicit
lower bounds on their generalized covering radii.
\end{remark}

\section*{Declaration on the use of AI}
The overall strategy of the proof was proposed by the authors. In particular, the authors developed the idea of translating the conjecture into a bound on the number of columns of a parity-check matrix contained in subspaces, and of comparing the resulting upper bounds on the length with the lower bounds coming from the generalized covering radius.
Through an iterative discussion, the authors used ChatGPT 6-Astra to develop this strategy into a complete argument, refine intermediate estimates, identify parameter ranges requiring computational treatment, organize and clarify the exposition, and assist in writing the search and verification programs described in Appendix~\ref{app:computations}.
The tool also assisted with the technical writing of
Section~\ref{sec:reduction}. All content, claims, and conclusions have been reviewed and verified by the authors, who take full responsibility for them.

\medskip

\section*{Acknowledgments}
This research has been partially supported by the Italian National Group for Algebraic and Geometric Structures and their Applications (GNSAGA -- INdAM), by Università Italo Francese (UIF/UFI) via PHC Galileo 2026 -- G26-260/54322VM and by the University of Naples Federico II, which funded the ``Naples--Rennes Agreement on Scientific Cooperation''.
G. N. Alfarano is supported by the Agence Nationale de la Recherche through grant number ANR-24-CPJ1-0075-01. A. Neri is supported by the University of Naples Federico II through the FRA2024 Project CUP E63C26001530001  ``NHACT -- New Horizons in Algebraic Coding Theory''.

\medskip

\bibliographystyle{abbrv}
\bibliography{references}

\appendix
\section{Bounds and computer calculations}\label{app:computations}

We first prove the generalized-weight bounds used in the finite cases and the step-count bound used in Proposition~\ref{prop:large-diagonal}. We
then explain the parameter ranges, the inequalities checked for each
tuple, and the accompanying programs. The programs enumerate the
required tuples independently of the input records. All proof
comparisons use integers or rational numbers. The programs
(Python~3, run through SageMath) and input data are available at
\url{https://github.com/gianiraalfarano/packing-covering-conjecture}.

\subsection{Generalized-weight bounds}\label{app:weights}
For some parameter tuples we need bounds on several generalized Hamming
weights at the same length. We give a recursion that applies to every code
with the prescribed length, field, and redundancy.

The following construction combines ordinary distance bounds with Lemma~\ref{lem:hamming-ball} in a finite recursive procedure.

For $N>h\ge0$, let $A_q(N,h)$ be the largest integer $d\in[1,h+1]$
satisfying all three necessary conditions below:
\begin{align}
\vol_q\bigl(N,\lfloor(d-1)/2\rfloor\bigr)&\le q^h,\label{eq:hamming}\\
\vol_2\bigl(N-1,(d-2)/2\bigr)&\le 2^{h-1}\qquad\text{if $q=2$ and $d$ is even},\label{eq:even}\\
\sum_{i=0}^{N-h-1}\left\lceil\frac{d}{q^i}\right\rceil&\le N.\label{eq:griesmer}
\end{align}
Inequality~\eqref{eq:even} follows by puncturing one coordinate of a binary code of minimum distance at least $d\ge2$. Dimension is preserved, redundancy becomes $h-1$, and the distance is at least $d-1$. Inequality~\eqref{eq:griesmer} is the ordinary Griesmer bound, valid also when the given length includes zero coordinates. These are necessary conditions only; passing them is not an existence assertion.
For completeness, the Griesmer inequality follows by induction on dimension.
Choose a minimum word $c$ of weight $d$. Puncturing its support has kernel
$\langle c\rangle$, by Lemma~\ref{lem:puncture}. Let $v\in\C$ be a codeword
whose image is a nonzero residual word of weight $w$, and average the weights of its $q$ lifts $v+\mu c$, $\mu\in\Fq$, all of which are nonzero codewords.
On $\supp(c)$ each coordinate is nonzero in exactly $q-1$ lifts, so
$d\le w+(q-1)d/q$. Thus, $w\ge\ceil{d/q}$, and induction gives
\eqref{eq:griesmer}, using
$\ceil{\ceil{d/q}/q^i}=\ceil{d/q^{i+1}}$.

For $s\in [N-h]$, set
$$
 L_q(N,h,s)=\min\Bigl(\{h+s\}\cup
 \{(s+1)\ell:\vol_q(N,\ell)>q^{h+s-1},\ \ell\ge0\}\Bigr).
$$
Define $B_1(q,N,h)=A_q(N,h)$ and, for $t\ge2$,
\begin{equation}\label{eq:recursion}
\begin{aligned}
 B_t(q,N,h)=\min\biggl\{L_q(N,h,t),\quad
 \min_{1\le s<t}\ \max_{s\le j\le B_s(q,N,h)}
 \bigl[j+B_{t-s}(q,N-j,h-j+s)\bigr]\biggr\}.
\end{aligned}
\end{equation}
The maximum is over integers. Every recursive order is smaller than $t$, so this is a finite definition. By construction, $B_s(q,N,h)\le h+s$ (the generalized Singleton bound), hence all residual redundancies in the formula are nonnegative. The residual dimension is always $N-h-s\ge t-s$.

\begin{theorem}\label{thm:recursive}
Every $[N,N-h]_q$ code satisfies $d_t\le B_t(q,N,h)$ for $t\in[N-h]$.
\end{theorem}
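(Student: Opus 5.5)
The proof is by strong induction on $t$, for all $q$, $N$, $h$ simultaneously. We unpack the recursion~\eqref{eq:recursion} and show that each term in the outer minimum is a valid upper bound for $d_t$ of an arbitrary $[N,N-h]_q$ code $\D$.

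\emph{Base case $t=1$.} We show $d(\D)\le A_q(N,h)$. If $d:=d(\D)\le h+1$ (Singleton), then $d$ satisfies the three necessary conditions:
\begin{itemize}
\item \eqref{eq:hamming} is the sphere-packing bound for the coset space of size $q^h$;
\item \eqref{eq:even} is sphere packing applied to the code punctured on one coordinate, whose redundancy drops to $h-1$ and whose distance is at least $d-1$. For even $d$ its packing radius is $(d-2)/2$. When $d\ge2$ the puncturing preserves dimension, so the redundancy count is correct.
\item \eqref{eq:griesmer} is the Griesmer bound justified just before the statement.
\end{itemize}
Each condition is monotone in $d$: larger $d$ only makes it harder to satisfy. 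Hence the true distance is at most the largest admissible $d$, which is $A_q(N,h)$.

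\emph{The term $L_q(N,h,t)$.} The Singleton value $h+t$ bounds $d_t$ by the generalized Singleton bound. Each value $(t+1)\ell$ with $\vol_q(N,\ell)>q^{h+t-1}$ bounds $d_t$ by Lemma~\ref{lem:hamming-ball}. So $d_t(\D)\le L_q(N,h,t)$.

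\emph{The inner min–max term, for fixed $1\le s<t$.} Let $j:=d_s(\D)$. By the induction hypothesis, $j\le B_s(q,N,h)$. Also $j\ge s$, since an $s$-dimensional code needs support of size at least $s$. So $j$ lies in the range of the inner maximum. Let $F$ be a support attaining $d_s(\D)$. By Lemma~\ref{lem:puncture}, puncturing $F$ yields a code $\D'$ of length $N-j$, dimension $N-h-s$ and redundancy $h-j+s$, with
\[
d_t(\D)\le j+d_{t-s}(\D').
\]
We now need the induction hypothesis for $\D'$ at order $t-s<t$, which requires checking its range conditions:
\begin{itemize}
\item $t-s\le N-h-s$ holds since $t\le N-h$;
\item $h-j+s\ge 0$ holds since $j\le d_s\le h+s$;
\item $N-j>h-j+s$ holds since $N-h>s$.
\end{itemize}
Therefore $d_{t-s}(\D')\le B_{t-s}(q,N-j,h-j+s)$, which gives
\[
d_t(\D)\le j+B_{t-s}(q,N-j,h-j+s)\le \max_{s\le j\le B_s(q,N,h)}\bigl[j+B_{t-s}(q,N-j,h-j+s)\bigr].
\]
Taking the minimum over $s$ together with the $L_q$ term proves $d_t(\D)\le B_t(q,N,h)$.

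The main subtlety is making the induction well founded and its statement uniform. We phrase it as: for every $t'<t$ and all admissible $(q,N',h')$ with $N'>h'\ge0$ and $t'\le N'-h'$, every code satisfies $d_{t'}\le B_{t'}$. The range checks above show that every residual triple $(q,N-j,h-j+s)$ is admissible, so the hypothesis applies. A minor point is that the maximum over $j$ is needed precisely because $j$ is unknown; the bound only uses that the true $j$ lies in $[s,B_s]$.
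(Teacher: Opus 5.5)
Your proof is correct and follows the paper's argument: strong induction on $t$, with three ingredients. The base case $t=1$ uses the necessary conditions \eqref{eq:hamming}--\eqref{eq:griesmer}; the term $L_q$ comes from the generalized Singleton bound and Lemma~\ref{lem:hamming-ball}; and the recursive term comes from puncturing a minimum-support $s$-dimensional subcode via Lemma~\ref{lem:puncture}, with the maximum over $s\le j\le B_s(q,N,h)$ absorbing the unknown value of $d_s$. Your explicit range checks on the residual triple only spell out what the paper leaves implicit. One remark in the base case is unnecessary and slightly inaccurate: you do not need monotonicity, because the true distance is itself an admissible value in $[1,h+1]$ and hence at most the largest one. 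Moreover, \eqref{eq:even} is not monotone on its own, since it is vacuous for odd $d$.
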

\begin{proof}
The assertion for $t=1$ follows from \eqref{eq:hamming}--\eqref{eq:griesmer}. Generalized Singleton and Lemma~\ref{lem:hamming-ball} give the bound $L_q$.

Fix $1\le s<t$. By induction, the actual value $j=d_s$ lies between $s$ and $B_s(q,N,h)$. Puncturing a minimum-support subcode and induction at order $t-s$ give
$$
 d_t\le j+B_{t-s}(q,N-j,h-j+s).
$$
Maximizing over every possible $j$ makes the bound independent of its unknown actual value. It holds for every $s$, so taking the minimum gives \eqref{eq:recursion}.
\end{proof}

\subsection{Step counts for successive puncturing}\label{app:iteration-counts}
We first give the fixed-radius form of Theorem~\ref{thm:successive} used by
the programs. We then bound the number of steps in the three-phase argument
of Proposition~\ref{prop:large-diagonal}.

For integers $\ell,A\ge1$ and $L\ge0$, write
$$
 G_\ell(L,A)=A\sum_{j=0}^{L-1}\ell^j+
 \ell\sum_{i=0}^{L-1}i\, \ell^{L-1-i},
$$
so that $G_\ell(0,A)=0$. The identity $G_\ell(L,A)=\ell G_\ell(L-1,A)+A+\ell(L-1)$ follows directly from
this expression.

\begin{lemma}\label{lem:app-fixed-radius}
Let $\C$ be an $[n,n-\rho]_q$ code, let $t\in[n-\rho]$, and let
$\ell,L,A\ge1$. Set $B=\ell(t+L)$. If $G_\ell(L,A)\ge t$, $n>B$, and
$\vol_q(n-B,\ell)>q^{\rho+A-1}$, then $d_t(\C)\le B$.
\end{lemma}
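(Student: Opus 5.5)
This is a direct generalization of Corollary~\ref{cor:radius-two}, so I would derive it from Theorem~\ref{thm:successive} using every radius equal to $\ell$ and a greedy choice of nullity increments. The budget is $B=\ell(t+L)$. The plan is to use at most $L$ steps, each of radius $\ell$. Then the total support is $W_v=\sum_i\ell(s_i+1)=\ell(t+v)\le\ell(t+L)=B$ whenever $v\le L$ and $S_v=t$.

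\textbf{Choosing the steps.} I set $S_0=W_0=0$ and, while $S_{i-1}<t$, take
$$
s_i:=\min\{t-S_{i-1},\,A+W_{i-1}-S_{i-1}\}.
$$
Since $W_{i-1}=\ell(S_{i-1}+i-1)$, the second term equals $A+(\ell-1)S_{i-1}+\ell(i-1)$, which is positive. As long as the minimum is attained by the second term, we get $S_i=\ell S_{i-1}+A+\ell(i-1)$. Comparing with the recurrence $G_\ell(i,A)=\ell G_\ell(i-1,A)+A+\ell(i-1)$ and $G_\ell(0,A)=0$, induction gives $S_i=G_\ell(i,A)$ until the cap $t-S_{i-1}$ becomes active. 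Because $G_\ell(L,A)\ge t$, the process reaches $S_v=t$ for some $v\le L$. All $s_i$ are positive integers, so the hypotheses $S_v=t$ and $W_v\le B$ of Theorem~\ref{thm:successive} hold.

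\textbf{Volume conditions.} If $\rho+t\le B$, case~(1) of Theorem~\ref{thm:successive} applies (generalized Singleton), and $n>B$ is assumed. Otherwise I verify \eqref{eq:successive-uniform}. By the choice of $s_i$ we have $s_i-1-W_{i-1}+S_{i-1}\le A-1$, so the required exponent satisfies
$$
\rho-W_{i-1}+S_{i-1}+s_i-1\le\rho+A-1.
$$
The hypothesis $\vol_q(n-B,\ell)>q^{\rho+A-1}$ therefore gives \eqref{eq:successive-uniform} at every step, and Theorem~\ref{thm:successive} yields $d_t(\C)\le B$.

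\textbf{Where the care is needed.} The main point is the bookkeeping identity $S_i=G_\ell(i,A)$, and in particular that the capped final step keeps $v\le L$. I will argue this by monotonicity. If the cap is never hit before step $L$, then $S_L\ge G_\ell(L,A)\ge t$. If the cap is hit at some step, the process stops there, so $S_v=t$ with $v\le L$. The degenerate case $t\le A$ is fine too, since then a single step already gives $S_1=t$.
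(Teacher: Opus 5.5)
Your proposal is correct and follows the paper's proof essentially verbatim. It uses the same greedy increments $s_i=\min\{t-S_{i-1},\,A+(\ell-1)S_{i-1}+\ell(i-1)\}$ (your $A+W_{i-1}-S_{i-1}$ is the same quantity, since $W_{i-1}=\ell(S_{i-1}+i-1)$), the same identification $S_i=G_\ell(i,A)$ until the cap binds, and the same bound $\rho-W_{i-1}+S_{i-1}+s_i-1\le\rho+A-1$ fed into \eqref{eq:successive-uniform} of Theorem~\ref{thm:successive}. Your explicit check that the capped final step keeps $v\le L$ is slightly more careful than the paper's one-line remark.
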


\begin{proof}
If $\rho+t\le B$, the generalized Singleton bound gives the claim.
Otherwise, start with $S_0=0$ and, until the accumulated nullity reaches
$t$, take
$$
s_i=\min\{t-S_{i-1},\ A+(\ell-1)S_{i-1}+\ell(i-1)\}
$$
and take the radius $\ell$ at every step. Before the last step, $S_i=\ell S_{i-1}+A+\ell(i-1)=G_\ell(i,A)$.
Thus, $G_\ell(L,A)\ge t$ ensures that at most $L$ steps are needed.
With $W_i=\ell(S_i+i)$, each step satisfies $W_i\le B$, and
$$
\rho-W_{i-1}+S_{i-1}+s_i-1\le\rho+A-1.
$$
The assumed ball inequality therefore implies
\eqref{eq:successive-uniform} at every step. Apply
Theorem~\ref{thm:successive}.
\end{proof}
In Proposition~\ref{prop:large-diagonal}, set $a:=r-t$ and
$c:=\max\{0,\lceil{(2a-t)/3}\rceil\}$. The integers $L_4,L_3,L_2$ are the
least nonnegative integers with
$$
 G_4(L_4,5)\ge c,\qquad G_3(L_3,4)\ge a-c,
 \qquad\text{and}\qquad G_2(L_2,3)\ge t-a.
$$
These integers bound the numbers of steps used in
Proposition~\ref{prop:large-diagonal}. Indeed, in each phase of the
rule~\eqref{eq:phase-rule}, the first admissible step size is at least $5$,
$4$ or $3$, respectively, and every step of radius $\ell$ increases the
admissible size by $(\ell-1)s+\ell$. Hence, as in the proof of
Lemma~\ref{lem:app-fixed-radius}, after $i$ steps of a phase the nullity
gained in that phase is at least $\min\{d,G_\ell(i,A_\ell)\}$ with
$(A_4,A_3,A_2)=(5,4,3)$, where $d$ is the nullity required in that phase.
Indeed, if $g_i$ denotes the nullity gained after $i$ uncapped steps and
$\sigma_1\ge A_\ell$ the first admissible size, then the $i$-th step adds
$\sigma_1+(\ell-1)g_{i-1}+\ell(i-1)$, so $g_i=\ell g_{i-1}+\sigma_1+\ell(i-1)\ge G_\ell(i,A_\ell)$
by induction.
The desired inequality is
\begin{equation}\label{eq:step-inequality}
 c+4L_4+3L_3+2L_2\le a+2.
\end{equation}
For $32\le t<64$ there are finitely many integers $t/3<a\le t$.
For each of them, computing $c$, $L_4$, $L_3$ and $L_2$ from the recurrence for $G_\ell$ gives nonnegative margin in
\eqref{eq:step-inequality}. The complete per-order minima are recorded in
\path{outputs/independent_audit.json} and recomputed by
\path{code/independent_audit.py}. Grouping consecutive orders, the minima
are shown in Table~\ref{tab:step-slack}.
\begin{table}[ht]
\centering
\begin{tabular}{rrrr}
\hline
Orders & Minimum margin  & Orders & Minimum margin \\\hline
32--35 & 1 & 48--51 & 5\\
36--39 & 1 & 52--55 & 5\\
40--43 & 1 & 56--59 & 5\\
44--47 & 3 & 60--63 & 6\\
\hline
\end{tabular}
\caption{Exact minimum of $a+2-c-4L_4-3L_3-2L_2$ over the indicated
orders and all integers $t/3<a\le t$.}\label{tab:step-slack}
\end{table}

For $t\ge64$ the following analytic bounds suffice. Since
$G_\ell(L,A)\ge A(\ell^L-1)/(\ell-1)$, if $a\le t/2$ then $c=L_4=0$ and
$$
 3L_3+2L_2\le
 F(t):=3\left\lceil\log_3\left(1+\frac t4\right)\right\rceil+
 2\left\lceil\log_2\left(1+\frac t3\right)\right\rceil
 \le\frac t3+2\le a+2.
$$
If $a>t/2$, then $c\le(2a-t+2)/3$ and
$$
\begin{aligned}
 4L_4+3L_3+2L_2\le J(t):={}&
 4\left\lceil\log_4\frac{t+7}{5}\right\rceil+
 3\left\lceil\log_3\frac{t+2}{2}\right\rceil+2\left\lceil\log_2\frac{t+6}{6}\right\rceil
 \le\frac t2+\frac43 < a-c+2.
\end{aligned}
$$
The two elementary inequalities for $F,J$ are checked at the integers
$64\le t<128$ by comparing powers, without logarithmic approximation.
They propagate to all larger integers: writing $T=2t$ or $2t+1$, each ceiling
increases by at most one. Hence, $F(T)\le F(t)+5$ and
$J(T)\le J(t)+9$, while the corresponding right sides increase by at least
$t/3$ and $t/2$. This proves \eqref{eq:step-inequality} for all $t\ge64$.

\subsection{The finite parameter ranges}\label{app:ranges}

We describe the ranges checked by the programs. Throughout this appendix,
$r=R_t(\C)$, and $D(t,r)$ and $M(q,\rho,t,r)$ are the bounds in
Theorem~\ref{thm:tail} and Eq.~\eqref{eq:M}. An input record specifies
necessary parameters for a counterexample; it does not assert the
existence of a code with those parameters.

The first program checks the tuples in Eq.~\eqref{eq:finite-parameters}
with $\rho\le50$. The second checks Eq.~\eqref{eq:gap-range}, where
$1\le r-t\le5$. These ranges contain $59{,}086$ and
$2{,}396{,}360$ tuples, respectively. The third program checks the
$619{,}126$ tuples in Eq.~\eqref{eq:boundary-range}. This is the
range used in the proof of Theorem~\ref{thm:extensions}; it has
$3\le t\le31$ and $t+1\le r\le\max\{2t,t+15\}$.

Finally, the fourth program checks the remaining range
Eq.~\eqref{eq:finite}. It excludes 454 pairs $(t,r)$ by symbolic
puncturing sequences, without enumerating $q$ and $\rho$ for those pairs.
Of these, 54 have $t=4$ and 400 have $5\le t\le31$.
The other cases have $t=3$ or $t=4$ and comprise 65,360 numerical tuples.
The upper bounds on $r$ in Eq.~\eqref{eq:finite}, together with
$\rho_0(q,t,r)\le \rho<D(t,r)$, make this last enumeration finite.

The programs derive the expected tuples from these displayed ranges.
They reject a missing or repeated tuple rather than relying on the
number of records supplied in an input file. The ranges may overlap:
this does not affect the argument, since each program verifies all the
tuples required for its own result.

\subsection{Lower bounds on the test length}\label{app:test-length}

For a fixed tuple $(q,\rho,t,r)$, the programs choose an integer
$N\ge\rho+t$ and prove $n\ge N$ for any counterexample with these
parameters. If $N>M(q,\rho,t,r)$, Eq.~\eqref{eq:length-upper} excludes
the tuple. Otherwise, it suffices to prove $d_t\le2r+2$ for every
$[N,N-\rho]_q$ code. Lemma~\ref{lem:shorten} then gives the same
bound for every original length $n\ge N$. Notice that the shortened
code need not have generalized covering radius $r$.

The elementary lower bounds in Eq.~\eqref{eq:finite-parameters} give
$$
n\ge\rho+5t-1,
\qquad
n\ge\left\lfloor\frac{5\rho}{2}\right\rfloor+1.
$$
The condition $N\ge\rho+t$ is also imposed. When these bounds do not
suffice, a proposed test length is checked by the strict inequality
$$
\vol_{q^t}(N-1,r)<q^{t\rho}.
$$
Indeed, Eq.~\eqref{eq:cover} gives
$\vol_{q^t}(n,r)\ge q^{t\rho}$. Since the volume is increasing in its
length, the displayed strict inequality implies $n\ge N$.

Two faster sufficient tests are used in the fixed-gap and boundary
calculations. Set $m:=2r-t+2$ and let $a:=r-t$. The inequality
Eq.~\eqref{eq:coverrough}, together with $\mathrm e<3$, gives
$$
n>\frac r3 q^{t(\rho-r)/r}
\ge\frac r3 q^{\lfloor t(\rho-r)/r\rfloor}.
$$
Hence, the integer
$$
N_0:=\max\left\{
\rho+5t-1,
\left\lfloor\frac{5\rho}{2}\right\rfloor+1,
\left\lfloor\frac r3q^{\lfloor t(\rho-r)/r\rfloor}\right\rfloor+1
\right\}
$$
is a valid lower bound on $n$. The program compares $N_0$ with the
rank-$x$ incidence bound, where $x=m-b$ and $b$ is the least positive
integer with $q^b\ge t$, and with the weighted-line bound. The minimum
of these upper bounds is at least $M(q,\rho,t,r)$, and each bound
separately applies to a counterexample. Thus, $N_0$ exceeding that
minimum is a contradiction.

If this direct comparison is insufficient, a radius-two calculation
gives the result in many further cases. Take $L=a+1$, $B=2(t+L)=2r+2$, and
$$
A:=\max\left\{1,
\left\lceil\frac{t-2(2^L-L-1)}{2^L-1}\right\rceil
\right\}.
$$
Since $G_2(L,A)=A(2^L-1)+2(2^L-L-1)\ge t$, the conditions
$N_0>B$ and
$\vol_q(N_0-B,2)>q^{\rho+A-1}$ imply, by
Lemma~\ref{lem:app-fixed-radius} with $\ell=2$, that every $[N_0,N_0-\rho]_q$ code
satisfies $d_t\le B$, and Lemma~\ref{lem:shorten} transfers this to $\C$.
All powers and comparisons here are evaluated as integers.

For the tuples that survive both direct tests, the fixed-gap and
boundary files can supply a sharper length $N$. A sufficient check is
$$
(N-1)^r<r!q^{t(\rho-r)}.
$$
To see this, suppose that $n\le N-1$. Eq.~\eqref{eq:coverrough}
gives $\binom nr\ge q^{t(\rho-r)}$, whereas
$\binom nr\le n^r/r!\le(N-1)^r/r!$, a contradiction.
Some boundary records use the exact covering-volume comparison instead.
The programs distinguish the two cases and check the indicated strict
inequality. No floating-point approximation of a root is used.

\subsection{Checking the generalized-weight bounds}\label{app:records}

A generalized-weight record has parameters $(q,N,h,j;b)$ and claims
that every $[N,N-h]_q$ code satisfies $d_j\le b$. The program first
checks $h\ge0$, $1\le j\le N-h$, and $j\le b\le h+j$. It then
checks one of the following justifications.
\begin{enumerate}
\item The generalized Singleton bound gives $b=h+j$.
\item If $j=1$, every larger proposed minimum distance
$b<d\le h+1$ violates at least one of the necessary conditions
\eqref{eq:hamming}--\eqref{eq:griesmer}.
\item An integer $\ell\ge0$ satisfies
$\vol_q(N,\ell)>q^{h+j-1}$ and $(j+1)\ell\le b$.
Lemma~\ref{lem:hamming-ball} then gives the claim.
\item Integers $\ell,L,A\ge1$ satisfy
$G_\ell(L,A)\ge j$, $N>\ell(j+L)$, and
$\vol_q(N-\ell(j+L),\ell)>q^{h+A-1}$, with $\ell(j+L)\le b$.
Lemma~\ref{lem:app-fixed-radius} applies.
\item For an integer $1\le s<j$, a preceding record gives
$d_s\le b_s$. For every possible integer $s\le z\le b_s$, a
preceding record gives a bound $d_{j-s}\le b_z$ at
$(q,N-z,h-z+s)$, and $z+b_z\le b$. If the actual value of
$d_s$ is $z$, Lemma~\ref{lem:puncture} gives the asserted bound.
The program checks a child for every such $z$.
\end{enumerate}
The records are read in order, so a child bound must already have been
checked when it is used. The program rejects duplicate records and
missing roots. A second program checks that every retained record is
reachable from a root. The recursion at the beginning of the appendix
explains why these rules give bounds that are uniform over all codes
with the stated parameters; for rule~4, the justification is given
directly by Lemma~\ref{lem:app-fixed-radius}.

For example, the tuple $(q,\rho,t,r)=(2,10,5,6)$ has a verified lower
bound $n\ge34$ and length upper bound $M(2,10,5,6)=21$; it is excluded
without a weight record. For $(2,16,3,6)$, the covering bound \eqref{eq:cover} gives $n\ge112$,
while $M(2,16,3,6)=137$, so the length bounds are inconclusive. The
record uses the test length $N=96$, justified by
$\vol_8(95,6)<8^{16}$. Since $\vol_2(96,3)=147{,}537>2^{17}$,
Lemma~\ref{lem:hamming-ball} gives $d_2\le9$ for every $[96,80]_2$
code. For each possible value $z=2,\ldots,9$ of $d_2$, puncturing a
minimum-support subcode leaves a $[96-z,78]_2$ code, and the ordinary
bounds \eqref{eq:hamming}--\eqref{eq:griesmer} give $d_1\le6,6,6,6,5,4,4,4$,
respectively. In every case $z+d_1\le13$, so Lemma~\ref{lem:puncture}
gives $d_3\le13\le2r+2$ at length $96$, and Lemma~\ref{lem:shorten}
covers every larger length. The file \path{example.sage} displays
every step of this argument.

\subsection{Checking the puncturing steps}\label{app:steps}

For a numerical record, write $(\ell_i,s_i)$ for its successive Hamming
radii and increases of nullity, and set $B:=2r+2$. Starting with
$S_0=W_0=0$, the program checks, for each step,
$$
1\le s_i\le t-S_{i-1},
\qquad
\vol_q(N-W_{i-1},\ell_i)>
q^{\rho-W_{i-1}+S_{i-1}+s_i-1},
$$
together with the required residual dimension. It then sets
$$
S_i=S_{i-1}+s_i,
\qquad
W_i=W_{i-1}+\ell_i(s_i+1).
$$
It requires $W_i\le B$ at every step, $S_v=t$ at the end, and
$W_v-t<\rho$ when padding to the prescribed support size is used.
These are the hypotheses of Theorem~\ref{thm:successive} at the
fixed test length $N$. The strict inequalities imply
$d_t\le B$ for every $[N,N-\rho]_q$ code, and
Lemma~\ref{lem:shorten} transfers the bound to $n\ge N$.

The $1{,}022$ symbolic sequences in the high-order argument cover
$6\le a\le9$ and $32\le t<5a^2$, where $a=r-t$. The program
constructs the expected set of all such pairs $(t,a)$ and checks
Eq.~\eqref{eq:symbolic-chain} at every step. In particular, it
clears the denominator $5a$ in the last inequality and checks
$$
5a(s_i-1-W_{i-1}+S_{i-1})
\le5a\bigl(\ell_i(t+1)-2r\bigr)+2(\ell_it-r).
$$
The conditions $\ell_it\ge r$ and $12\ell_i\le r$ allow the use of
Lemma~\ref{lem:ball-exponent}. The final checks $S_v=t$ and
$W_v\le2r+2$ then give the conclusion by
Theorem~\ref{thm:successive}.

In the final range \eqref{eq:finite}, $454$ pairs $(t,r)$
are excluded independently of $q$ and $\rho$. For each step of a
recorded sequence, the program checks $\ell_it\ge r$ and, with
$K_2(W_{i-1},\ell_i)$ as in Eq.~\eqref{eq:Ksymbolic}, checks
$$
K_2(W_{i-1},\ell_i)>0,
\qquad
K_2(W_{i-1},\ell_i)^{\ell_i}\ge \ell_i!,
$$
and
$$
s_i-1-W_{i-1}+S_{i-1}\le \ell_i(t+1)-2r.
$$
Lemma~\ref{lem:K}, with $q_0=2$, supplies the strict ball inequality
needed for the step. The same final nullity and
support checks finish the argument. The records cover all $400$ remaining
pairs $(t,r)$ with $5\le t\le31$ and $54$ pairs with $t=4$. The
remaining pairs with $t=4$, and those with $t=3$, are checked
numerically as described next.

For a numerical tuple in \eqref{eq:finite}, the input gives
an integer $N\ge\max\{\rho+5t-1,\lfloor5\rho/2\rfloor+1\}$.
If $N$ exceeds this elementary lower bound, the program checks
$\vol_{q^t}(N-1,r)<q^{t\rho}$. It then checks either
$N>M(q,\rho,t,r)$ or every step of a puncturing sequence at $N$.
There are $1{,}216$ length contradictions and $64{,}144$
puncturing arguments, distributed as follows.
\begin{center}
\begin{tabular}{rrrr}
\hline
$t$ & Tuples & Length & Puncturing\\
\hline
$3$ & $40{,}145$ & $314$ & $39{,}831$\\
$4$ & $25{,}215$ & $902$ & $24{,}313$\\
\hline
\end{tabular}
\end{center}
The totals are checked against an independently enumerated parameter
set. A stored count cannot replace a missing tuple or an inequality
that fails.

The infinite-radius calculations for $4\le t\le31$ use the choices
in Table~\ref{tab:tails}. The program checks the prefactor
conditions in Eq.~\eqref{eq:cj} and the five expressions in
Eq.~\eqref{eq:fivemargins}. For each expression $f(r)=\alpha r+\beta$,
it checks $f(r_0(t))\ge0$ and
$f(r_0(t)+1)-f(r_0(t))=\alpha\ge0$. This establishes the five
inequalities for every integer $r\ge r_0(t)$, rather than at a
finite sample of radii. The order-three endpoint and propagation
comparisons in the proofs of Theorems~\ref{thm:ternary} and
\ref{thm:binarytail} are also checked with exact rational or integer
arithmetic.

\subsection{Programs and input files}\label{app:programs}

The programs and input records are available at
\url{https://github.com/gianiraalfarano/packing-covering-conjecture}.
They are run with SageMath~10 or later. From the repository directory, the
complete verification is run with
\begin{verbatim}
sage verify.sage
\end{verbatim}
This command unpacks the input records from \path{proof_data.zip},
runs the five checking programs described below, and ends with
\texttt{ALL COMPUTATIONAL CHECKS PASSED} if every check succeeds. With
SageMath~10.8, the complete run takes about two minutes on a standard
computer.

The programs \path{verify_range.py}, \path{verify_fixed_gaps.py},
\path{verify_extensions.py} and \path{verify_new.py} treat, respectively,
the ranges $\rho\le50$, $1\le r-t\le5$, Eq.~\eqref{eq:boundary-range},
and Eq.~\eqref{eq:finite} together with the large-radius bounds of
Section~\ref{sec:low}. Each program enumerates its parameter range
independently of the input records and checks, with exact integer or
rational arithmetic, every inequality described in
Appendices~\ref{app:test-length}--\ref{app:steps}. The fifth program,
\path{independent_audit.py}, was written separately and uses none of the
others. It rechecks the parameter totals, the puncturing sequences, the
symbolic and large-radius inequalities, and the step counts of
Appendix~\ref{app:iteration-counts}, and it verifies that every
generalized-weight record is used.

As an illustration, \path{example.sage} carries out the length
and generalized-weight calculations for $(q,\rho,t,r)=(2,16,3,6)$ discussed in
Appendix~\ref{app:records}. The file \path{docs/verification.md}
describes the format of the input records. No linear codes are
enumerated in any of these calculations.

\end{document}